\NewDocumentCommand\opti{smmmm>{\SplitList{;}}m} {
\begingroup%
\setlength{\belowdisplayskip}{-0.6\baselineskip}%
\IfBooleanTF{#1}{%
    \begin{alignat*}{2}
        #2& \underset{#4}{\text{#3}} & & #5 \\
          & \text{s.~t.~~}
        \ProcessList{#6}{ \insertopticonst }
          & &
    \end{alignat*}%
    }{%
    \begin{alignat}{2}
        #2& \underset{#4}{\text{#3}} & & #5 \\
        & \text{s.~t.~~}
        \ProcessList{#6}{ \insertopticonst }
        & & \nonumber
    \end{alignat}%
    }%
\endgroup%
}%
\newcommand\insertopticonst[1]{& & #1\\&}
\pgfplotsset{compat=newest}
\newcommand{\tikzmark}[1]{\tikz[overlay,remember picture] \node (#1) {};}
\tikzset{square arrow/.style={to path={-- ++(0,-.25) -| (\tikztotarget)}}}
\tikzset{deep square arrow/.style={to path={-- ++(0,-.5) -| (\tikztotarget)}}}
\newcommand{\startgrid}[2]{%
 \begin{tikzpicture}[darkstyle/.style={draw,fill=black!67,minimum size=28}, lightstyle/.style={draw,fill=black!0,minimum size=28}, nostyle/.style=   {fill,fill=white, line width=0.0mm,minimum size=28}, arrowstyle/.style={line width=0.5mm,-{Latex[length=3mm,width=5mm]}},noarrowstyle/.style={line width=0.5mm},  baseline={(0,0)}]
  \foreach \x in {1,...,#1}
    \foreach \y in {1,...,#2}
       \node [lightstyle] (\y\x) at (\x-1,-\y+1) { $\bullet$ }; 
}
\newcommand{\stopgrid}{%
\end{tikzpicture}
}
\newcommand{\etatwotwo}{%
    \startgrid{2}{2}
    
    \draw[arrowstyle] (0,-0)--(1,-1) ;
    
    \stopgrid
}
\newcommand{\etathreetwo}{%
    \startgrid{3}{3}
    
    \draw[arrowstyle] (0,-0)--(1,-1);
    \draw[arrowstyle] (0,-1)--(1,-2);
    \draw[arrowstyle] (1,-0)--(2,-1);
    
    \stopgrid
}
\newcommand{\etafourtwo}{%
    \startgrid{4}{4}
    
    \draw[arrowstyle] (0,-0)--(1,-1);
    \draw[arrowstyle] (0,-1)--(1,-2);
    \draw[arrowstyle] (0,-2)--(1,-3);
    \draw[arrowstyle] (1,-0)--(2,-1);
    \draw[arrowstyle] (2,-0)--(3,-1);
    
    \draw[arrowstyle] (2,-2)--(3,-3);
    
    \stopgrid
}
\newcommand{\etafivetwo}{%
    \startgrid{5}{5}
    
    \draw[arrowstyle] (0,-0)--(1,-1);
    \draw[arrowstyle] (0,-1)--(1,-2);
    \draw[arrowstyle] (0,-2)--(1,-3);
    \draw[arrowstyle] (0,-3)--(1,-4);
    \draw[arrowstyle] (1,-0)--(2,-1);
    \draw[arrowstyle] (2,-0)--(3,-1);
    \draw[arrowstyle] (3,-0)--(4,-1);
    
    \draw[arrowstyle] (2,-2)--(3,-3);
    \draw[arrowstyle] (2,-3)--(3,-4);
    \draw[arrowstyle] (3,-2)--(4,-3);
    
    \stopgrid
}
\newcommand{\etathreethree}{%
    \startgrid{3}{3}
    
    \draw[arrowstyle] (0,-0)--(1,-1)--(2,-2);
    
    \draw[noarrowstyle] (0,-1)--(1,-2);
    \draw[noarrowstyle,dotted] (1,-2)--(1.5,-2.5);
    \draw[arrowstyle,dotted] (1.5,0.5)--(2,-0);
    
    \draw[noarrowstyle] (1,-0)--(2,-1);
    \draw[noarrowstyle,dashed] (2,-1)--(2.5,-1.5);
    \draw[arrowstyle,dashed] (-0.5,-1.5)--(0,-2);
    
    \stopgrid
}
\newcommand{\etafivefive}{%
    \startgrid{5}{5}
    
    \draw[arrowstyle] (0,-0)--(1,-1)--(2,-2)--(3,-3)--(4,-4);
    
    \draw[noarrowstyle] (0,-1)--(1,-2)--(2,-3)--(3,-4);
    \draw[noarrowstyle,dotted] (3,-4)--(3.5,-4.5);
    \draw[arrowstyle,dotted] (3.5,0.5)--(4,-0);
    
    \draw[noarrowstyle] (1,-0)--(2,-1)--(3,-2)--(4,-3);
    \draw[noarrowstyle,dashed] (4,-3)--(4.5,-3.5);
    \draw[arrowstyle,dashed] (-0.5,-3.5)--(0,-4);

    \draw[noarrowstyle] (0,-2)--(1,-3)--(2,-4);
    \draw[noarrowstyle,densely dotted] (2,-4)--(2.5,-4.5);
    \draw[noarrowstyle,densely dotted] (2.5,0.5)--(3,-0);
    \draw[arrowstyle] (3,-0)--(4,-1);

    \draw[noarrowstyle] (2,-0)--(3,-1)--(4,-2);
    \draw[noarrowstyle,loosely dotted] (4,-2)--(4.5,-2.5);
    \draw[noarrowstyle,loosely dotted] (-0.5,-2.5)--(0,-3);
    \draw[arrowstyle] (0,-3)--(1,-4);
    
    \stopgrid
}
\newcounter{thmc}
\newcounter{thmcbak}
\declaretheorem[thmbox=S]{result}
\newtheorem{theorem}[thmc]{Theorem}
\newtheorem{corollary}[thmc]{Corollary}
\newtheorem{lemma}[thmc]{Lemma}
\newcommand{\pushright}[1]{\ifmeasuring@#1\else\omit\hfill$\displaystyle#1$\fi\ignorespaces}
\global\long\def\trace{\operatorname{Tr}}
\global\long\def\CZ{\operatorname{CZ}}
\global\long\def\NN{\boldsymbol{N}}
\global\long\def\ketu#1#2{\left\vert \substack{#1\\#2} \right\rangle}
\global\long\def\brau#1#2{\left\langle \substack{#1\\#2} \right\vert }
\global\long\def\ketbra#1{\ket{#1}\!\bra{#1}}
\global\long\def\ketbrau#1#2#3#4{\left\vert \substack{#1\\#2} \right\rangle\!\!\left\langle \substack{#3\\#4} \right\vert }
\global\long\def\ketbraa#1#2{\ket{#1}\!\bra{#2}}
\global\long\def\one{\mathds{1}}
\global\long\def\dout{d_\text{out}}
\global\long\def\din{d_\text{in}}
\global\long\def\GHZ{\text{GHZ}}
\global\long\def\LOSR{\text{LOSR}}
\newcommand{\EE}{\mathcal{E}}
\begin{document}

\title{
No quantum advantage without classical communication: \\
fundamental limitations of quantum networks}

\author{Justus Neumann}

\author{Tulja Varun Kondra}

\affiliation{Institut für Theoretische Physik~III, Heinrich-Heine-Universität Düsseldorf, Universitätsstr.~1, 40225 Düsseldorf, Germany}

\author{Kiara Hansenne}
\affiliation{Naturwissenschaftlich-Technische Fakultät, Universität Siegen, Walter-Flex-Str.~3, 57068 Siegen, Germany}
\affiliation{Université Paris-Saclay, CEA, CNRS, Institut de Physique Théorique, 91191 Gif-sur-Yvette, France}

\author{ Lisa~T. Weinbrenner}

\affiliation{Naturwissenschaftlich-Technische Fakultät, Universität Siegen, Walter-Flex-Str.~3, 57068 Siegen, Germany}

\author{Hermann Kampermann}

\affiliation{Institut für Theoretische Physik~III, Heinrich-Heine-Universität Düsseldorf, Universitätsstr.~1, 40225 Düsseldorf, Germany}

\author{Otfried Gühne}

\affiliation{Naturwissenschaftlich-Technische Fakultät, Universität Siegen, Walter-Flex-Str.~3, 57068 Siegen, Germany}

\author{Dagmar Bruß}

\author{Nikolai Wyderka}

\affiliation{Institut für Theoretische Physik~III, Heinrich-Heine-Universität Düsseldorf, Universitätsstr.~1, 40225 Düsseldorf, Germany}

\date{\today}

\begin{abstract}
Quantum networks connect systems at separate locations via quantum 
links, enabling a wide range of quantum information tasks between 
distant parties. Large-scale networks have the potential to enable 
global secure communication, distributed quantum computation, 
enhanced clock synchronization, and high-precision multiparameter 
metrology. For the optimal development of these technologies, however, 
it is essential to identify the necessary resources and sub-routines 
that will lead to the quantum advantage, but this is demanding even for the simplest protocols in quantum information processing.
Here we show that quantum networks relying on the long-distance 
distribution of bipartite entanglement, combined with local 
operations and shared randomness, cannot achieve a relevant quantum 
advantage. 
Specifically, we prove that these networks do not help in preparing 
resourceful quantum states such as Greenberger-Horne-Zeilinger states 
or cluster states, despite the {free} availability of {long-distance} 
entanglement. At an abstract level, our work points {towards} a 
fundamental difference between bipartite and multipartite entanglement. 
From a practical perspective, our results highlight the need for classical 
communication combined with quantum memories to fully harness the power of 
quantum networks.
\end{abstract}

  \maketitle

\section{Introduction}

\begin{figure*}[t]
    \centering
    \includegraphics[width=0.9\linewidth]{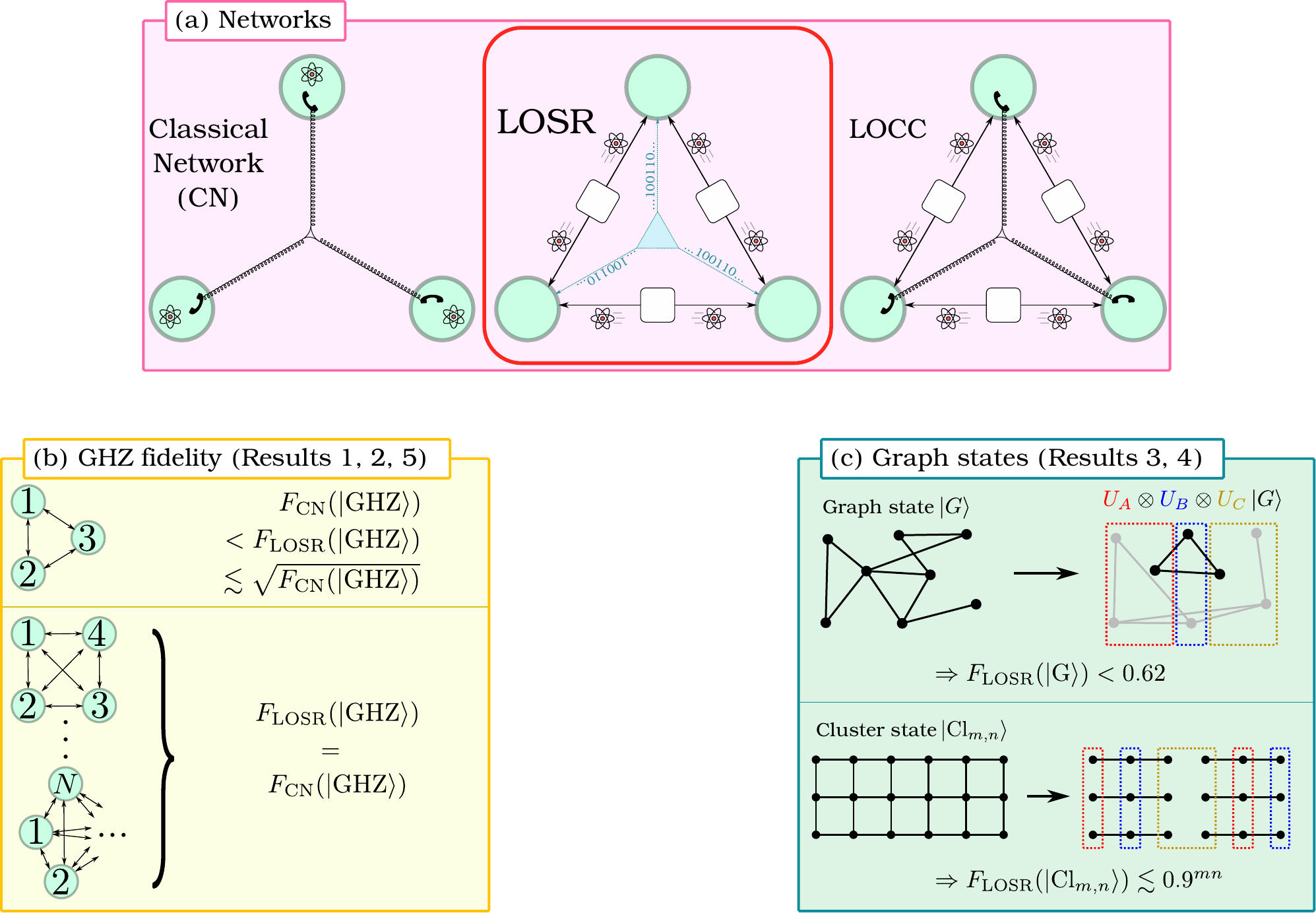}
    \caption{
    Summary of the setting and our results. (a) We consider quantum LOSR (local operations and shared randomness, center) networks, where distant parties are connected by bipartite entangled sources and have access to a source of shared randomness. We show that this scenario is, in terms of capabilities, much more similar to a completely classical network (CN, left) than to a quantum LOCC network (local operations and classical communications, right), where entangled sources are accompanied by classical communication, which allows for preparation of arbitrary quantum states. (b) In particular, we show that for the task of multipartite GHZ state preparation, a small advantage of LOSR networks over the classical network scenario is present only in tripartite networks, whereas it offers no advantage at all in four- and larger-partite networks. (c) More generally, we show that no connected graph state can be prepared with high fidelity in LOSR networks. Some relevant families of states such as cluster states, being  resource states for measurement-based quantum computation, even have exponentially decreasing fidelity bounds in terms of the number of parties.
    }

    \vspace{-0.3cm}
    \label{fig:overview}
\end{figure*}

Quantum networks promise to provide platforms for long-distance 
quantum information processing, enabling global secure communication 
and distributed quantum computation \cite{Kimble2008,Wehner2018, azuma2023quantum}. 
Consequently, many experimental efforts aim towards an experimental
realization of building blocks of network structures \cite{hermans2022qubit, liu2023multinode, hartung2024quantum}. A critical requirement for many  
applications is the ability of a network to generate and distribute 
entangled states among many distant parties. Particularly relevant 
among these states are quantum graph states, such as the 
Greenberger-Horne-Zeilinger (GHZ) states and cluster states.
Indeed, these states can then be used for applications like 
blind quantum computation \cite{Barz2012}, quantum metrology \cite{Sekatski2020,Proctor2018,Guo2020}, quantum key distribution protocols \cite{Murta2020} and clock synchronization \cite{Komar2014}.
Preparing and distributing such states in practice, however, presents significant challenges, as it requires precise control over 
high-dimensional quantum systems, a task that becomes increasingly 
complex with system size. 

In general, distributed information processing can be modeled in different 
ways, see also Fig.~\ref{fig:overview}(a). In the simplest scenario (Fig.~\ref{fig:overview}(a) left), one considers different parties 
controlling quantum mechanical systems in their labs, where the parties can communicate classically. This scenario, 
however, does not allow to generate entanglement between the parties,
so no quantum advantage can be expected. In the most advanced scenario (Fig.~\ref{fig:overview}(a) right), 
the parties not only communicate and control their systems, they also
share initial bipartite entanglement. In this scenario, protocols like teleportation
between the nodes can be carried out, allowing the preparation of arbitrary
global states as well as the execution of general information processing 
protocols. This, however, requires the experimental implementation of 
quantum memories, as well as performing quantum operations in one lab
in dependence of the measurement results communicated by the other 
parties.

In order to take into account the current experimental constraints, the 
community has studied intermediate models of networks, which are closer 
to the actual realization. One central model are networks based on local operations and shared randomness (LOSR), where parties are connected via 
bipartite quantum sources and can only rely on local operations along 
with pre-shared classical correlations 
\cite{buscemi2012all,Schmid2023understanding,deVicente2014,forster2009distilling,gutoski2009properties}. LOSR networks 
also start from distributed bipartite entanglement, 
but they do not rely on time-consuming classical communication that necessitates 
functioning quantum memories.  These characteristics make LOSR networks experimentally accessible, and as a result their properties and limitations have been actively studied in recent 
years \cite{navascues2020genuine,luo2021new,kraft2021quantum, hansenne2022symmetries, wang2024quantum, makuta2023no,yang2024quantumenhanced}.

But are quantum networks in the LOSR scenario really useful tools to 
unleash the full power of distributed quantum information processing? 
The main result of this paper is a negative answer to this question. 
More precisely, we consider the task of generating multipartite
highly entangled quantum states. We demonstrate that the maximal fidelity achievable by LOSR networks with a large class of resource states does 
not significantly exceed the values reachable by classical networks.  
In other words, the advantage provided by bipartite sources for preparing 
multipartite entangled states to be used by distant parties is minimal, and additional resources --- such as quantum memories, classical communication, 
or direct distribution of multipartite entangled states --- are essential. Our discoveries resemble an observation made in Ref. \cite{hansenne2022symmetries} where it was shown 
that for the special case of output states in the symmetric subspace only,  LOSR
networks cannot produce anything else than fully separable states.

Consequently, this limitation should not be seen as a merely negative result. 
In fact, no-go theorems have often served as starting points for major advancements, by forcing the research community to look for different 
routes. For example, the well-known constraints of linear optics in 
performing Bell state measurements motivated the development of 
alternative approaches, such as hyperentanglement
\cite{sheng2010complete,wei2007hyperentangled,walborn2003hyperentanglement,schuck2006complete} or the usage of ancilla photons combined with
photon-number resolving detectors \cite{ewert2014efficient,bayerbach2023bellstate}. Similarly, early recognized limitations in entanglement 
distillation with Gaussian states \cite{fiurasek2002gaussian,eisert2002distilling} ultimately led to the study of non-Gaussian quantum operations, culminating in the experimental implementation of entanglement distillation 
with non-gaussian operations \cite{takahashi2010entanglement}. Similarly, our results highlight the need to explore alternative network architectures, such as hybrid models incorporating classical communication together with quantum channels and memories,
to fully utilize the potential of quantum networks.

\section{Preparing multi-party quantum states}
If quantum states shall be prepared in a network, the 
$N$-qubit GHZ state
\begin{align}
\ket{\GHZ_N} = \frac{1}{\sqrt{2}}(\ket{0}^{\otimes N} + \ket{1}^{\otimes N})
\end{align}
is a desirable candidate, since it yields a quantum advantage in
metrology and conference key agreement \cite{giovannetti2011advances,Murta2020}. On the other hand, generating a two-dimensional cluster state allows us to implement schemes for measurement-based 
quantum computation \cite{briegel2009measurement}
or the establishment of a data bus \cite{freund2024flexible}.
Both families of states are part of the family of graph 
states \cite{hein2004multiparty}, which will therefore be the main 
interest of our work.  

None of the above-mentioned tasks can be successfully 
performed without the ability to generate these states 
with high fidelity. In particular, uncorrelated product 
states (that can arise in classical networks) do not provide 
any quantum advantage for these applications, making their 
use ineffective. Thus, the relevant question is: how well 
can graph states be generated in quantum networks? Do LOSR 
networks provide any advantage beyond classical networks?

To assess the suitability of the LOSR paradigm for large-scale quantum 
state preparation, we consider the fidelity as a popular and physically
relevant measure of the distinguishability of quantum states~\cite{nielsen2010quantum}.
First, we consider the simplest triangle scenario, see also Fig. \ref{fig:losr3} and the creation of tripartite
GHZ states. We present optimized protocols for GHZ generation, but also
derive rigorous upper bounds on the achievable fidelities. Subsequently, 
we show that the bounds derived in the triangular scenario can be lifted 
to fidelity constraints in arbitrarily large networks, thereby strictly 
limiting the amount of multipartite entanglement within such networks.
Interestingly, besides having an impact on quantum network design, this 
also may shed light on some discussions on notions of multipartite entanglement in the literature {(see e.g.~Ref.~\cite{navascues2020genuine} and Ref.~\cite{tavakoli2022bell} sections VI.~E--F)}.

\section{Results}
\subsection{Triangle networks}

We start with the simplest nontrivial quantum network of three 
parties, assuming the framework of local operations and shared 
randomness. In this scenario, three spatially separated participants, 
called Alice, Bob and Charlie ($A$, $B$ and $C$), are connected by 
bipartite source states shared between each pair. The parties apply 
local quantum operations on their subsystems, assisted by a shared 
classical random variable. More formally, states that can be prepared 
in a triangle LOSR network are of the form
\begin{align}
\label{eq:losrstate}
\rho=\sum_{\lambda}p(\lambda)\mathcal{E}_A^{\lambda}\otimes \mathcal{E}_B^{\lambda}\otimes \mathcal{E}_C^{\lambda}[\rho_{ab^\prime}\otimes\rho_{bc^\prime}\otimes \rho_{ca^\prime}],
\end{align}
where $\mathcal{E}_A^{\lambda}$,  $\mathcal{E}_B^{\lambda}$,  
$\mathcal{E}_C^{\lambda}$ are local (trace-preserving) operations, 
$\rho_{ab^\prime}$, $\rho_{bc^\prime}$, $\rho_{ca^\prime}$ are 
the source states and $p(\lambda)$ is a classical probability 
distribution (see Fig.~\ref{fig:losr3} and 
Refs.~\cite{navascues2020genuine, hansenne2022symmetries} for 
some technical details). This definition can be extended to an 
arbitrary number of parties by considering quantum networks where all participants are connected by bipartite source states. In the following, 
we call states that can be brought into the form of Eq.~\eqref{eq:losrstate} 
$\LOSR$ states. Note that we do not 
limit the dimension of the bipartite source states $\rho_{ij}$.

\begin{figure}[t]
    \centering
    \includegraphics[width=0.79\linewidth]{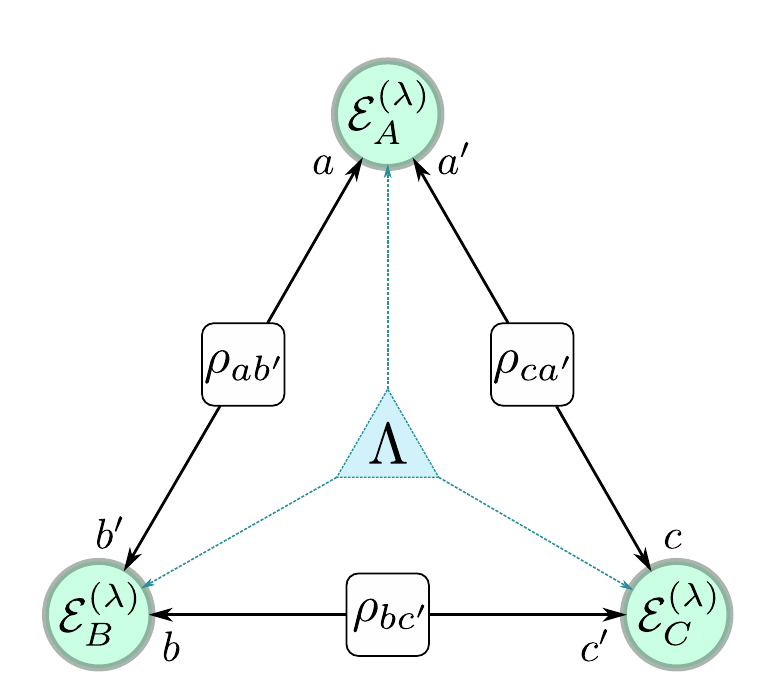}
    \caption{The LOSR (local operations and shared randomness) scenario 
    of three-partite networks. The three parties $A$, $B$ and $C$ are 
    receiving particles from bipartite sources $\rho_{ij}$ and are 
    allowed to perform local operations $\EE_A^{(\lambda)}, \EE_B^{(\lambda)}, \EE_C^{(\lambda)}$, based on a shared random variable $\lambda$, 
    to produce a three-partite output state.}
    \label{fig:losr3}
\end{figure}

To analyze the amount of multipartite entanglement one can generate 
we consider the maximal fidelity of an LOSR network state $\rho$ with 
strongly entangled multipartite states, such as the multipartite qudit 
GHZ state $\ket{\GHZ_{N,d}}$, given by
\begin{align}
F_{\LOSR}(\ket{\GHZ_{N,d}}) = 
\max_{\rho \in \LOSR} 
\braket{\GHZ_{N,d}|\rho|\GHZ_{N,d}},
\end{align}
where 
$
\ket{\GHZ_{N,d}} =  
(\sum_{i=0}^{d-1} \ket{i}^{\otimes N})/\sqrt{d},
$
is the $d$-level GHZ state of $N$ parties (if $d=2$, we omit the dimension and write $\ket{\GHZ_N}$ instead). 
Of course, this is a specific target state,  
but we will later show that much larger classes
of target states can be reduced to the GHZ case.
In the case of the GHZ state, a fidelity value of $F={1}/{d}$ 
can directly be achieved by preparing the product state 
$\ket{0}^{\otimes N}$. In contrast, any value above 
$1/d$ already indicates the presence of genuine 
multipartite entanglement~\cite{bourennane2004experimental}, 
and a value of $F_{\LOSR}(\ket{\GHZ_{N,d}}) = 1$  would indicate that the GHZ 
state can be prepared perfectly. 

Let us start our discussion with the three-qubit case.
For this, using numerical semidefinite programming techniques together with so-called inflation arguments, Ref.~\cite{navascues2020genuine} showed that 
$F_{\LOSR}(\ket{\GHZ_{3}}) \leq (1+\sqrt{3})/4 < 0.684$. 
This raises the question of whether LOSR quantum 
networks can reach this bound. In fact, the authors 
of Ref.~\cite{navascues2020genuine} also  report that 
they numerically obtained a LOSR state with a 
fidelity of approximately $0.517$. On the one hand, this is 
remarkable, as it shows that genuine multipartite 
entanglement can be prepared in the LOSR scenario, 
on the other hand this begs for improvement and 
the construction of explicit network protocols.

So, we designed constructive protocols for the preparation
of three-qubit GHZ states in LOSR networks, see the Appendix~\ref{app:constructions} for details. For the 
case of two-qubit source states $\rho_{ij}$ we find the 
optimal protocol, giving a GHZ fidelity of $F= [5+4\cos(2\pi/7)]/[12+4\cos(2\pi/7)] \approx 0.517$, 
confirming the reported numerical value. Surprisingly, however, 
the fidelity can be systematically increased if higher-dimensional 
resource states are used. Studying our optimized protocols for up 
to ten-dimensional source states, the achievable fidelities 
converge to $F \approx 0.548$. The fact that the reached fidelities increase monotonically with the source dimension suggests that the fidelity can act as a dimensionality witness for the sources.

On the other hand, we developed a novel approach for
obtaining upper bounds on the achievable fidelities.
This is based on an interpretation of the fidelity as
an optimization over generalized measurements  \cite{bengtsson2017geometry}, 
combined with the Finner inequality \cite{renou1} (see also Appendices~\ref{app:analyticalbound} and \ref{app:fidbound_ghzn}) 
which for the qubit case leads to $F < 0.618.$ This allows us to summarize our first main result:

\begin{result}\label{res:GHZ32bound}
In LOSR networks of three nodes the maximal achievable fidelity 
with the three-qubit GHZ state is bounded by
\begin{align}
\label{eq:boundtripartite2}
        0.548 < F_{\LOSR}(\ket{\GHZ_3}) < 0.618.
\end{align}\vspace{-1.5em}
\end{result}

Before discussing more general classes of quantum states and larger 
networks, let us formulate what our methods can teach about the 
generation of higher-dimensional GHZ states in triangle networks. 
In fact, as outlined in the Appendices, the methods can, with some
modifications, also be applied to these cases. In our optimized 
network protocols for generating the states, however, it is relevant
whether the dimension is odd or even. All our results can be summarized
as follows: 

\begin{result}\label{res:GHZ3dbound}
In LOSR networks of three nodes the maximal achievable fidelity with the three-qudit GHZ state for $d \geq 3$ is bounded by
\begin{align}
    \label{eq:boundtripartited}
        \frac{1}{d}\left(\frac{3}{2} - \frac{\lfloor{d}/{2}\rfloor}{d(d-1)} \right) \leq F_{\LOSR}(\ket{\GHZ_{3,d}}) \leq \frac{1}{\sqrt{d}},
    \end{align}
where $\lfloor \cdots \rfloor$ indicates the floor function.
\end{result}
Note that Result~\ref{res:GHZ3dbound} also applies to $d=2$, but we have stronger bounds in this case with Result~\ref{res:GHZ32bound}.
Since the fidelities of the constructed states scale like $3/(2d)$ and 
exceed the value of $1/d$, our constructive protocols show 
that the preparation of genuine multipartite entanglement is possible 
for all output dimensions. In contrast, the difference between the 
upper bound and the threshold $1/d$ is not large and vanishes 
in the asymptotic limit.

\subsection{Networks of arbitrary size}

In the following, we introduce a method to treat networks of arbitrary size. The basic idea is the following (see also Fig.~\ref{fig:ghz_extraction_cluster} (a)): consider a network 
of $N$ nodes with bipartite sources, which is able to prepare a quantum state $\rho_N$. We can group the $N$ nodes into three groups, $A$, $B$, and $C$, take local transformations $\EE_A$, $\EE_B$, and $\EE_C$ acting locally on these groups and then consider the state
\begin{equation}
\EE_A \otimes \EE_B \otimes \EE_C (\rho_N) = 
\rho_{\triangle} .
\end{equation}
Clearly, since $\rho_N$ can be generated in the network, 
the state $\rho_{\triangle}$ can be prepared in the 
triangle network formed by the three groups $A$, $B$, and $C$. Conversely, if $\rho_{\triangle}$
is not feasible, e.g., due to our results from the previous
section, any state $\rho_M$ which can be reduced to $\rho_{\triangle}$ via some maps $\EE_A$, $\EE_B$, and $\EE_C$, cannot be prepared in the full network of $N$ nodes with bipartite 
sources. 

To make this idea a powerful tool, we use three 
specifications. First, we focus on pure states and
consider unitary transformations as maps. More precisely, 
for a given $N$ particle state $\ket{\psi_N}$ we aim to find three groups 
and unitaries such that
\begin{align} 
\label{eq:GHZextraction}
U_A \otimes U_B \otimes U_C 
\ket{\psi_N} = 
\ket{\phi_\triangle}_{abc} \otimes \ket{\tau},
\end{align}
where $a \subseteq A$, $b \subseteq B$, and 
$c \subseteq C$ are subsets of the groups and 
$\ket{\tau}$ is a residual state on the remaining
nodes. By tracing out these nodes, the state 
$\ket{\phi_\triangle}_{abc}$ can be prepared 
in the triangle network, if $\ket{\psi_N}$ can be
prepared in the large network. Importantly, this 
ansatz gives a direct relation between reachable
fidelities in networks. If a state $\rho_N$
can be prepared in the large network, having a fidelity
$F$ with the state $\ket{\psi_N}$, then one can prepare
a state $\rho_\triangle$ in the triangle scenario, 
having at least the same fidelity $F$ with $\ket{\phi_\triangle}_{abc}$.

Second, we want to use our fidelity bounds on the
GHZ state in triangle networks to derive limitations for
general networks. Third, we can employ a specific 
property of GHZ states: multiple copies of a three-qubit GHZ 
state are unitarily equivalent to a single GHZ state in 
higher dimensions~\cite{kraft2018characterizing}
\begin{align}
\ket{\GHZ_3}^{\otimes k} \cong \ket{\GHZ_{3,2^k}}.
\end{align}
Consequently, if several copies of GHZ states can be prepared 
in the tripartitioned network, one can apply
the exponentially decaying fidelity bounds for high-dimensional GHZ states from Result 2. 

All that remains to be done to understand the general limitations of large quantum 
networks is to identify families of quantum states, which can be transformed into
three-partite GHZ states using the local transformation on three groups of nodes
as outlined above. Here, the family of quantum graph states comes into play. 
Graph states form a family of multi-qubit quantum states, comprising multipartite
GHZ states, as well as cluster states \cite{schlingemann2001quantum, raussendorf2003measurement, hein2004multiparty}. For defining graph states one
considers a graph, consisting of nodes which are linked with some edges. Then, 
one associates each node with a qubit, while edges correspond to entangling 
two-qubit gates carried out between two nodes, to arrive at a pure quantum state corresponding
to the graph (the formal definitions are given in Appendix~\ref{app:graphstates}).
Graph states are particularly relevant in the context of quantum networks: protocols such 
as measurement-based quantum computation \cite{briegel2009measurement},
blind quantum computation \cite{broadbent2009universal}, or entanglement routing \cite{freund2024flexible}
require the distribution of graph states to 
distant parties. Naturally, graph states have previously been studied in the context 
of LOSR networks \cite{hansenne2022symmetries, makuta2023no, wang2024quantum}, but the only known general fidelity bound 
states that arbitrary graph states cannot be 
prepared in networks using bipartite sources with fidelities exceeding $0.9$ \cite{wang2024quantum}.

Remarkably, for graph states the number of extractable GHZ states for a fixed tripartition has been studied and is known as an entanglement quantifier 
\cite{bravyi2006ghz, linden2002almost}. 
In these works, however, it remained unclear whether every graph state admits a tripartition that yields at least one GHZ state,
but in Appendix~\ref{app:graphstates} we demonstrate that this is indeed the case.  This directly leads to:
\begin{figure}[t]
    \centering
    \includegraphics[width=0.8\linewidth]{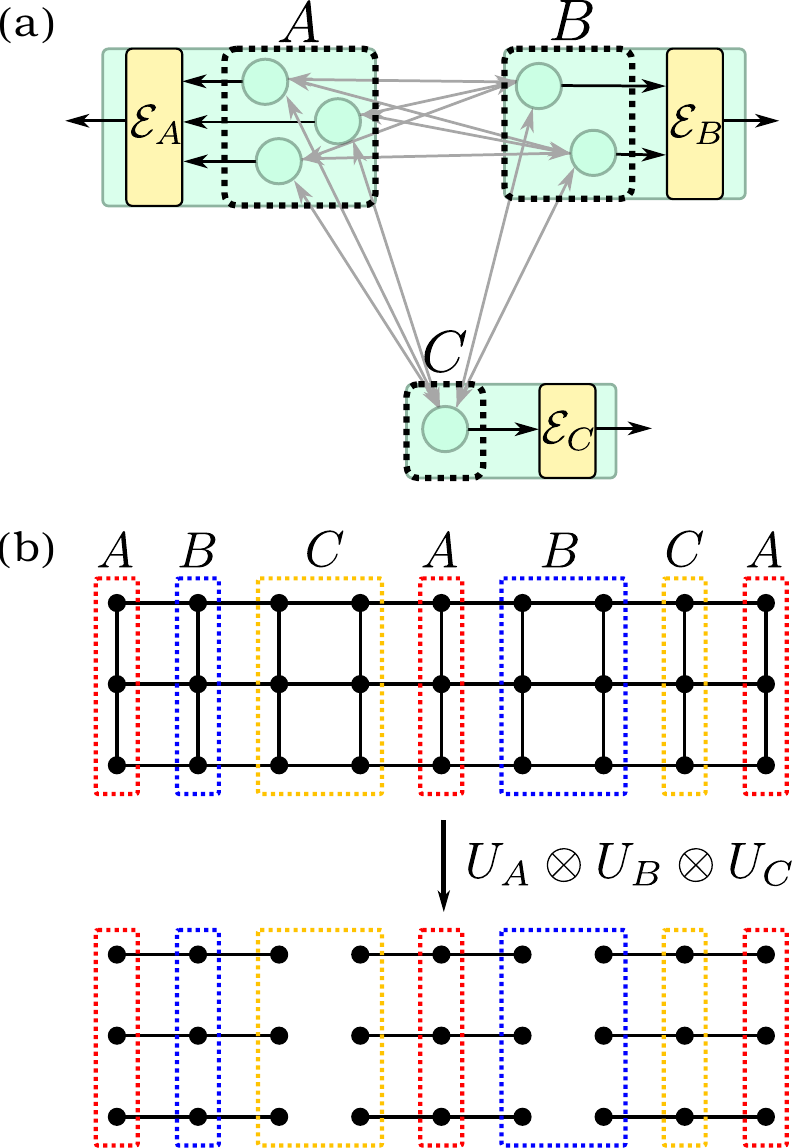}
    \caption{%
    (a) Visualization of the reduction of an $N$-partite network (here, $N=6$) to a triangle network by combining several nodes into one. If a state $\rho_N$ is generated by the $N$-partite network, then $\rho_\triangle = \EE_A\otimes \EE_B \otimes \EE_C(\rho_N)$ must be generable in a tripartite network.
    (b) Example of GHZ extraction from a cluster state. 
    The initial $27$-qubit cluster state $\ket{\text{Cl}_{3,9}}$ can be transformed into $\ket{\GHZ_3}^{\otimes 9}$ using a tripartition of the parties into three sets $A$ (indicated red), $B$ (blue) and $C$ (orange). Within the sets, the unitaries $U_A$, $U_B$ and $U_C$ are chosen such that they remove links, leaving decoupled graph states of three connected parties which are known to be equivalent to $\ket{\GHZ_3}$.
    }
    \label{fig:ghz_extraction_cluster}
\end{figure}
\setcounter{thmc}{1}
\begin{result} \label{res:graphstatebound}
   No graph state corresponding to a graph with at least three connected nodes  can be prepared in an LOSR 
    network with bipartite sources with a fidelity larger than $F_{\LOSR}(\ket{\GHZ_3}) < 0.618$.
\end{result}
To strengthen this result we note that many graph states 
admit partitions that can produce more than one GHZ state.
This includes the family of two-dimensional cluster states 
$\ket{\text{Cl}_{m,n}}$, which correspond to a $N=m \times n$
grid as a graph, see Fig.~\ref{fig:ghz_extraction_cluster} (b).
These are known to be resource states for measurement-based 
quantum computation \cite{briegel2009measurement}.
For such states, it is always possible to find a tripartition yielding the maximal number of $k=\lfloor mn/3\rfloor$ GHZ 
states. 
Moreover, a lower bound on the achievable fidelity in LOSR networks can be obtained from the knowledge of the maximal fidelity of 
fully separable product states with the cluster state \cite{markham2007entanglement}. Together, we obtain
\begin{result}\label{res:clusterstatebound}
    The maximal fidelity of the $m\cdot n$-partite cluster state $\ket{\text{Cl}_{m,n}}$ in LOSR networks of $N$ parties and bipartite sources is bounded by
    \begin{align}
        \frac{1}{2^{\lfloor \frac{mn}{2}\rfloor}} \leq F_{\LOSR}(\ket{\text{Cl}_{m,n}}) \leq \frac{1}{2^{\frac12\lfloor \frac{mn}{3}\rfloor}}.
    \end{align}
    \vspace{-1.5em}
\end{result}
The fact that the maximal achievable fidelity is suppressed exponentially 
with the number of qubits suggests that LOSR networks do not provide any advantage compared to classical networks for tasks based on cluster states.

\subsection{Multipartite GHZ states}

Our previous results prompt two immediate questions. 
First, we 
focused so far on the ability to extract three-partite states, but one 
could ask for the extraction of more-partite states, e.g.~$\ket{\GHZ_4}$ instead, 
if better fidelity bounds are known for these states. Although it 
is often the case, it is, however, not always possible to find tetrapartitions 
that extract $\ket{\GHZ_4}$ from connected graph states \cite{englbrecht2022transformations}.
A second question asks for direct fidelity bounds for  
$\ket{\GHZ_{N,d}}$ for $N>3$ without resorting to any GHZ
extraction procedure. For $N\geq4$, stronger bounds than the ones
for the triangle scenario can be established. They imply
that in these LOSR networks the fidelity of GHZ states 
cannot exceed the fidelity that can be obtained without
distributing bipartite entanglement at all. 

\begin{result}\label{res:multipartitebound}
    In quantum networks of $N \geq 4$ nodes with bipartite sources, the maximal fidelity with $\ket{\GHZ_{N,d}}$ is bounded by $1/d$. The same fidelity can be reached by a fully separable state in the classical network.
\end{result}

The proof of the statement can be found in Appendix~\ref{app:fidbound_ghzn}. 
Note that this fidelity bound allows to construct novel extraction
scenarios, aiming at the extraction of four qubit GHZ states. In this
sense, the results of Ref.~\cite{englbrecht2022transformations} may be used to 
provide more bounds for general graph states. 

\section{Discussion}

Our results demonstrate a fundamental limit in the state 
preparation capabilities of LOSR quantum networks with bipartite sources. Specifically, we have shown that these networks are inherently unable to generate entangled multipartite states with fidelities that significantly exceed those of fully separable 
states.
To do so, we also provided insights into GHZ state extraction  from graphs states that are along the lines of Refs.~\cite{englbrecht2022transformations, de2024extracting}. We believe these results are of independent interest. 

Our results enforce a reassessment of the potential of the LOSR 
paradigm for distributed quantum information processing and 
raise the need to explore alternative network architectures 
that rely on classical communication. Since the usage of 
classical communication typically introduces time delays, adding quantum memories may be required. To deal with 
this, a crucial next step is to determine the minimal amount 
of classical communication required for preparing highly 
entangled states in networks. This would allow us to derive 
concrete estimates on memory usage times and waiting times \cite{collins2007multiplexed, shchukin2019waiting, weinbrenner2024aging}, setting requirements for practical 
and scalable experimental implementations. Moreover, this may pave
the way towards more efficient schemes for entanglement distillation \cite{dur2007entanglement, rozpkedek2018optimizing}. In fact, understanding the minimal resources required for reaching the full power of quantum networks
may guide the ultimate development of the quantum internet.

{\it Note added.---} While finishing this work we learned that related but complementary results have been obtained by Xiang Zhou et al.

\section*{Acknowledgments}
\acknowledgments
We thank  Ghislaine Coulter-de Wit,  Mariami Gachechiladze, Tristan Kraft, Julia Kunzelmann, Nikolai Miklin, Anton Trushechkin and Lina Vandré for discussions. 

This work has been supported by the Deutsche Forschungsgemeinschaft 
(DFG, German Research Foundation, project numbers 447948357 and 440958198),
the Sino-German Center for Research Promotion (Project M-0294), the German Ministry of Education and Research (Projects QuKuK, BMBF Grant No.~16KIS1618K and QSolid, BMBF Grant No.~ 13N16163).
K.H. and L.T.W. acknowledge support by the House of Young Talents of the University of Siegen.

\bibliographystyle{apsrev4-2}
\bibliography{sources}

\begin{thebibliography}{65}%
\makeatletter
\providecommand \@ifxundefined [1]{%
 \@ifx{#1\undefined}
}%
\providecommand \@ifnum [1]{%
 \ifnum #1\expandafter \@firstoftwo
 \else \expandafter \@secondoftwo
 \fi
}%
\providecommand \@ifx [1]{%
 \ifx #1\expandafter \@firstoftwo
 \else \expandafter \@secondoftwo
 \fi
}%
\providecommand \natexlab [1]{#1}%
\providecommand \enquote  [1]{``#1''}%
\providecommand \bibnamefont  [1]{#1}%
\providecommand \bibfnamefont [1]{#1}%
\providecommand \citenamefont [1]{#1}%
\providecommand \href@noop [0]{\@secondoftwo}%
\providecommand \href [0]{\begingroup \@sanitize@url \@href}%
\providecommand \@href[1]{\@@startlink{#1}\@@href}%
\providecommand \@@href[1]{\endgroup#1\@@endlink}%
\providecommand \@sanitize@url [0]{\catcode `\\12\catcode `\$12\catcode
  `\&12\catcode `\#12\catcode `\^12\catcode `\_12\catcode `\%12\relax}%
\providecommand \@@startlink[1]{}%
\providecommand \@@endlink[0]{}%
\providecommand \url  [0]{\begingroup\@sanitize@url \@url }%
\providecommand \@url [1]{\endgroup\@href {#1}{\urlprefix }}%
\providecommand \urlprefix  [0]{URL }%
\providecommand \Eprint [0]{\href }%
\providecommand \doibase [0]{https://doi.org/}%
\providecommand \selectlanguage [0]{\@gobble}%
\providecommand \bibinfo  [0]{\@secondoftwo}%
\providecommand \bibfield  [0]{\@secondoftwo}%
\providecommand \translation [1]{[#1]}%
\providecommand \BibitemOpen [0]{}%
\providecommand \bibitemStop [0]{}%
\providecommand \bibitemNoStop [0]{.\EOS\space}%
\providecommand \EOS [0]{\spacefactor3000\relax}%
\providecommand \BibitemShut  [1]{\csname bibitem#1\endcsname}%
\let\auto@bib@innerbib\@empty
\bibitem [{\citenamefont {Kimble}(2008)}]{Kimble2008}%
  \BibitemOpen
  \bibfield  {author} {\bibinfo {author} {\bibfnamefont {H.~J.}\ \bibnamefont
  {Kimble}},\ }\href {https://doi.org/10.1038/nature07127} {\bibfield
  {journal} {\bibinfo  {journal} {Nature}\ }\textbf {\bibinfo {volume} {453}},\
  \bibinfo {pages} {1023} (\bibinfo {year} {2008})}\BibitemShut {NoStop}%
\bibitem [{\citenamefont {Wehner}\ \emph {et~al.}(2018)\citenamefont {Wehner},
  \citenamefont {Elkouss},\ and\ \citenamefont {Hanson}}]{Wehner2018}%
  \BibitemOpen
  \bibfield  {author} {\bibinfo {author} {\bibfnamefont {S.}~\bibnamefont
  {Wehner}}, \bibinfo {author} {\bibfnamefont {D.}~\bibnamefont {Elkouss}},\
  and\ \bibinfo {author} {\bibfnamefont {R.}~\bibnamefont {Hanson}},\ }\href
  {https://doi.org/10.1126/science.aam9288} {\bibfield  {journal} {\bibinfo
  {journal} {Science}\ }\textbf {\bibinfo {volume} {362}},\ \bibinfo {pages}
  {303} (\bibinfo {year} {2018})}\BibitemShut {NoStop}%
\bibitem [{\citenamefont {Azuma}\ \emph {et~al.}(2023)\citenamefont {Azuma},
  \citenamefont {Economou}, \citenamefont {Elkouss}, \citenamefont {Hilaire},
  \citenamefont {Jiang}, \citenamefont {Lo},\ and\ \citenamefont
  {Tzitrin}}]{azuma2023quantum}%
  \BibitemOpen
  \bibfield  {author} {\bibinfo {author} {\bibfnamefont {K.}~\bibnamefont
  {Azuma}}, \bibinfo {author} {\bibfnamefont {S.~E.}\ \bibnamefont {Economou}},
  \bibinfo {author} {\bibfnamefont {D.}~\bibnamefont {Elkouss}}, \bibinfo
  {author} {\bibfnamefont {P.}~\bibnamefont {Hilaire}}, \bibinfo {author}
  {\bibfnamefont {L.}~\bibnamefont {Jiang}}, \bibinfo {author} {\bibfnamefont
  {H.-K.}\ \bibnamefont {Lo}},\ and\ \bibinfo {author} {\bibfnamefont
  {I.}~\bibnamefont {Tzitrin}},\ }\href
  {https://doi.org/10.1103/RevModPhys.95.045006} {\bibfield  {journal}
  {\bibinfo  {journal} {Rev. Mod. Phys.}\ }\textbf {\bibinfo {volume} {95}},\
  \bibinfo {pages} {045006} (\bibinfo {year} {2023})}\BibitemShut {NoStop}%
\bibitem [{\citenamefont {Hermans}\ \emph {et~al.}(2022)\citenamefont
  {Hermans}, \citenamefont {Pompili}, \citenamefont {Beukers}, \citenamefont
  {Baier}, \citenamefont {Borregaard},\ and\ \citenamefont
  {Hanson}}]{hermans2022qubit}%
  \BibitemOpen
  \bibfield  {author} {\bibinfo {author} {\bibfnamefont {S.}~\bibnamefont
  {Hermans}}, \bibinfo {author} {\bibfnamefont {M.}~\bibnamefont {Pompili}},
  \bibinfo {author} {\bibfnamefont {H.}~\bibnamefont {Beukers}}, \bibinfo
  {author} {\bibfnamefont {S.}~\bibnamefont {Baier}}, \bibinfo {author}
  {\bibfnamefont {J.}~\bibnamefont {Borregaard}},\ and\ \bibinfo {author}
  {\bibfnamefont {R.}~\bibnamefont {Hanson}},\ }\href
  {https://doi.org/10.1038/s41586-022-04697-y} {\bibfield  {journal} {\bibinfo
  {journal} {Nature}\ }\textbf {\bibinfo {volume} {605}},\ \bibinfo {pages}
  {663} (\bibinfo {year} {2022})}\BibitemShut {NoStop}%
\bibitem [{\citenamefont {Liu}\ \emph {et~al.}(2023)\citenamefont {Liu},
  \citenamefont {Luo}, \citenamefont {Yu}, \citenamefont {Wang}, \citenamefont
  {Wang}, \citenamefont {Hu}, \citenamefont {Li}, \citenamefont {Zheng},
  \citenamefont {Yao}, \citenamefont {Yan} \emph {et~al.}}]{liu2023multinode}%
  \BibitemOpen
  \bibfield  {author} {\bibinfo {author} {\bibfnamefont {J.-L.}\ \bibnamefont
  {Liu}}, \bibinfo {author} {\bibfnamefont {X.-Y.}\ \bibnamefont {Luo}},
  \bibinfo {author} {\bibfnamefont {Y.}~\bibnamefont {Yu}}, \bibinfo {author}
  {\bibfnamefont {C.-Y.}\ \bibnamefont {Wang}}, \bibinfo {author}
  {\bibfnamefont {B.}~\bibnamefont {Wang}}, \bibinfo {author} {\bibfnamefont
  {Y.}~\bibnamefont {Hu}}, \bibinfo {author} {\bibfnamefont {J.}~\bibnamefont
  {Li}}, \bibinfo {author} {\bibfnamefont {M.-Y.}\ \bibnamefont {Zheng}},
  \bibinfo {author} {\bibfnamefont {B.}~\bibnamefont {Yao}}, \bibinfo {author}
  {\bibfnamefont {Z.}~\bibnamefont {Yan}}, \emph {et~al.},\ }\href
  {https://arxiv.org/abs/2309.00221} {\bibfield  {journal} {\bibinfo  {journal}
  {arXiv preprint arXiv:2309.00221}\ } (\bibinfo {year} {2023})}\BibitemShut
  {NoStop}%
\bibitem [{\citenamefont {Hartung}\ \emph {et~al.}(2024)\citenamefont
  {Hartung}, \citenamefont {Seubert}, \citenamefont {Welte}, \citenamefont
  {Distante},\ and\ \citenamefont {Rempe}}]{hartung2024quantum}%
  \BibitemOpen
  \bibfield  {author} {\bibinfo {author} {\bibfnamefont {L.}~\bibnamefont
  {Hartung}}, \bibinfo {author} {\bibfnamefont {M.}~\bibnamefont {Seubert}},
  \bibinfo {author} {\bibfnamefont {S.}~\bibnamefont {Welte}}, \bibinfo
  {author} {\bibfnamefont {E.}~\bibnamefont {Distante}},\ and\ \bibinfo
  {author} {\bibfnamefont {G.}~\bibnamefont {Rempe}},\ }\href
  {https://doi.org/10.1126/science.ado6471} {\bibfield  {journal} {\bibinfo
  {journal} {Science}\ }\textbf {\bibinfo {volume} {385}},\ \bibinfo {pages}
  {179} (\bibinfo {year} {2024})}\BibitemShut {NoStop}%
\bibitem [{\citenamefont {Barz}\ \emph {et~al.}(2012)\citenamefont {Barz},
  \citenamefont {Kashefi}, \citenamefont {Broadbent}, \citenamefont
  {Fitzsimons}, \citenamefont {Zeilinger},\ and\ \citenamefont
  {Walther}}]{Barz2012}%
  \BibitemOpen
  \bibfield  {author} {\bibinfo {author} {\bibfnamefont {S.}~\bibnamefont
  {Barz}}, \bibinfo {author} {\bibfnamefont {E.}~\bibnamefont {Kashefi}},
  \bibinfo {author} {\bibfnamefont {A.}~\bibnamefont {Broadbent}}, \bibinfo
  {author} {\bibfnamefont {J.~F.}\ \bibnamefont {Fitzsimons}}, \bibinfo
  {author} {\bibfnamefont {A.}~\bibnamefont {Zeilinger}},\ and\ \bibinfo
  {author} {\bibfnamefont {P.}~\bibnamefont {Walther}},\ }\href
  {https://doi.org/10.1126/science.1214707} {\bibfield  {journal} {\bibinfo
  {journal} {Science}\ }\textbf {\bibinfo {volume} {335}},\ \bibinfo {pages}
  {303} (\bibinfo {year} {2012})}\BibitemShut {NoStop}%
\bibitem [{\citenamefont {Sekatski}\ \emph {et~al.}(2020)\citenamefont
  {Sekatski}, \citenamefont {W{\"o}lk},\ and\ \citenamefont
  {D{\"u}r}}]{Sekatski2020}%
  \BibitemOpen
  \bibfield  {author} {\bibinfo {author} {\bibfnamefont {P.}~\bibnamefont
  {Sekatski}}, \bibinfo {author} {\bibfnamefont {S.}~\bibnamefont {W{\"o}lk}},\
  and\ \bibinfo {author} {\bibfnamefont {W.}~\bibnamefont {D{\"u}r}},\ }\href
  {https://doi.org/10.1103/PhysRevResearch.2.023052} {\bibfield  {journal}
  {\bibinfo  {journal} {Phys. Rev. Res.}\ }\textbf {\bibinfo {volume} {2}},\
  \bibinfo {pages} {023052} (\bibinfo {year} {2020})}\BibitemShut {NoStop}%
\bibitem [{\citenamefont {Proctor}\ \emph {et~al.}(2018)\citenamefont
  {Proctor}, \citenamefont {Knott},\ and\ \citenamefont
  {Dunningham}}]{Proctor2018}%
  \BibitemOpen
  \bibfield  {author} {\bibinfo {author} {\bibfnamefont {T.~J.}\ \bibnamefont
  {Proctor}}, \bibinfo {author} {\bibfnamefont {P.~A.}\ \bibnamefont {Knott}},\
  and\ \bibinfo {author} {\bibfnamefont {J.~A.}\ \bibnamefont {Dunningham}},\
  }\href {https://doi.org/10.1103/PhysRevLett.120.080501} {\bibfield  {journal}
  {\bibinfo  {journal} {Phys. Rev. Lett.}\ }\textbf {\bibinfo {volume} {120}},\
  \bibinfo {pages} {080501} (\bibinfo {year} {2018})}\BibitemShut {NoStop}%
\bibitem [{\citenamefont {Guo}\ \emph {et~al.}(2020)\citenamefont {Guo},
  \citenamefont {Breum}, \citenamefont {Borregaard}, \citenamefont {Izumi},
  \citenamefont {Larsen}, \citenamefont {Gehring}, \citenamefont {Christandl},
  \citenamefont {Neergaard-Nielsen},\ and\ \citenamefont {Andersen}}]{Guo2020}%
  \BibitemOpen
  \bibfield  {author} {\bibinfo {author} {\bibfnamefont {X.}~\bibnamefont
  {Guo}}, \bibinfo {author} {\bibfnamefont {C.~R.}\ \bibnamefont {Breum}},
  \bibinfo {author} {\bibfnamefont {J.}~\bibnamefont {Borregaard}}, \bibinfo
  {author} {\bibfnamefont {S.}~\bibnamefont {Izumi}}, \bibinfo {author}
  {\bibfnamefont {M.~V.}\ \bibnamefont {Larsen}}, \bibinfo {author}
  {\bibfnamefont {T.}~\bibnamefont {Gehring}}, \bibinfo {author} {\bibfnamefont
  {M.}~\bibnamefont {Christandl}}, \bibinfo {author} {\bibfnamefont {J.~S.}\
  \bibnamefont {Neergaard-Nielsen}},\ and\ \bibinfo {author} {\bibfnamefont
  {U.~L.}\ \bibnamefont {Andersen}},\ }\href
  {https://doi.org/10.1038/s41567-019-0743-x} {\bibfield  {journal} {\bibinfo
  {journal} {Nat. Phys.}\ }\textbf {\bibinfo {volume} {16}},\ \bibinfo {pages}
  {281} (\bibinfo {year} {2020})}\BibitemShut {NoStop}%
\bibitem [{\citenamefont {Murta}\ \emph {et~al.}(2020)\citenamefont {Murta},
  \citenamefont {Grasselli}, \citenamefont {Kampermann},\ and\ \citenamefont
  {Bru{\ss}}}]{Murta2020}%
  \BibitemOpen
  \bibfield  {author} {\bibinfo {author} {\bibfnamefont {G.}~\bibnamefont
  {Murta}}, \bibinfo {author} {\bibfnamefont {F.}~\bibnamefont {Grasselli}},
  \bibinfo {author} {\bibfnamefont {H.}~\bibnamefont {Kampermann}},\ and\
  \bibinfo {author} {\bibfnamefont {D.}~\bibnamefont {Bru{\ss}}},\ }\href
  {https://doi.org/https://doi.org/10.1002/qute.202000025} {\bibfield
  {journal} {\bibinfo  {journal} {Adv. Quantum Technol.}\ }\textbf {\bibinfo
  {volume} {3}},\ \bibinfo {pages} {2000025} (\bibinfo {year}
  {2020})}\BibitemShut {NoStop}%
\bibitem [{\citenamefont {Komar}\ \emph {et~al.}(2014)\citenamefont {Komar},
  \citenamefont {Kessler}, \citenamefont {Bishof}, \citenamefont {Jiang},
  \citenamefont {S{\o}rensen}, \citenamefont {Ye},\ and\ \citenamefont
  {Lukin}}]{Komar2014}%
  \BibitemOpen
  \bibfield  {author} {\bibinfo {author} {\bibfnamefont {P.}~\bibnamefont
  {Komar}}, \bibinfo {author} {\bibfnamefont {E.~M.}\ \bibnamefont {Kessler}},
  \bibinfo {author} {\bibfnamefont {M.}~\bibnamefont {Bishof}}, \bibinfo
  {author} {\bibfnamefont {L.}~\bibnamefont {Jiang}}, \bibinfo {author}
  {\bibfnamefont {A.~S.}\ \bibnamefont {S{\o}rensen}}, \bibinfo {author}
  {\bibfnamefont {J.}~\bibnamefont {Ye}},\ and\ \bibinfo {author}
  {\bibfnamefont {M.~D.}\ \bibnamefont {Lukin}},\ }\href
  {https://doi.org/10.1038/nphys3000} {\bibfield  {journal} {\bibinfo
  {journal} {Nat. Phys.}\ }\textbf {\bibinfo {volume} {10}},\ \bibinfo {pages}
  {582} (\bibinfo {year} {2014})}\BibitemShut {NoStop}%
\bibitem [{\citenamefont {Buscemi}(2012)}]{buscemi2012all}%
  \BibitemOpen
  \bibfield  {author} {\bibinfo {author} {\bibfnamefont {F.}~\bibnamefont
  {Buscemi}},\ }\href {https://doi.org/10.1103/PhysRevLett.108.200401}
  {\bibfield  {journal} {\bibinfo  {journal} {Phys. Rev. Lett.}\ }\textbf
  {\bibinfo {volume} {108}},\ \bibinfo {pages} {200401} (\bibinfo {year}
  {2012})}\BibitemShut {NoStop}%
\bibitem [{\citenamefont {Schmid}\ \emph {et~al.}(2023)\citenamefont {Schmid},
  \citenamefont {Fraser}, \citenamefont {Kunjwal}, \citenamefont {Sainz},
  \citenamefont {Wolfe},\ and\ \citenamefont
  {Spekkens}}]{Schmid2023understanding}%
  \BibitemOpen
  \bibfield  {author} {\bibinfo {author} {\bibfnamefont {D.}~\bibnamefont
  {Schmid}}, \bibinfo {author} {\bibfnamefont {T.~C.}\ \bibnamefont {Fraser}},
  \bibinfo {author} {\bibfnamefont {R.}~\bibnamefont {Kunjwal}}, \bibinfo
  {author} {\bibfnamefont {A.~B.}\ \bibnamefont {Sainz}}, \bibinfo {author}
  {\bibfnamefont {E.}~\bibnamefont {Wolfe}},\ and\ \bibinfo {author}
  {\bibfnamefont {R.~W.}\ \bibnamefont {Spekkens}},\ }\href
  {https://doi.org/10.22331/q-2023-12-04-1194} {\bibfield  {journal} {\bibinfo
  {journal} {{Quantum}}\ }\textbf {\bibinfo {volume} {7}},\ \bibinfo {pages}
  {1194} (\bibinfo {year} {2023})}\BibitemShut {NoStop}%
\bibitem [{\citenamefont {de~Vicente}(2014)}]{deVicente2014}%
  \BibitemOpen
  \bibfield  {author} {\bibinfo {author} {\bibfnamefont {J.~I.}\ \bibnamefont
  {de~Vicente}},\ }\href {https://doi.org/10.1088/1751-8113/47/42/424017}
  {\bibfield  {journal} {\bibinfo  {journal} {J. Phys. A: Math. Theor.}\
  }\textbf {\bibinfo {volume} {47}},\ \bibinfo {pages} {424017} (\bibinfo
  {year} {2014})}\BibitemShut {NoStop}%
\bibitem [{\citenamefont {Forster}\ \emph {et~al.}(2009)\citenamefont
  {Forster}, \citenamefont {Winkler},\ and\ \citenamefont
  {Wolf}}]{forster2009distilling}%
  \BibitemOpen
  \bibfield  {author} {\bibinfo {author} {\bibfnamefont {M.}~\bibnamefont
  {Forster}}, \bibinfo {author} {\bibfnamefont {S.}~\bibnamefont {Winkler}},\
  and\ \bibinfo {author} {\bibfnamefont {S.}~\bibnamefont {Wolf}},\ }\href
  {https://doi.org/10.1103/PhysRevLett.102.120401} {\bibfield  {journal}
  {\bibinfo  {journal} {Phys. Rev. Lett.}\ }\textbf {\bibinfo {volume} {102}},\
  \bibinfo {pages} {120401} (\bibinfo {year} {2009})}\BibitemShut {NoStop}%
\bibitem [{\citenamefont {Gutoski}(2009)}]{gutoski2009properties}%
  \BibitemOpen
  \bibfield  {author} {\bibinfo {author} {\bibfnamefont {G.}~\bibnamefont
  {Gutoski}},\ }\href {https://doi.org/10.26421/QIC9.9-10-2} {\bibfield
  {journal} {\bibinfo  {journal} {Quantum Inf. Comput.}\ }\textbf {\bibinfo
  {volume} {9}},\ \bibinfo {pages} {739} (\bibinfo {year} {2009})}\BibitemShut
  {NoStop}%
\bibitem [{\citenamefont {Navascu{\'e}s}\ \emph {et~al.}(2020)\citenamefont
  {Navascu{\'e}s}, \citenamefont {Wolfe}, \citenamefont {Rosset},\ and\
  \citenamefont {Pozas-Kerstjens}}]{navascues2020genuine}%
  \BibitemOpen
  \bibfield  {author} {\bibinfo {author} {\bibfnamefont {M.}~\bibnamefont
  {Navascu{\'e}s}}, \bibinfo {author} {\bibfnamefont {E.}~\bibnamefont
  {Wolfe}}, \bibinfo {author} {\bibfnamefont {D.}~\bibnamefont {Rosset}},\ and\
  \bibinfo {author} {\bibfnamefont {A.}~\bibnamefont {Pozas-Kerstjens}},\
  }\href {https://doi.org/10.1103/PhysRevLett.125.240505} {\bibfield  {journal}
  {\bibinfo  {journal} {Phys. Rev. Lett.}\ }\textbf {\bibinfo {volume} {125}},\
  \bibinfo {pages} {240505} (\bibinfo {year} {2020})}\BibitemShut {NoStop}%
\bibitem [{\citenamefont {Luo}(2021)}]{luo2021new}%
  \BibitemOpen
  \bibfield  {author} {\bibinfo {author} {\bibfnamefont {M.-X.}\ \bibnamefont
  {Luo}},\ }\href {https://doi.org/10.1002/qute.202000123} {\bibfield
  {journal} {\bibinfo  {journal} {Adv. Quantum Technol.}\ }\textbf {\bibinfo
  {volume} {4}},\ \bibinfo {pages} {2000123} (\bibinfo {year}
  {2021})}\BibitemShut {NoStop}%
\bibitem [{\citenamefont {Kraft}\ \emph {et~al.}(2021)\citenamefont {Kraft},
  \citenamefont {Designolle}, \citenamefont {Ritz}, \citenamefont {Brunner},
  \citenamefont {G{\"u}hne},\ and\ \citenamefont {Huber}}]{kraft2021quantum}%
  \BibitemOpen
  \bibfield  {author} {\bibinfo {author} {\bibfnamefont {T.}~\bibnamefont
  {Kraft}}, \bibinfo {author} {\bibfnamefont {S.}~\bibnamefont {Designolle}},
  \bibinfo {author} {\bibfnamefont {C.}~\bibnamefont {Ritz}}, \bibinfo {author}
  {\bibfnamefont {N.}~\bibnamefont {Brunner}}, \bibinfo {author} {\bibfnamefont
  {O.}~\bibnamefont {G{\"u}hne}},\ and\ \bibinfo {author} {\bibfnamefont
  {M.}~\bibnamefont {Huber}},\ }\href
  {https://doi.org/10.1103/PhysRevA.103.L060401} {\bibfield  {journal}
  {\bibinfo  {journal} {Phys. Rev. A}\ }\textbf {\bibinfo {volume} {103}},\
  \bibinfo {pages} {L060401} (\bibinfo {year} {2021})}\BibitemShut {NoStop}%
\bibitem [{\citenamefont {Hansenne}\ \emph {et~al.}(2022)\citenamefont
  {Hansenne}, \citenamefont {Xu}, \citenamefont {Kraft},\ and\ \citenamefont
  {G{\"u}hne}}]{hansenne2022symmetries}%
  \BibitemOpen
  \bibfield  {author} {\bibinfo {author} {\bibfnamefont {K.}~\bibnamefont
  {Hansenne}}, \bibinfo {author} {\bibfnamefont {Z.-P.}\ \bibnamefont {Xu}},
  \bibinfo {author} {\bibfnamefont {T.}~\bibnamefont {Kraft}},\ and\ \bibinfo
  {author} {\bibfnamefont {O.}~\bibnamefont {G{\"u}hne}},\ }\href
  {https://doi.org/10.1038/s41467-022-28006-3} {\bibfield  {journal} {\bibinfo
  {journal} {Nat. Commun.}\ }\textbf {\bibinfo {volume} {13}},\ \bibinfo
  {pages} {496} (\bibinfo {year} {2022})}\BibitemShut {NoStop}%
\bibitem [{\citenamefont {Wang}\ \emph {et~al.}(2024)\citenamefont {Wang},
  \citenamefont {Xu},\ and\ \citenamefont {G{\"u}hne}}]{wang2024quantum}%
  \BibitemOpen
  \bibfield  {author} {\bibinfo {author} {\bibfnamefont {Y.-X.}\ \bibnamefont
  {Wang}}, \bibinfo {author} {\bibfnamefont {Z.-P.}\ \bibnamefont {Xu}},\ and\
  \bibinfo {author} {\bibfnamefont {O.}~\bibnamefont {G{\"u}hne}},\ }\href
  {https://doi.org/10.1038/s41534-024-00806-z} {\bibfield  {journal} {\bibinfo
  {journal} {npj Quantum Inf.}\ }\textbf {\bibinfo {volume} {10}},\ \bibinfo
  {pages} {11} (\bibinfo {year} {2024})}\BibitemShut {NoStop}%
\bibitem [{\citenamefont {Makuta}\ \emph {et~al.}(2023)\citenamefont {Makuta},
  \citenamefont {Ligthart},\ and\ \citenamefont {Augusiak}}]{makuta2023no}%
  \BibitemOpen
  \bibfield  {author} {\bibinfo {author} {\bibfnamefont {O.}~\bibnamefont
  {Makuta}}, \bibinfo {author} {\bibfnamefont {L.~T.}\ \bibnamefont
  {Ligthart}},\ and\ \bibinfo {author} {\bibfnamefont {R.}~\bibnamefont
  {Augusiak}},\ }\href {https://doi.org/10.1038/s41534-023-00789-3} {\bibfield
  {journal} {\bibinfo  {journal} {npj Quantum Inf.}\ }\textbf {\bibinfo
  {volume} {9}},\ \bibinfo {pages} {117} (\bibinfo {year} {2023})}\BibitemShut
  {NoStop}%
\bibitem [{\citenamefont {Yang}\ \emph {et~al.}(2024)\citenamefont {Yang},
  \citenamefont {Yadin},\ and\ \citenamefont {Xu}}]{yang2024quantumenhanced}%
  \BibitemOpen
  \bibfield  {author} {\bibinfo {author} {\bibfnamefont {Y.}~\bibnamefont
  {Yang}}, \bibinfo {author} {\bibfnamefont {B.}~\bibnamefont {Yadin}},\ and\
  \bibinfo {author} {\bibfnamefont {Z.-P.}\ \bibnamefont {Xu}},\ }\href
  {https://doi.org/10.1103/PhysRevLett.132.210801} {\bibfield  {journal}
  {\bibinfo  {journal} {Phys. Rev. Lett.}\ }\textbf {\bibinfo {volume} {132}},\
  \bibinfo {pages} {210801} (\bibinfo {year} {2024})}\BibitemShut {NoStop}%
\bibitem [{\citenamefont {Sheng}\ \emph {et~al.}(2010)\citenamefont {Sheng},
  \citenamefont {Deng},\ and\ \citenamefont {Long}}]{sheng2010complete}%
  \BibitemOpen
  \bibfield  {author} {\bibinfo {author} {\bibfnamefont {Y.-B.}\ \bibnamefont
  {Sheng}}, \bibinfo {author} {\bibfnamefont {F.-G.}\ \bibnamefont {Deng}},\
  and\ \bibinfo {author} {\bibfnamefont {G.~L.}\ \bibnamefont {Long}},\ }\href
  {https://doi.org/10.1103/PhysRevA.82.032318} {\bibfield  {journal} {\bibinfo
  {journal} {Phys. Rev. A}\ }\textbf {\bibinfo {volume} {82}},\ \bibinfo
  {pages} {032318} (\bibinfo {year} {2010})}\BibitemShut {NoStop}%
\bibitem [{\citenamefont {Wei}\ \emph {et~al.}(2007)\citenamefont {Wei},
  \citenamefont {Barreiro},\ and\ \citenamefont
  {Kwiat}}]{wei2007hyperentangled}%
  \BibitemOpen
  \bibfield  {author} {\bibinfo {author} {\bibfnamefont {T.-C.}\ \bibnamefont
  {Wei}}, \bibinfo {author} {\bibfnamefont {J.~T.}\ \bibnamefont {Barreiro}},\
  and\ \bibinfo {author} {\bibfnamefont {P.~G.}\ \bibnamefont {Kwiat}},\ }\href
  {https://doi.org/10.1103/PhysRevA.75.060305} {\bibfield  {journal} {\bibinfo
  {journal} {Phys. Rev. A}\ }\textbf {\bibinfo {volume} {75}},\ \bibinfo
  {pages} {060305} (\bibinfo {year} {2007})}\BibitemShut {NoStop}%
\bibitem [{\citenamefont {Walborn}\ \emph {et~al.}(2003)\citenamefont
  {Walborn}, \citenamefont {P\'adua},\ and\ \citenamefont
  {Monken}}]{walborn2003hyperentanglement}%
  \BibitemOpen
  \bibfield  {author} {\bibinfo {author} {\bibfnamefont {S.~P.}\ \bibnamefont
  {Walborn}}, \bibinfo {author} {\bibfnamefont {S.}~\bibnamefont {P\'adua}},\
  and\ \bibinfo {author} {\bibfnamefont {C.~H.}\ \bibnamefont {Monken}},\
  }\href {https://doi.org/10.1103/PhysRevA.68.042313} {\bibfield  {journal}
  {\bibinfo  {journal} {Phys. Rev. A}\ }\textbf {\bibinfo {volume} {68}},\
  \bibinfo {pages} {042313} (\bibinfo {year} {2003})}\BibitemShut {NoStop}%
\bibitem [{\citenamefont {Schuck}\ \emph {et~al.}(2006)\citenamefont {Schuck},
  \citenamefont {Huber}, \citenamefont {Kurtsiefer},\ and\ \citenamefont
  {Weinfurter}}]{schuck2006complete}%
  \BibitemOpen
  \bibfield  {author} {\bibinfo {author} {\bibfnamefont {C.}~\bibnamefont
  {Schuck}}, \bibinfo {author} {\bibfnamefont {G.}~\bibnamefont {Huber}},
  \bibinfo {author} {\bibfnamefont {C.}~\bibnamefont {Kurtsiefer}},\ and\
  \bibinfo {author} {\bibfnamefont {H.}~\bibnamefont {Weinfurter}},\ }\href
  {https://doi.org/10.1103/PhysRevLett.96.190501} {\bibfield  {journal}
  {\bibinfo  {journal} {Phys. Rev. Lett.}\ }\textbf {\bibinfo {volume} {96}},\
  \bibinfo {pages} {190501} (\bibinfo {year} {2006})}\BibitemShut {NoStop}%
\bibitem [{\citenamefont {Ewert}\ and\ \citenamefont {van
  Loock}(2014)}]{ewert2014efficient}%
  \BibitemOpen
  \bibfield  {author} {\bibinfo {author} {\bibfnamefont {F.}~\bibnamefont
  {Ewert}}\ and\ \bibinfo {author} {\bibfnamefont {P.}~\bibnamefont {van
  Loock}},\ }\href {https://doi.org/10.1103/PhysRevLett.113.140403} {\bibfield
  {journal} {\bibinfo  {journal} {Phys. Rev. Lett.}\ }\textbf {\bibinfo
  {volume} {113}},\ \bibinfo {pages} {140403} (\bibinfo {year}
  {2014})}\BibitemShut {NoStop}%
\bibitem [{\citenamefont {Bayerbach}\ \emph {et~al.}(2023)\citenamefont
  {Bayerbach}, \citenamefont {D’Aurelio}, \citenamefont {van Loock},\ and\
  \citenamefont {Barz}}]{bayerbach2023bellstate}%
  \BibitemOpen
  \bibfield  {author} {\bibinfo {author} {\bibfnamefont {M.~J.}\ \bibnamefont
  {Bayerbach}}, \bibinfo {author} {\bibfnamefont {S.~E.}\ \bibnamefont
  {D’Aurelio}}, \bibinfo {author} {\bibfnamefont {P.}~\bibnamefont {van
  Loock}},\ and\ \bibinfo {author} {\bibfnamefont {S.}~\bibnamefont {Barz}},\
  }\href {https://www.science.org/doi/abs/10.1126/sciadv.adf4080} {\bibfield
  {journal} {\bibinfo  {journal} {Sci. Adv.}\ }\textbf {\bibinfo {volume}
  {9}},\ \bibinfo {pages} {eadf4080} (\bibinfo {year} {2023})}\BibitemShut
  {NoStop}%
\bibitem [{\citenamefont {Fiur\'a\ifmmode~\check{s}\else
  \v{s}\fi{}ek}(2002)}]{fiurasek2002gaussian}%
  \BibitemOpen
  \bibfield  {author} {\bibinfo {author} {\bibfnamefont {J.}~\bibnamefont
  {Fiur\'a\ifmmode~\check{s}\else \v{s}\fi{}ek}},\ }\href
  {https://doi.org/10.1103/PhysRevLett.89.137904} {\bibfield  {journal}
  {\bibinfo  {journal} {Phys. Rev. Lett.}\ }\textbf {\bibinfo {volume} {89}},\
  \bibinfo {pages} {137904} (\bibinfo {year} {2002})}\BibitemShut {NoStop}%
\bibitem [{\citenamefont {Eisert}\ \emph {et~al.}(2002)\citenamefont {Eisert},
  \citenamefont {Scheel},\ and\ \citenamefont {Plenio}}]{eisert2002distilling}%
  \BibitemOpen
  \bibfield  {author} {\bibinfo {author} {\bibfnamefont {J.}~\bibnamefont
  {Eisert}}, \bibinfo {author} {\bibfnamefont {S.}~\bibnamefont {Scheel}},\
  and\ \bibinfo {author} {\bibfnamefont {M.~B.}\ \bibnamefont {Plenio}},\
  }\href {https://doi.org/10.1103/PhysRevLett.89.137903} {\bibfield  {journal}
  {\bibinfo  {journal} {Phys. Rev. Lett.}\ }\textbf {\bibinfo {volume} {89}},\
  \bibinfo {pages} {137903} (\bibinfo {year} {2002})}\BibitemShut {NoStop}%
\bibitem [{\citenamefont {Takahashi}\ \emph {et~al.}(2010)\citenamefont
  {Takahashi}, \citenamefont {Neergaard-Nielsen}, \citenamefont {Takeuchi},
  \citenamefont {Takeoka}, \citenamefont {Hayasaka}, \citenamefont {Furusawa},\
  and\ \citenamefont {Sasaki}}]{takahashi2010entanglement}%
  \BibitemOpen
  \bibfield  {author} {\bibinfo {author} {\bibfnamefont {H.}~\bibnamefont
  {Takahashi}}, \bibinfo {author} {\bibfnamefont {J.~S.}\ \bibnamefont
  {Neergaard-Nielsen}}, \bibinfo {author} {\bibfnamefont {M.}~\bibnamefont
  {Takeuchi}}, \bibinfo {author} {\bibfnamefont {M.}~\bibnamefont {Takeoka}},
  \bibinfo {author} {\bibfnamefont {K.}~\bibnamefont {Hayasaka}}, \bibinfo
  {author} {\bibfnamefont {A.}~\bibnamefont {Furusawa}},\ and\ \bibinfo
  {author} {\bibfnamefont {M.}~\bibnamefont {Sasaki}},\ }\href
  {https://doi.org/10.1038/nphoton.2010.1} {\bibfield  {journal} {\bibinfo
  {journal} {Nat. Photonics}\ }\textbf {\bibinfo {volume} {4}},\ \bibinfo
  {pages} {178} (\bibinfo {year} {2010})}\BibitemShut {NoStop}%
\bibitem [{\citenamefont {Giovannetti}\ \emph {et~al.}(2011)\citenamefont
  {Giovannetti}, \citenamefont {Lloyd},\ and\ \citenamefont
  {Maccone}}]{giovannetti2011advances}%
  \BibitemOpen
  \bibfield  {author} {\bibinfo {author} {\bibfnamefont {V.}~\bibnamefont
  {Giovannetti}}, \bibinfo {author} {\bibfnamefont {S.}~\bibnamefont {Lloyd}},\
  and\ \bibinfo {author} {\bibfnamefont {L.}~\bibnamefont {Maccone}},\ }\href
  {https://doi.org/10.1038/nphoton.2011.35} {\bibfield  {journal} {\bibinfo
  {journal} {Nat. Photonics}\ }\textbf {\bibinfo {volume} {5}},\ \bibinfo
  {pages} {222} (\bibinfo {year} {2011})}\BibitemShut {NoStop}%
\bibitem [{\citenamefont {Briegel}\ \emph {et~al.}(2009)\citenamefont
  {Briegel}, \citenamefont {Browne}, \citenamefont {D{\"u}r}, \citenamefont
  {Raussendorf},\ and\ \citenamefont {Van~den Nest}}]{briegel2009measurement}%
  \BibitemOpen
  \bibfield  {author} {\bibinfo {author} {\bibfnamefont {H.~J.}\ \bibnamefont
  {Briegel}}, \bibinfo {author} {\bibfnamefont {D.~E.}\ \bibnamefont {Browne}},
  \bibinfo {author} {\bibfnamefont {W.}~\bibnamefont {D{\"u}r}}, \bibinfo
  {author} {\bibfnamefont {R.}~\bibnamefont {Raussendorf}},\ and\ \bibinfo
  {author} {\bibfnamefont {M.}~\bibnamefont {Van~den Nest}},\ }\href
  {https://doi.org/10.1038/nphys1157} {\bibfield  {journal} {\bibinfo
  {journal} {Nat. Phys.}\ }\textbf {\bibinfo {volume} {5}},\ \bibinfo {pages}
  {19} (\bibinfo {year} {2009})}\BibitemShut {NoStop}%
\bibitem [{\citenamefont {Freund}\ \emph {et~al.}(2024)\citenamefont {Freund},
  \citenamefont {Pirker},\ and\ \citenamefont {D{\"u}r}}]{freund2024flexible}%
  \BibitemOpen
  \bibfield  {author} {\bibinfo {author} {\bibfnamefont {J.}~\bibnamefont
  {Freund}}, \bibinfo {author} {\bibfnamefont {A.}~\bibnamefont {Pirker}},\
  and\ \bibinfo {author} {\bibfnamefont {W.}~\bibnamefont {D{\"u}r}},\ }\href
  {https://doi.org/10.1103/PhysRevResearch.6.033267} {\bibfield  {journal}
  {\bibinfo  {journal} {Phys. Rev. Res.}\ }\textbf {\bibinfo {volume} {6}},\
  \bibinfo {pages} {033267} (\bibinfo {year} {2024})}\BibitemShut {NoStop}%
\bibitem [{\citenamefont {Hein}\ \emph {et~al.}(2004)\citenamefont {Hein},
  \citenamefont {Eisert},\ and\ \citenamefont {Briegel}}]{hein2004multiparty}%
  \BibitemOpen
  \bibfield  {author} {\bibinfo {author} {\bibfnamefont {M.}~\bibnamefont
  {Hein}}, \bibinfo {author} {\bibfnamefont {J.}~\bibnamefont {Eisert}},\ and\
  \bibinfo {author} {\bibfnamefont {H.~J.}\ \bibnamefont {Briegel}},\ }\href
  {https://doi.org/10.1103/PhysRevA.69.062311} {\bibfield  {journal} {\bibinfo
  {journal} {Phys. Rev. A}\ }\textbf {\bibinfo {volume} {69}},\ \bibinfo
  {pages} {062311} (\bibinfo {year} {2004})}\BibitemShut {NoStop}%
\bibitem [{\citenamefont {Nielsen}\ and\ \citenamefont
  {Chuang}(2010)}]{nielsen2010quantum}%
  \BibitemOpen
  \bibfield  {author} {\bibinfo {author} {\bibfnamefont {M.~A.}\ \bibnamefont
  {Nielsen}}\ and\ \bibinfo {author} {\bibfnamefont {I.~L.}\ \bibnamefont
  {Chuang}},\ }\href {https://doi.org/10.1017/CBO9780511976667} {\emph
  {\bibinfo {title} {Quantum computation and quantum information}}}\ (\bibinfo
  {publisher} {Cambridge University Press},\ \bibinfo {year}
  {2010})\BibitemShut {NoStop}%
\bibitem [{\citenamefont {Tavakoli}\ \emph {et~al.}(2022)\citenamefont
  {Tavakoli}, \citenamefont {Pozas-Kerstjens}, \citenamefont {Luo},\ and\
  \citenamefont {Renou}}]{tavakoli2022bell}%
  \BibitemOpen
  \bibfield  {author} {\bibinfo {author} {\bibfnamefont {A.}~\bibnamefont
  {Tavakoli}}, \bibinfo {author} {\bibfnamefont {A.}~\bibnamefont
  {Pozas-Kerstjens}}, \bibinfo {author} {\bibfnamefont {M.-X.}\ \bibnamefont
  {Luo}},\ and\ \bibinfo {author} {\bibfnamefont {M.-O.}\ \bibnamefont
  {Renou}},\ }\href {https://doi.org/10.1088/1361-6633/ac41bb} {\bibfield
  {journal} {\bibinfo  {journal} {Rep. Prog. Phys.}\ }\textbf {\bibinfo
  {volume} {85}},\ \bibinfo {pages} {056001} (\bibinfo {year}
  {2022})}\BibitemShut {NoStop}%
\bibitem [{\citenamefont {Bourennane}\ \emph {et~al.}(2004)\citenamefont
  {Bourennane}, \citenamefont {Eibl}, \citenamefont {Kurtsiefer}, \citenamefont
  {Gaertner}, \citenamefont {Weinfurter}, \citenamefont {G{\"u}hne},
  \citenamefont {Hyllus}, \citenamefont {Bru{\ss}}, \citenamefont
  {Lewenstein},\ and\ \citenamefont {Sanpera}}]{bourennane2004experimental}%
  \BibitemOpen
  \bibfield  {author} {\bibinfo {author} {\bibfnamefont {M.}~\bibnamefont
  {Bourennane}}, \bibinfo {author} {\bibfnamefont {M.}~\bibnamefont {Eibl}},
  \bibinfo {author} {\bibfnamefont {C.}~\bibnamefont {Kurtsiefer}}, \bibinfo
  {author} {\bibfnamefont {S.}~\bibnamefont {Gaertner}}, \bibinfo {author}
  {\bibfnamefont {H.}~\bibnamefont {Weinfurter}}, \bibinfo {author}
  {\bibfnamefont {O.}~\bibnamefont {G{\"u}hne}}, \bibinfo {author}
  {\bibfnamefont {P.}~\bibnamefont {Hyllus}}, \bibinfo {author} {\bibfnamefont
  {D.}~\bibnamefont {Bru{\ss}}}, \bibinfo {author} {\bibfnamefont
  {M.}~\bibnamefont {Lewenstein}},\ and\ \bibinfo {author} {\bibfnamefont
  {A.}~\bibnamefont {Sanpera}},\ }\href
  {https://doi.org/10.1103/PhysRevLett.92.087902} {\bibfield  {journal}
  {\bibinfo  {journal} {Phys. Rev. Lett.}\ }\textbf {\bibinfo {volume} {92}},\
  \bibinfo {pages} {087902} (\bibinfo {year} {2004})}\BibitemShut {NoStop}%
\bibitem [{\citenamefont {Bengtsson}\ and\ \citenamefont
  {{\.Z}yczkowski}(2017)}]{bengtsson2017geometry}%
  \BibitemOpen
  \bibfield  {author} {\bibinfo {author} {\bibfnamefont {I.}~\bibnamefont
  {Bengtsson}}\ and\ \bibinfo {author} {\bibfnamefont {K.}~\bibnamefont
  {{\.Z}yczkowski}},\ }\href {https://doi.org/10.1017/CBO9780511535048} {\emph
  {\bibinfo {title} {Geometry of quantum states: an introduction to quantum
  entanglement}}}\ (\bibinfo  {publisher} {Cambridge University Press},\
  \bibinfo {year} {2017})\BibitemShut {NoStop}%
\bibitem [{\citenamefont {Renou}\ \emph {et~al.}(2019)\citenamefont {Renou},
  \citenamefont {Wang}, \citenamefont {Boreiri}, \citenamefont {Beigi},
  \citenamefont {Gisin},\ and\ \citenamefont {Brunner}}]{renou1}%
  \BibitemOpen
  \bibfield  {author} {\bibinfo {author} {\bibfnamefont {M.-O.}\ \bibnamefont
  {Renou}}, \bibinfo {author} {\bibfnamefont {Y.}~\bibnamefont {Wang}},
  \bibinfo {author} {\bibfnamefont {S.}~\bibnamefont {Boreiri}}, \bibinfo
  {author} {\bibfnamefont {S.}~\bibnamefont {Beigi}}, \bibinfo {author}
  {\bibfnamefont {N.}~\bibnamefont {Gisin}},\ and\ \bibinfo {author}
  {\bibfnamefont {N.}~\bibnamefont {Brunner}},\ }\href
  {https://doi.org/10.1103/PhysRevLett.123.070403} {\bibfield  {journal}
  {\bibinfo  {journal} {Phys. Rev. Lett.}\ }\textbf {\bibinfo {volume} {123}},\
  \bibinfo {pages} {070403} (\bibinfo {year} {2019})}\BibitemShut {NoStop}%
\bibitem [{\citenamefont {Kraft}\ \emph {et~al.}(2018)\citenamefont {Kraft},
  \citenamefont {Ritz}, \citenamefont {Brunner}, \citenamefont {Huber},\ and\
  \citenamefont {G{\"u}hne}}]{kraft2018characterizing}%
  \BibitemOpen
  \bibfield  {author} {\bibinfo {author} {\bibfnamefont {T.}~\bibnamefont
  {Kraft}}, \bibinfo {author} {\bibfnamefont {C.}~\bibnamefont {Ritz}},
  \bibinfo {author} {\bibfnamefont {N.}~\bibnamefont {Brunner}}, \bibinfo
  {author} {\bibfnamefont {M.}~\bibnamefont {Huber}},\ and\ \bibinfo {author}
  {\bibfnamefont {O.}~\bibnamefont {G{\"u}hne}},\ }\href
  {https://doi.org/10.1103/PhysRevLett.120.060502} {\bibfield  {journal}
  {\bibinfo  {journal} {Phys. Rev. Lett.}\ }\textbf {\bibinfo {volume} {120}},\
  \bibinfo {pages} {060502} (\bibinfo {year} {2018})}\BibitemShut {NoStop}%
\bibitem [{\citenamefont {Schlingemann}\ and\ \citenamefont
  {Werner}(2001)}]{schlingemann2001quantum}%
  \BibitemOpen
  \bibfield  {author} {\bibinfo {author} {\bibfnamefont {D.}~\bibnamefont
  {Schlingemann}}\ and\ \bibinfo {author} {\bibfnamefont {R.~F.}\ \bibnamefont
  {Werner}},\ }\href {https://doi.org/10.1103/PhysRevA.65.012308} {\bibfield
  {journal} {\bibinfo  {journal} {Phys. Rev. A}\ }\textbf {\bibinfo {volume}
  {65}},\ \bibinfo {pages} {012308} (\bibinfo {year} {2001})}\BibitemShut
  {NoStop}%
\bibitem [{\citenamefont {Raussendorf}\ \emph {et~al.}(2003)\citenamefont
  {Raussendorf}, \citenamefont {Browne},\ and\ \citenamefont
  {Briegel}}]{raussendorf2003measurement}%
  \BibitemOpen
  \bibfield  {author} {\bibinfo {author} {\bibfnamefont {R.}~\bibnamefont
  {Raussendorf}}, \bibinfo {author} {\bibfnamefont {D.~E.}\ \bibnamefont
  {Browne}},\ and\ \bibinfo {author} {\bibfnamefont {H.~J.}\ \bibnamefont
  {Briegel}},\ }\href {https://doi.org/10.1103/PhysRevA.68.022312} {\bibfield
  {journal} {\bibinfo  {journal} {Phys. Rev. A}\ }\textbf {\bibinfo {volume}
  {68}},\ \bibinfo {pages} {022312} (\bibinfo {year} {2003})}\BibitemShut
  {NoStop}%
\bibitem [{\citenamefont {Broadbent}\ \emph {et~al.}(2009)\citenamefont
  {Broadbent}, \citenamefont {Fitzsimons},\ and\ \citenamefont
  {Kashefi}}]{broadbent2009universal}%
  \BibitemOpen
  \bibfield  {author} {\bibinfo {author} {\bibfnamefont {A.}~\bibnamefont
  {Broadbent}}, \bibinfo {author} {\bibfnamefont {J.}~\bibnamefont
  {Fitzsimons}},\ and\ \bibinfo {author} {\bibfnamefont {E.}~\bibnamefont
  {Kashefi}},\ }in\ \href {https://doi.org/10.1109/FOCS.2009.36} {\emph
  {\bibinfo {booktitle} {2009 50th annual IEEE symposium on foundations of
  computer science}}}\ (\bibinfo {organization} {IEEE},\ \bibinfo {year}
  {2009})\ pp.\ \bibinfo {pages} {517--526}\BibitemShut {NoStop}%
\bibitem [{\citenamefont {Bravyi}\ \emph {et~al.}(2006)\citenamefont {Bravyi},
  \citenamefont {Fattal},\ and\ \citenamefont {Gottesman}}]{bravyi2006ghz}%
  \BibitemOpen
  \bibfield  {author} {\bibinfo {author} {\bibfnamefont {S.}~\bibnamefont
  {Bravyi}}, \bibinfo {author} {\bibfnamefont {D.}~\bibnamefont {Fattal}},\
  and\ \bibinfo {author} {\bibfnamefont {D.}~\bibnamefont {Gottesman}},\ }\href
  {https://doi.org/10.1063/1.2203431} {\bibfield  {journal} {\bibinfo
  {journal} {J. Math. Phys.}\ }\textbf {\bibinfo {volume} {47}} (\bibinfo
  {year} {2006})}\BibitemShut {NoStop}%
\bibitem [{\citenamefont {Linden}\ \emph {et~al.}(2002)\citenamefont {Linden},
  \citenamefont {Popescu},\ and\ \citenamefont {Wootters}}]{linden2002almost}%
  \BibitemOpen
  \bibfield  {author} {\bibinfo {author} {\bibfnamefont {N.}~\bibnamefont
  {Linden}}, \bibinfo {author} {\bibfnamefont {S.}~\bibnamefont {Popescu}},\
  and\ \bibinfo {author} {\bibfnamefont {W.}~\bibnamefont {Wootters}},\ }\href
  {https://doi.org/10.1103/PhysRevLett.89.207901} {\bibfield  {journal}
  {\bibinfo  {journal} {Phys. Rev. Lett.}\ }\textbf {\bibinfo {volume} {89}},\
  \bibinfo {pages} {207901} (\bibinfo {year} {2002})}\BibitemShut {NoStop}%
\bibitem [{\citenamefont {Markham}\ \emph {et~al.}(2007)\citenamefont
  {Markham}, \citenamefont {Miyake},\ and\ \citenamefont
  {Virmani}}]{markham2007entanglement}%
  \BibitemOpen
  \bibfield  {author} {\bibinfo {author} {\bibfnamefont {D.}~\bibnamefont
  {Markham}}, \bibinfo {author} {\bibfnamefont {A.}~\bibnamefont {Miyake}},\
  and\ \bibinfo {author} {\bibfnamefont {S.}~\bibnamefont {Virmani}},\ }\href
  {https://doi.org/10.1088/1367-2630/9/6/194} {\bibfield  {journal} {\bibinfo
  {journal} {New J. Phys.}\ }\textbf {\bibinfo {volume} {9}},\ \bibinfo {pages}
  {194} (\bibinfo {year} {2007})}\BibitemShut {NoStop}%
\bibitem [{\citenamefont {Englbrecht}\ \emph {et~al.}(2022)\citenamefont
  {Englbrecht}, \citenamefont {Kraft},\ and\ \citenamefont
  {Kraus}}]{englbrecht2022transformations}%
  \BibitemOpen
  \bibfield  {author} {\bibinfo {author} {\bibfnamefont {M.}~\bibnamefont
  {Englbrecht}}, \bibinfo {author} {\bibfnamefont {T.}~\bibnamefont {Kraft}},\
  and\ \bibinfo {author} {\bibfnamefont {B.}~\bibnamefont {Kraus}},\ }\href
  {https://doi.org/10.22331/q-2022-10-25-846} {\bibfield  {journal} {\bibinfo
  {journal} {Quantum}\ }\textbf {\bibinfo {volume} {6}},\ \bibinfo {pages}
  {846} (\bibinfo {year} {2022})}\BibitemShut {NoStop}%
\bibitem [{\citenamefont {de~Jong}\ \emph {et~al.}(2024)\citenamefont
  {de~Jong}, \citenamefont {Hahn}, \citenamefont {Tcholtchev}, \citenamefont
  {Hauswirth},\ and\ \citenamefont {Pappa}}]{de2024extracting}%
  \BibitemOpen
  \bibfield  {author} {\bibinfo {author} {\bibfnamefont {J.}~\bibnamefont
  {de~Jong}}, \bibinfo {author} {\bibfnamefont {F.}~\bibnamefont {Hahn}},
  \bibinfo {author} {\bibfnamefont {N.}~\bibnamefont {Tcholtchev}}, \bibinfo
  {author} {\bibfnamefont {M.}~\bibnamefont {Hauswirth}},\ and\ \bibinfo
  {author} {\bibfnamefont {A.}~\bibnamefont {Pappa}},\ }\href
  {https://doi.org/10.1103/PhysRevResearch.6.013330} {\bibfield  {journal}
  {\bibinfo  {journal} {Phys. Rev. Res.}\ }\textbf {\bibinfo {volume} {6}},\
  \bibinfo {pages} {013330} (\bibinfo {year} {2024})}\BibitemShut {NoStop}%
\bibitem [{\citenamefont {Collins}\ \emph {et~al.}(2007)\citenamefont
  {Collins}, \citenamefont {Jenkins}, \citenamefont {Kuzmich},\ and\
  \citenamefont {Kennedy}}]{collins2007multiplexed}%
  \BibitemOpen
  \bibfield  {author} {\bibinfo {author} {\bibfnamefont {O.}~\bibnamefont
  {Collins}}, \bibinfo {author} {\bibfnamefont {S.}~\bibnamefont {Jenkins}},
  \bibinfo {author} {\bibfnamefont {A.}~\bibnamefont {Kuzmich}},\ and\ \bibinfo
  {author} {\bibfnamefont {T.}~\bibnamefont {Kennedy}},\ }\href
  {https://doi.org/10.1103/PhysRevLett.98.060502} {\bibfield  {journal}
  {\bibinfo  {journal} {Phys. Rev. Lett.}\ }\textbf {\bibinfo {volume} {98}},\
  \bibinfo {pages} {060502} (\bibinfo {year} {2007})}\BibitemShut {NoStop}%
\bibitem [{\citenamefont {Shchukin}\ \emph {et~al.}(2019)\citenamefont
  {Shchukin}, \citenamefont {Schmidt},\ and\ \citenamefont {van
  Loock}}]{shchukin2019waiting}%
  \BibitemOpen
  \bibfield  {author} {\bibinfo {author} {\bibfnamefont {E.}~\bibnamefont
  {Shchukin}}, \bibinfo {author} {\bibfnamefont {F.}~\bibnamefont {Schmidt}},\
  and\ \bibinfo {author} {\bibfnamefont {P.}~\bibnamefont {van Loock}},\ }\href
  {https://doi.org/10.1103/PhysRevA.100.032322} {\bibfield  {journal} {\bibinfo
   {journal} {Phys. Rev. A}\ }\textbf {\bibinfo {volume} {100}},\ \bibinfo
  {pages} {032322} (\bibinfo {year} {2019})}\BibitemShut {NoStop}%
\bibitem [{\citenamefont {Weinbrenner}\ \emph {et~al.}(2024)\citenamefont
  {Weinbrenner}, \citenamefont {Vandr{\'e}}, \citenamefont {Coopmans},\ and\
  \citenamefont {G{\"u}hne}}]{weinbrenner2024aging}%
  \BibitemOpen
  \bibfield  {author} {\bibinfo {author} {\bibfnamefont {L.~T.}\ \bibnamefont
  {Weinbrenner}}, \bibinfo {author} {\bibfnamefont {L.}~\bibnamefont
  {Vandr{\'e}}}, \bibinfo {author} {\bibfnamefont {T.}~\bibnamefont
  {Coopmans}},\ and\ \bibinfo {author} {\bibfnamefont {O.}~\bibnamefont
  {G{\"u}hne}},\ }\href {https://doi.org/10.1103/PhysRevA.109.052611}
  {\bibfield  {journal} {\bibinfo  {journal} {Phys. Rev. A}\ }\textbf {\bibinfo
  {volume} {109}},\ \bibinfo {pages} {052611} (\bibinfo {year}
  {2024})}\BibitemShut {NoStop}%
\bibitem [{\citenamefont {D{\"u}r}\ and\ \citenamefont
  {Briegel}(2007)}]{dur2007entanglement}%
  \BibitemOpen
  \bibfield  {author} {\bibinfo {author} {\bibfnamefont {W.}~\bibnamefont
  {D{\"u}r}}\ and\ \bibinfo {author} {\bibfnamefont {H.~J.}\ \bibnamefont
  {Briegel}},\ }\href {https://doi.org/10.1088/0034-4885/70/8/R03} {\bibfield
  {journal} {\bibinfo  {journal} {Rep. Prog. Phys.}\ }\textbf {\bibinfo
  {volume} {70}},\ \bibinfo {pages} {1381} (\bibinfo {year}
  {2007})}\BibitemShut {NoStop}%
\bibitem [{\citenamefont {Rozpędek}\ \emph {et~al.}(2018)\citenamefont
  {Rozpędek}, \citenamefont {Schiet}, \citenamefont {Thinh}, \citenamefont
  {Elkouss}, \citenamefont {Doherty},\ and\ \citenamefont
  {Wehner}}]{rozpkedek2018optimizing}%
  \BibitemOpen
  \bibfield  {author} {\bibinfo {author} {\bibfnamefont {F.}~\bibnamefont
  {Rozpędek}}, \bibinfo {author} {\bibfnamefont {T.}~\bibnamefont {Schiet}},
  \bibinfo {author} {\bibfnamefont {L.~P.}\ \bibnamefont {Thinh}}, \bibinfo
  {author} {\bibfnamefont {D.}~\bibnamefont {Elkouss}}, \bibinfo {author}
  {\bibfnamefont {A.~C.}\ \bibnamefont {Doherty}},\ and\ \bibinfo {author}
  {\bibfnamefont {S.}~\bibnamefont {Wehner}},\ }\href
  {https://doi.org/10.1103/PhysRevA.97.062333} {\bibfield  {journal} {\bibinfo
  {journal} {Phys. Rev. A}\ }\textbf {\bibinfo {volume} {97}},\ \bibinfo
  {pages} {062333} (\bibinfo {year} {2018})}\BibitemShut {NoStop}%
\bibitem [{\citenamefont {Lockhart}\ \emph {et~al.}(2018)\citenamefont
  {Lockhart}, \citenamefont {G{\"u}hne},\ and\ \citenamefont
  {Severini}}]{lockhart2018entanglement}%
  \BibitemOpen
  \bibfield  {author} {\bibinfo {author} {\bibfnamefont {J.}~\bibnamefont
  {Lockhart}}, \bibinfo {author} {\bibfnamefont {O.}~\bibnamefont
  {G{\"u}hne}},\ and\ \bibinfo {author} {\bibfnamefont {S.}~\bibnamefont
  {Severini}},\ }\href {https://doi.org/10.1103/PhysRevA.97.062340} {\bibfield
  {journal} {\bibinfo  {journal} {Phys. Rev. A}\ }\textbf {\bibinfo {volume}
  {97}},\ \bibinfo {pages} {062340} (\bibinfo {year} {2018})}\BibitemShut
  {NoStop}%
\bibitem [{\citenamefont {Ghimire}\ \emph {et~al.}(2023)\citenamefont
  {Ghimire}, \citenamefont {Wagner}, \citenamefont {Kampermann},\ and\
  \citenamefont {Bru{\ss}}}]{ghimire2023quantum}%
  \BibitemOpen
  \bibfield  {author} {\bibinfo {author} {\bibfnamefont {B.}~\bibnamefont
  {Ghimire}}, \bibinfo {author} {\bibfnamefont {T.}~\bibnamefont {Wagner}},
  \bibinfo {author} {\bibfnamefont {H.}~\bibnamefont {Kampermann}},\ and\
  \bibinfo {author} {\bibfnamefont {D.}~\bibnamefont {Bru{\ss}}},\ }\href
  {https://doi.org/10.1103/PhysRevA.107.042425} {\bibfield  {journal} {\bibinfo
   {journal} {Phys. Rev. A}\ }\textbf {\bibinfo {volume} {107}},\ \bibinfo
  {pages} {042425} (\bibinfo {year} {2023})}\BibitemShut {NoStop}%
\bibitem [{\citenamefont {Krebs}\ and\ \citenamefont
  {Gachechiladze}(2024)}]{krebs2024high}%
  \BibitemOpen
  \bibfield  {author} {\bibinfo {author} {\bibfnamefont {R.}~\bibnamefont
  {Krebs}}\ and\ \bibinfo {author} {\bibfnamefont {M.}~\bibnamefont
  {Gachechiladze}},\ }\href {https://doi.org/10.1103/PhysRevLett.132.220203}
  {\bibfield  {journal} {\bibinfo  {journal} {Phys. Rev. Lett.}\ }\textbf
  {\bibinfo {volume} {132}},\ \bibinfo {pages} {220203} (\bibinfo {year}
  {2024})}\BibitemShut {NoStop}%
\bibitem [{\citenamefont {Zhang}\ \emph {et~al.}(2016)\citenamefont {Zhang},
  \citenamefont {Huang}, \citenamefont {Liu}, \citenamefont {Li},\ and\
  \citenamefont {Guo}}]{zhang2016experimental}%
  \BibitemOpen
  \bibfield  {author} {\bibinfo {author} {\bibfnamefont {C.}~\bibnamefont
  {Zhang}}, \bibinfo {author} {\bibfnamefont {Y.-F.}\ \bibnamefont {Huang}},
  \bibinfo {author} {\bibfnamefont {B.-H.}\ \bibnamefont {Liu}}, \bibinfo
  {author} {\bibfnamefont {C.-F.}\ \bibnamefont {Li}},\ and\ \bibinfo {author}
  {\bibfnamefont {G.-C.}\ \bibnamefont {Guo}},\ }\href
  {https://doi.org/10.1103/PhysRevA.93.062329} {\bibfield  {journal} {\bibinfo
  {journal} {Phys. Rev. A}\ }\textbf {\bibinfo {volume} {93}},\ \bibinfo
  {pages} {062329} (\bibinfo {year} {2016})}\BibitemShut {NoStop}%
\bibitem [{\citenamefont {Van~den Nest}\ \emph
  {et~al.}(2004{\natexlab{a}})\citenamefont {Van~den Nest}, \citenamefont
  {Dehaene},\ and\ \citenamefont {De~Moor}}]{van2004graphical}%
  \BibitemOpen
  \bibfield  {author} {\bibinfo {author} {\bibfnamefont {M.}~\bibnamefont
  {Van~den Nest}}, \bibinfo {author} {\bibfnamefont {J.}~\bibnamefont
  {Dehaene}},\ and\ \bibinfo {author} {\bibfnamefont {B.}~\bibnamefont
  {De~Moor}},\ }\href {https://doi.org/10.1103/PhysRevA.69.022316} {\bibfield
  {journal} {\bibinfo  {journal} {Phys. Rev. A}\ }\textbf {\bibinfo {volume}
  {69}},\ \bibinfo {pages} {022316} (\bibinfo {year}
  {2004}{\natexlab{a}})}\BibitemShut {NoStop}%
\bibitem [{\citenamefont {Van~den Nest}\ \emph
  {et~al.}(2004{\natexlab{b}})\citenamefont {Van~den Nest}, \citenamefont
  {Dehaene},\ and\ \citenamefont {De~Moor}}]{van2004efficient}%
  \BibitemOpen
  \bibfield  {author} {\bibinfo {author} {\bibfnamefont {M.}~\bibnamefont
  {Van~den Nest}}, \bibinfo {author} {\bibfnamefont {J.}~\bibnamefont
  {Dehaene}},\ and\ \bibinfo {author} {\bibfnamefont {B.}~\bibnamefont
  {De~Moor}},\ }\href {https://doi.org/10.1103/PhysRevA.70.034302} {\bibfield
  {journal} {\bibinfo  {journal} {Phys. Rev. A}\ }\textbf {\bibinfo {volume}
  {70}},\ \bibinfo {pages} {034302} (\bibinfo {year}
  {2004}{\natexlab{b}})}\BibitemShut {NoStop}%
\bibitem [{\citenamefont {Smith}\ \emph {et~al.}(2025)\citenamefont {Smith},
  \citenamefont {Wolfe},\ and\ \citenamefont {Spekkens}}]{smith2025fully}%
  \BibitemOpen
  \bibfield  {author} {\bibinfo {author} {\bibfnamefont {I.~D.}\ \bibnamefont
  {Smith}}, \bibinfo {author} {\bibfnamefont {E.}~\bibnamefont {Wolfe}},\ and\
  \bibinfo {author} {\bibfnamefont {R.~W.}\ \bibnamefont {Spekkens}},\ }\href
  {https://arxiv.org/abs/2501.12320} {\bibfield  {journal} {\bibinfo  {journal}
  {arXiv preprint arXiv:2501.12320}\ } (\bibinfo {year} {2025})}\BibitemShut
  {NoStop}%
\bibitem [{\citenamefont {Gisin}\ \emph {et~al.}(2020)\citenamefont {Gisin},
  \citenamefont {Bancal}, \citenamefont {Cai}, \citenamefont {Remy},
  \citenamefont {Tavakoli}, \citenamefont {Zambrini~Cruzeiro}, \citenamefont
  {Popescu},\ and\ \citenamefont {Brunner}}]{gisin2020constraints}%
  \BibitemOpen
  \bibfield  {author} {\bibinfo {author} {\bibfnamefont {N.}~\bibnamefont
  {Gisin}}, \bibinfo {author} {\bibfnamefont {J.-D.}\ \bibnamefont {Bancal}},
  \bibinfo {author} {\bibfnamefont {Y.}~\bibnamefont {Cai}}, \bibinfo {author}
  {\bibfnamefont {P.}~\bibnamefont {Remy}}, \bibinfo {author} {\bibfnamefont
  {A.}~\bibnamefont {Tavakoli}}, \bibinfo {author} {\bibfnamefont
  {E.}~\bibnamefont {Zambrini~Cruzeiro}}, \bibinfo {author} {\bibfnamefont
  {S.}~\bibnamefont {Popescu}},\ and\ \bibinfo {author} {\bibfnamefont
  {N.}~\bibnamefont {Brunner}},\ }\href
  {https://doi.org/10.1038/s41467-020-16137-4} {\bibfield  {journal} {\bibinfo
  {journal} {Nat. Commun.}\ }\textbf {\bibinfo {volume} {11}},\ \bibinfo
  {pages} {2378} (\bibinfo {year} {2020})}\BibitemShut {NoStop}%
\bibitem [{\citenamefont {Nuradha}\ \emph {et~al.}(2024)\citenamefont
  {Nuradha}, \citenamefont {Mishra}, \citenamefont {Leditzky},\ and\
  \citenamefont {Wilde}}]{wilde1}%
  \BibitemOpen
  \bibfield  {author} {\bibinfo {author} {\bibfnamefont {T.}~\bibnamefont
  {Nuradha}}, \bibinfo {author} {\bibfnamefont {H.~K.}\ \bibnamefont {Mishra}},
  \bibinfo {author} {\bibfnamefont {F.}~\bibnamefont {Leditzky}},\ and\
  \bibinfo {author} {\bibfnamefont {M.~M.}\ \bibnamefont {Wilde}},\ }\href
  {https://arxiv.org/abs/2404.16101} {\bibfield  {journal} {\bibinfo  {journal}
  {arXiv preprint arXiv:2404.16101}\ } (\bibinfo {year} {2024})}\BibitemShut
  {NoStop}%
\end{thebibliography}%

\clearpage
\onecolumngrid

\appendix

\pagestyle{fancy}
\fancyhf{}
\fancyhead[R]{\thepage}
\fancyhead[L]{\leftmark}

\section{Protocols to achieve high GHZ fidelity} \label{app:constructions}

\begin{figure}[b]
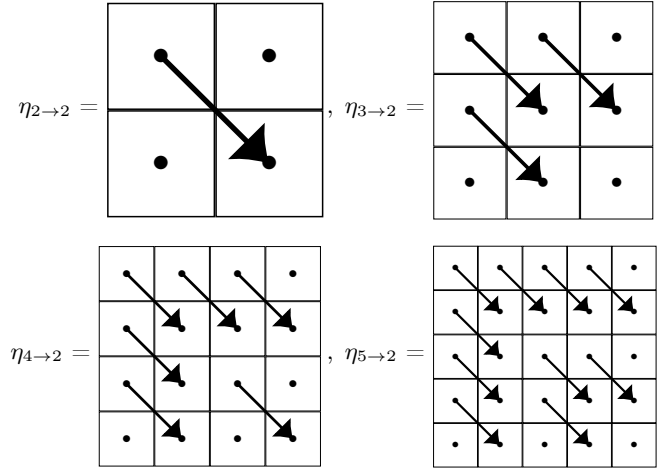

    \centering
    \begin{subfloat}
    
        $\eta_{2\rightarrow2}$ = \raisebox{0.75cm}{\resizebox{3cm}{!}{\etatwotwo}} $\!\!\!\!,\;\eta_{3\rightarrow2}$ = \raisebox{1cm}{\resizebox{3cm}{!}{\etathreetwo}}
    \end{subfloat}
    \\
    \phantom{Igel}
    \\
    \begin{subfloat}
    
        $\eta_{4\rightarrow2}$ = \raisebox{1.125cm}{\resizebox{3cm}{!}{\etafourtwo}} $\!\!,\;\eta_{5\rightarrow2}$ = \raisebox{1.2cm}{\resizebox{3cm}{!}{\etafivetwo}}
    \end{subfloat}
    
    \caption{Visual grid representations of the Choi matrices $\eta_{\din\rightarrow2}$ defined in Eqs.~\eqref{eq:etadim2simple} and \eqref{eq:etadim2} that yield the fidelities with $\ket{\GHZ_3}$ reported in Table~\ref{tab:fidelities_din2}. Each arrow connecting cell $(i,j)$ to cell $(k,l)$ corresponds to a term $\tau_{(i,j),(k,l)}$ as defined in Eq.~\eqref{eq:taudef}, whereas a single dot in an unconnected cell $(m,n)$ corresponds to a term $\tau_{(m,n)}$. Note that every cell has to be covered by exactly one arrow or dot in order to obtain a trace preserving channel. }
    \label{fig:optimaletas}
\end{figure}

In this section, we present our construction for specific channels $\EE_A$, $\EE_B$, $\EE_C$ and distributed states $\rho_{ab'}$, $\rho_{bc'}$, $\rho_{ca'}$ that yield network states $\rho$ according to Eq.~\eqref{eq:losrstate} with large $\ket{\GHZ_3}$-fidelities. Throughout this section, we denote the local dimension of the distributed states by $\din$ and the dimension of the output of the channels by $\dout$.
As the target state is pure, the fidelity is a linear function and the maximal fidelity can be reached by a network state that does not make use of shared randomness. Therefore, we only have to find single channels and source states.

Using the Choi-Jamiołkowski isomorphism, the action of a channel $\EE_A$ from input systems $aa'$ to an output $\alpha$ can be written as
\begin{align}
\EE_A(\rho_{aa'}) = \trace_{aa'}[\eta_A (\rho_{aa'}^T \otimes \one_\alpha)],
\end{align}
where
\begin{align}
    \eta_A = \sum_{i,j,k,l=0}^{\din-1} \ketbraa{ij}{kl}_{aa'} \otimes \EE_A(\ketbraa{ij}{kl})_\alpha
\end{align}
where $\eta_A$ is the so called Choi matrix. The fact that $\mathcal{E}_A$ is a completely positive, trace preserving map translates into constraints on the Choi matrix $\eta_A$, namely 
\begin{align}
    \eta_A &\geq 0, \label{eq:choiconst1}\\ \trace_\alpha(\eta_A) &= \one_{aa'} \label{eq:choiconst2},
\end{align}
thus, $\trace(\eta_A) = \din^2$.

Using the Choi matrices, we can write the output state $\rho_{\alpha\beta\gamma}$ of the network as
\begin{align} \label{eq:choiitn}
    \rho_{\alpha\beta\gamma} &= \mathcal{E}_A \otimes \mathcal{E}_B \otimes \mathcal{E}_C(\rho_{ab'} \otimes \rho_{bc'} \otimes \rho_{ca'})\nonumber  \\
    &= \trace_{\substack{aa'\\bb'\\cc'}}[(\eta_A\otimes \eta_B \otimes \eta_C)(\rho_{ab'}^T \otimes \rho_{bc'}^T \otimes \rho_{ca'}^T \otimes \one_{\alpha\beta\gamma}) ].
\end{align}
Note that the order of systems in $\eta_A\otimes \eta_B \otimes \eta_C$ is different from the order of systems in $\rho_{ab'}^T \otimes \rho_{bc'}^T \otimes \rho_{ca'}^T \otimes \one_{\alpha\beta\gamma}$.

\subsection{Grid channels}
\label{sec:gridtype}

Let us introduce a framework to represent the constructed Choi matrices in a graphical way using a grid. We call the resulting maps of this construction \emph{grid channels}, as their definition shares some similarities with the notion of grid states \cite{lockhart2018entanglement, ghimire2023quantum, krebs2024high}. 

In order to form a grid channel, we represent its Choi matrix by a $\din \times \din$-dimensional grid as visualized in Fig.~\ref{fig:optimaletas}. We then connect the cells of the grid by arrows of different length. Each arrow of length $k$ connecting cells $(i_0,j_0), \ldots, (i_{k-1},j_{k-1})$ translates into a contribution of 
\begin{align} \label{eq:taudef}
    \tau_{(i_0,j_0), \ldots, (i_{k-1},j_{k-1})} \coloneq  \ketbra{T_{(i_0,j_0), \ldots, (i_{k-1},j_{k-1})}}
\end{align} to the Choi matrix, where
\begin{align} \label{eq:Tdef}
    \ket{T_{(i_0,j_0), \ldots, (i_{k-1},j_{k-1})}} \coloneq  \sum_{l=0}^{k-1} \ket{i_l, j_l, l}.
\end{align}
Here, the first two systems correspond to the input, whereas the third system corresponds to the output of the channel. This indicates that the maximal length of an arrow is given by the output dimension $\dout$. We do allow for arrows of length one, which are represented by a single dot.
This construction inherently yields a positive semidefinite Choi matrix, fulfilling constraint~\eqref{eq:choiconst1}. The second constraint~\eqref{eq:choiconst2}, i.e.~trace preservation, is ensured by covering each cell on the grid by exactly one arrow. 

As an example, consider the grid in Fig.~\ref{fig:optimaletas} for $\din = \dout = 2$. It represents the Choi matrix
\begin{align} \label{eq:etadim2simple}
    \eta_{2\rightarrow 2} = (\ket{000} + \ket{111})(\bra{000}+\bra{111}) 
    + \ketbra{010} + \ketbra{100}.
\end{align}

\subsection{Constructions for $\dout = 2$}
\label{sec:constd2}

Let us provide a construction of channels for
arbitrary input dimensions and fixed output dimension $\dout=2$. In Fig.~\ref{fig:optimaletas}, we show the grid representation of these channels for  $\din = 2,3,4$ and $5$. However, we can generalize the construction to arbitrary input dimension via
\begin{align}\label{eq:etadim2}
    \eta_{\din\rightarrow2} = \sum_{r=0}^{\lfloor\frac{\din}{2} \rfloor -1} \big[\tau_{(2r,2r),(2r+1,2r+1)} +  
    \sum_{k=1}^{\din-2-2r}\!\!\!\!\!\! (\tau_{(2r+k,2r),(2r+k+1,2r+1)} + \tau_{(2r,2r+k),(2r+1,2r+k+1)})\big]
\end{align}
In the next Appendix~\ref{app:fiddout2}, we analytically calculate the GHZ-fidelity that is achieved by choosing these grid-type channels as a function of the Schmidt coefficients of the input states. Here, we summarize the results of this calculation.
To that end, we use Eq.~\eqref{eq:choiitn} to express the fidelity for a specific choice of $\eta_A=\eta_B=\eta_C \equiv \eta$ as
\begin{align}
    F_{\eta} \coloneq   \trace[\eta^{\otimes 3}(\rho_{ab'}^T \otimes \rho_{bc'}^T \otimes \rho_{ca'}^T \otimes \ketbra{\GHZ_3}_{\alpha\beta\gamma}) ].
\end{align}
It turns out that $F_{\eta}$ can be calculated by counting cycles of length $3$ in a graph induced by the choice of grid for the Choi matrix. In fact, a guiding principle in choosing specifically the grids presented in Fig.~\ref{fig:optimaletas} is to maximize this number of cycles, as each cycle contributes to the total fidelity.

We present a recursive expression for all input dimensions and input states $\ket{\psi}_{ab'} = \ket{\psi}_{bc'} = \ket{\psi}_{ca'} = \sum_{i=0}^{\din-1} \lambda_i \ket{ii}$ in Theorem~\ref{thm:fidelitydin2} in Appendix~\ref{app:fiddout2dingr2}. For $\din=2$, the fidelity reads
\begin{align}\label{eq:fidelity_schmidt_22}
    F_{\eta_{2\rightarrow2}} = \frac12\left(\lambda_0^6+\lambda_1^6+ 3\lambda_0^4\lambda_1^2 + 2\lambda_0^3 \lambda _1^3\right).
\end{align}
Its maximal value is achieved for $\lambda_0^2 = 19/39 + 7/39 (1/z^{1/3} + z^{1/3}) \approx 0.8346811596$ with $z=71/98 + 39\sqrt{3}/98\cdot i$ and evaluates to $[5+4\cos(2\pi/7)]/[12+4\cos(2\pi/7)] \approx 0.5170401751$. 
This matches the reported value in Ref.~\cite{navascues2020genuine}. Furthermore, we checked its optimality by using a seesaw-based optimization technique using semidefinite programming, as simultaneous optimization over states and channels is not convex and thus inefficient.
We thus fix five of the six matrices (three input states and three Choi matrices) and optimize only one at a time. This allows us to obtain quickly converging achievable fidelities for small input and output dimensions and confirms that our construction is likely optimal.

It is quite remarkable that in order to achieve the above-mentioned  value one has to distribute weakly entangled states. In fact, fixing the input states to maximally entangled Bell pairs, we could not find any channel achieving a $\ket{\GHZ_3}$-fidelity larger than $1/2$ (recall that only fidelities larger than $1/2$ are non-trivial, as the product state $\ket{000}$ yields already a fidelity of $1/2$).
This vaguely resembles effects observed in preparation schemes for four-partite cluster states, where less entangled bipartite resource states lead to higher success probabilities \cite{zhang2016experimental}.

Also, using larger input dimensions yields an advantage: In Table~\ref{tab:fidelities_din2}, we list the achievable $\ket{\GHZ_3}$-fidelity for different input dimensions. It is apparent that each additional input dimension yields an advantage. However, the fidelity quickly converges to a value slightly larger than $0.548047$, thereby providing a proof of the lower bound stated in Result~\ref{res:GHZ32bound}.

We give the explicit form of the resulting three-qubit state achieving this fidelity in Eq.~\eqref{eq:rhod2} in Appendix~\ref{app:fiddout2dingr2}.
We conjecture that this is the highest $\ket{\GHZ_3}$-fidelity achievable in triangle LOSR networks with two-dimensional output. Furthermore, the reported values in Table~\ref{tab:fidelities_din2} are matched by our numerical seesaw method up to input dimension $\din = 5$ (we could not check larger dimensions). 

\subsection{General fidelity expression for $\dout = 2$}\label{app:fiddout2}

Let us complete the above argument by calculating an analytical expression for the obtained $\ket{\GHZ_3}$-fidelity with $\ket{\GHZ_3}$ in the triangle network with $\dout = 2$ using the channels presented in Eq.~\eqref{eq:etadim2}. This then provides a proof for the lower bound reported in Result~\ref{res:GHZ32bound} from the main text.

\subsubsection{Fidelity for $\din = 2$}
\label{app:fiddout2din2}

It is illustrative to start with the simplest case of $\din = 2$ where
\begin{align} \label{eq:eta22_appendix}
    \eta_{2\rightarrow 2} = (\ket{000} + \ket{111})(\bra{000}+\bra{111}) + \ketbra{010} + \ketbra{100}.
\end{align}
Note that in this expression, the first two systems correspond to the inputs $a$ and $a'$, whereas the last system corresponds to the output $\alpha$. 
We start by rearranging the expression by pulling the output system to the front and stacking the input system on top of each other, i.e., $\ketbrau{i}jkl_a^{a'} \coloneq  \ketbraa{i}{k}_{a'} \otimes \ketbraa{j}{l}_{a}$. Thus, for system $A$ we get
\begin{align}
    \eta_{2\rightarrow 2} = \ketbraa{0}{0}_\alpha \otimes \left( \ketbrau0000_{a}^{a'} + \ketbrau0101_{a}^{a'} + \ketbrau1010_{a}^{a'}\right) \;+\; \ketbraa{0}{1}_\alpha \otimes \ketbrau0011_{a}^{a'}  \;+\;  \ketbraa{1}{0}_\alpha \otimes \ketbrau1100_{a}^{a'}  \;+\;  \ketbraa{1}{1}_\alpha \otimes \ketbrau1111_{a}^{a'}.
\end{align}
Likewise expanding systems $B$ and $C$ yields
\begin{alignat}{5}\label{eq:eta22expanded}
     \eta_{2\rightarrow 2}^{\otimes 3} = &\big[ \ketbraa{0}{0}_\alpha \otimes ( \ketbrau0000_{a}^{a'} + \ketbrau0101_{a}^{a'} + \ketbrau1010_{a}^{a'}) &\;+\; \ketbraa{0}{1}_\alpha \otimes \ketbrau0011_{a}^{a'}  &\;+\;  \ketbraa{1}{0}_\alpha \otimes \ketbrau1100_{a}^{a'}  &\;+\;  \ketbraa{1}{1}_\alpha \otimes \ketbrau1111_{a}^{a'} \big] &\nonumber \\
     \otimes & \big[\ketbraa{0}{0}_\beta \otimes ( \ketbrau0000_{b}^{b'} + \ketbrau0101_{b}^{b'} + \ketbrau1010_{b}^{b'}) &\;+\; \ketbraa{0}{1}_\beta \otimes \ketbrau0011_{b}^{b'}  &\;+\;  \ketbraa{1}{0}_\beta \otimes \ketbrau1100_{b}^{b'}  &\;+\;  \ketbraa{1}{1}_\beta \otimes \ketbrau1111_{b}^{b'}\big] &\nonumber \\
     \otimes & \big[\ketbraa{0}{0}_\gamma \otimes ( \ketbrau0000_{c}^{c'} + \ketbrau0101_{c}^{c'} + \ketbrau1010_{c}^{c'}) &\;+\; \ketbraa{0}{1}_\gamma \otimes \ketbrau0011_{c}^{c'}  &\;+\;  \ketbraa{1}{0}_\gamma \otimes \ketbrau1100_{c}^{c'}  &\;+\;  \ketbraa{1}{1}_\gamma \otimes \ketbrau1111_{c}^{c'}\big] &.
\end{alignat}
The input states are chosen to be given in the Schmidt diagonal form
\begin{align}
    \ketbra{\psi}^{\otimes 3} = \ketbra{\psi}^a_{b'} \otimes \ketbra{\psi}^b_{c'} \otimes \ketbra{\psi}^c_{a'}    
\end{align}
with
$\ket{\psi}^a_{b'} = \lambda_0 \ketu{0}{0}_{b'}^a + \lambda_1 \ketu{1}{1}_{b'}^a$, where the Schmidt coefficients $\lambda_i$ are real valued, and we will be taking the expectation value of the output with the GHZ state $\ket{\GHZ_3} = \frac1{\sqrt2}(\ket{000}_{\alpha\beta\gamma} + \ket{111}_{\alpha\beta\gamma})$ to form the fidelity
\begin{align}
    F_{\eta_{2\rightarrow2}} = \bra{\GHZ_3} \eta_{2\rightarrow 2}^{\otimes 3} \ketbra{\psi}^{\otimes 3} \ket{\GHZ_3} .
\end{align}
Note that we can omit the transposition of the input states as they are chosen to be real-valued. Due to the symmetries of the GHZ state, when considering the expansion of the expression in Eq.~\eqref{eq:eta22expanded}, only terms survive which have matching entries in the $\alpha$, $\beta$ and $\gamma$ components. In other words, we are only allowed to expand along the four columns in that expression.

\begin{figure}[t]
    \centering
    \resizebox{4cm}{!}{
    \begin{tikzpicture}[->]
        \node (0000) at (0,0){\huge$\ketbrau0000$};
        \node (0101) at (3,0) {\huge$\ketbrau0101$};
        \node (1010) at (0,-2.5){\huge$\ketbrau1010$};
    
        \path (0000) edge [loop above] node {$\lambda_0^2$} (0000);
        \path (0000) edge node[above] {$\lambda_0^2$} (0101);
        \path (0101) edge node[above,left,yshift=5pt] {$\lambda_1^2$} (1010);
        \path (1010.east) edge node[below,right,yshift=-5pt] {$\lambda_0^2$} (0101.south);
        \path (1010) edge node[left] {$\lambda_0^2$} (0000);
    \end{tikzpicture}
    }\quad\quad
    \raisebox{1.5cm}{
    \begin{tabular}{c|c|c}
        Cycle & Multiplicity & Contribution \\
        \hline 
        $\ketu00\!\!\tikzmark{a}\!\brau00 \rightarrow \ketbrau0000 \rightarrow \ketu00\!\tikzmark{b}\!\!\brau00
        \tikz[overlay,remember picture]
        {\draw[->,square arrow] (b.south) to (a.south);}$ & $1$ & $\lambda_0^6$ \vphantom{$\substack{1\\1\\1\\1\\1\\1}$} \\
        \hline
        $\ketu00\!\!\tikzmark{a}\!\brau00 \rightarrow \ketbrau0101 \rightarrow \ketu10\!\tikzmark{b}\!\!\brau10
        \tikz[overlay,remember picture]
        {\draw[->,square arrow] (b.south) to (a.south);}$ & $3$ & $\lambda_0^4\lambda_1^2$\vphantom{$\substack{1\\1\\1\\1\\1\\1}$}
    \end{tabular}
    }

    \caption{Left: The directed graph used to compute the contribution in Eq.~\eqref{eq:cont000_b} to the $\ket{\GHZ_3}$-fidelity yielded by the construction using the Choi matrices from Eq.~\eqref{eq:eta22_appendix}. Right: The cycles of size three of the graph on the left. The contribution to the $\ket{\GHZ_3}$-fidelity is determined by those cycles, where each of them contributes the product of its edge labels. } 
    \label{fig:graph_22}
\end{figure}
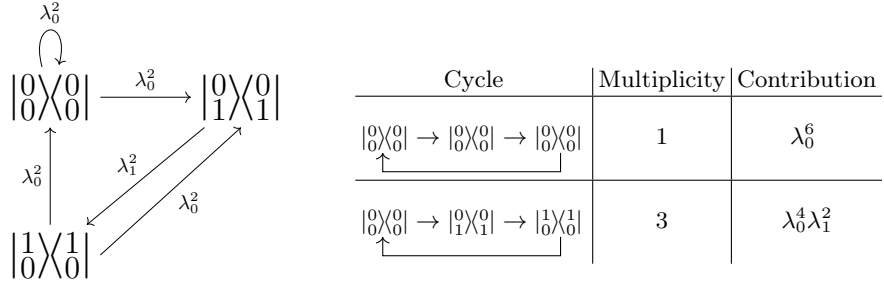

We will consider the four contributions to the fidelity individually, starting with 

\begin{align}\label{eq:cont000}
    \bra{000}_{\alpha\beta\gamma} \eta_{2\rightarrow 2}^{\otimes 3} \ketbra{\psi}^{\otimes 3} \ket{000}_{\alpha\beta\gamma} = 
    \begin{array}{c}
        \\
        \\
        \bra{\psi}^a_{b'} \\
        \otimes \\
        \bra{\psi}^b_{c'} \\
        \otimes \\
        \bra{\psi}^c_{a'}
    \end{array}
    \begin{pmatrix}
        (\ketbrau0000_{a}^{a'} + \ketbrau0101_{a}^{a'} + \ketbrau1010_{a}^{a'}) \\
        \otimes \\
        (\ketbrau0000_{b}^{b'} + \ketbrau0101_{b}^{b'} + \ketbrau1010_{b}^{b'}) \\
        \otimes \\
        (\ketbrau0000_{c}^{c'} + \ketbrau0101_{c}^{c'} + \ketbrau1010_{c}^{c'})
    \end{pmatrix}
    \begin{array}{c}
        \\
        \\
        \ket{\psi}^a_{b'} \\
        \otimes \\
        \ket{\psi}^b_{c'} \\
        \otimes \\
        \ket{\psi}^c_{a'}
    \end{array}.
\end{align}
We have offset the contraction with the input states to indicate that they connect between the rows of the expansion of the Choi matrices. Now, due to the symmetries of the Schmidt diagonal $\ket{\psi} = \lambda_0 \ketu00 + \lambda_1 \ketu11$, when expanding Eq.~\eqref{eq:cont000}, only terms survive where we select terms from the Choi matrix where the entries of the $a$ input and the $b'$ input match, and likewise for $b$ and $c'$, and $c$ and $a'$. This can be represented as a directed graph indicating allowed contractions as displayed in Fig.~\ref{fig:graph_22}, where we label the edges with the corresponding $\lambda_i$ factor. Now, the complete contribution of $\bra{000}_{\alpha\beta\gamma} \eta_{2\rightarrow 2}^{\otimes 3} \ketbra{\psi}^{\otimes 3} \ket{000}_{\alpha\beta\gamma}$ corresponds to the sum of products of edge labels of all cycles of length three in that graph. For the graph at hand, the three-cycles and the corresponding factors are indicated in the table on the right-hand side of Fig.~\ref{fig:graph_22}. Note that the second cycle comes with a factor of three as it can be started at any of its nodes. Counting these contributions, we get
\begin{align}\label{eq:cont000_b}
    \bra{000}_{\alpha\beta\gamma} \eta_{2\rightarrow 2}^{\otimes 3} \ketbra{\psi}^{\otimes 3} \ket{000}_{\alpha\beta\gamma} = \lambda_0^6 + 3\lambda_0^4\lambda_1^2.
\end{align}

Next, we consider the contribution
\begin{align}\label{eq:cont111}
    \bra{111}_{\alpha\beta\gamma} \eta_{2\rightarrow 2}^{\otimes 3} \ketbra{\psi}^{\otimes 3} \ket{111}_{\alpha\beta\gamma} = 
    \begin{array}{c}
        \\
        \\
        \bra{\psi}^a_{b'} \\
        \otimes \\
        \bra{\psi}^b_{c'} \\
        \otimes \\
        \bra{\psi}^c_{a'}
    \end{array}
    \begin{pmatrix}
        \ketbrau1111_{a}^{a'} \\
        \otimes \\
        \ketbrau1111_{b}^{b'} \\
        \otimes \\
        \ketbrau1111_{c}^{c'}
    \end{pmatrix}
    \begin{array}{c}
        \\
        \\
        \ket{\psi}^a_{b'} \\
        \otimes \\
        \ket{\psi}^b_{c'} \\
        \otimes \\
        \ket{\psi}^c_{a'}
    \end{array}.
\end{align}
The corresponding trivial graph consists of only one node connected to itself, yielding the contribution of 
\begin{align}\label{eq:cont111_b}
    \bra{111}_{\alpha\beta\gamma} \eta_{2\rightarrow 2}^{\otimes 3} \ketbra{\psi}^{\otimes 3} \ket{111}_{\alpha\beta\gamma} = \lambda_1^6.
\end{align}

Likewise, for $\bra{000}_{\alpha\beta\gamma} \eta_{2\rightarrow 2}^{\otimes 3} \ketbra{\psi}^{\otimes 3} \ket{111}_{\alpha\beta\gamma}$ and $\bra{111}_{\alpha\beta\gamma} \eta_{2\rightarrow 2}^{\otimes 3} \ketbra{\psi}^{\otimes 3} \ket{000}_{\alpha\beta\gamma}$, we get a contribution of $\lambda_0^3\lambda_1^3$ each. Summing all contributions to form the fidelity, we get the expression reported in Eq.~\eqref{eq:fidelity_schmidt_22}:
\begin{align}
    F_{\eta_{2\rightarrow2}} &= \frac12(\bra{000}_{\alpha\beta\gamma} \eta_{2\rightarrow 2}^{\otimes 3} \ketbra{\psi}^{\otimes 3} \ket{000}_{\alpha\beta\gamma} + \bra{000}_{\alpha\beta\gamma} \eta_{2\rightarrow 2}^{\otimes 3} \ketbra{\psi}^{\otimes 3} \ket{111}_{\alpha\beta\gamma} \nonumber \\
    &\hphantom{\frac12(}+ \bra{111}_{\alpha\beta\gamma} \eta_{2\rightarrow 2}^{\otimes 3} \ketbra{\psi}^{\otimes 3} \ket{000}_{\alpha\beta\gamma} + \bra{111}_{\alpha\beta\gamma} \eta_{2\rightarrow 2}^{\otimes 3} \ketbra{\psi}^{\otimes 3} \ket{111}_{\alpha\beta\gamma}) \nonumber \\
    &=  \frac12\left(\lambda_0^6+\lambda_1^6+ 3\lambda_0^4\lambda_1^2 + 2\lambda_0^3 \lambda _1^3\right).
\end{align}

This expression can be maximized under the normalization constraint of $\lambda_0^2 + \lambda_1^2 = 1$, yielding a value of $F_{\eta_{2\rightarrow2}} = [5+4\cos(2\pi/7)]/[12+4\cos(2\pi/7)] \approx 0.5170401751642$, achieved by input states with Schmidt coefficients $\lambda_0 \approx 0.913608866$ and $\lambda_1 \approx 0.406594196$.

\subsubsection{Fidelities for $\din > 2$}
\label{app:fiddout2dingr2}

Let us now generalize the argument to higher dimensional input states using the Choi maps displayed in Fig.~\ref{fig:optimaletas} and defined by Eq.~\eqref{eq:etadim2}, which we repeat here for convenience:

\begin{align}\label{eq:etadin2_app}
    \eta_{\din\rightarrow2} = \sum_{r=0}^{\lfloor\frac{\din}{2} \rfloor -1} \left[\tau_{(2r,2r),(2r+1,2r+1)} +  \sum_{k=1}^{\din-2-2r} (\tau_{(2r+k,2r),(2r+k+1,2r+1)} + \tau_{(2r,2r+k),(2r+1,2r+k+1)})\right].
\end{align}
Expanding the threefold tensor product of this Choi map and taking the expectation value with $\ket{\GHZ_3}$, we can again separate four contributions which can be translated into directed graphs that represent the selection rules due to the Schmidt diagonality of the input states $\ket{\psi} = \sum_{i=0}^{d-1} \lambda_i \ket{ii}$. 
We display the corresponding graph for the contribution $\bra{kkk}_{\alpha\beta\gamma} \eta_{\din\rightarrow 2}^{\otimes 3} \ketbra{\psi}^{\otimes 3} \ket{lll}_{\alpha\beta\gamma}$ for fixed $k\leq l \in \{0,1\}$ in Fig.~\ref{fig:graph_din2}, where we only display the edges from the nodes in the first row for clarity. In fact, it is helpful to notice that a node in the $r$'th column can connect to any node in the $r$'th row. This demands that any three-cycle is of the form 
\begin{align}\label{eq:threecycles}
    \ketu{k+r}{k+s}\!\!\tikzmark{a}\!\brau{l+r}{l+s} \rightarrow \ketbrau{k+s}{k+t}{l+s}{l+t}\rightarrow \ketu{k+t}{k+r}\!\tikzmark{b}\!\!\brau{l+t}{l+r}
    \tikz[overlay,remember picture]{\draw[->,square arrow] (b.south) to (a.south);}
\end{align}
for suitable choices of $r,s,t$.

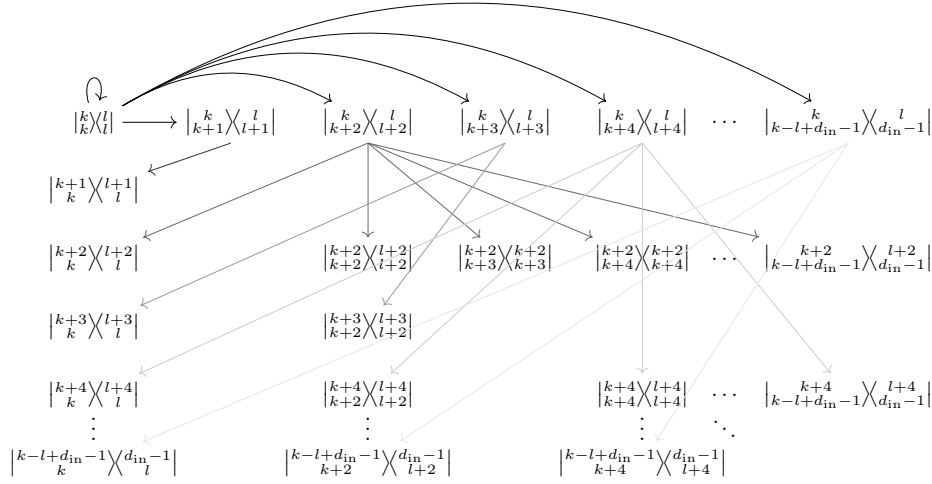
\begin{figure}[t]
    \centering
    \resizebox{0.7\columnwidth}{!}{
    \begin{tikzpicture}[->]
        \node (0000) at (0,0) {$\ketbrau{k}kll$};
        \node (0101) at (2,0) {$\ketbrau{k}{k+1}{l}{l+1}$};
        \node (0202) at (4,0) {$\ketbrau{k}{k+2}{l}{l+2}$};
        \node (0303) at (6,0) {$\ketbrau{k}{k+3}{l}{l+3}$};
        \node (0404) at (8,0) {$\ketbrau{k}{k+4}{l}{l+4}$};
        \node (0ddd0ddd) at (9.2,0) {$\ldots$};
        \node (0dm10dm1) at (11,0) {$\ketbrau{k}{k-l+\din-1}{l}{\din-1}$};

        \node (1010) at (0,-1) {$\ketbrau{k+1}{k}{l+1}{l}$};

        \node (2020) at (0,-2) {$\ketbrau{k+2}{k}{l+2}{l}$};
        \node (2222) at (4,-2) {$\ketbrau{k+2}{k+2}{l+2}{l+2}$};
        \node (2323) at (6,-2) {$\ketbrau{k+2}{k+3}{k+2}{k+3}$};
        \node (2424) at (8,-2) {$\ketbrau{k+2}{k+4}{k+2}{k+4}$};
        \node (2ddd2ddd) at (9.2,-2) {$\ldots$};
        \node (2dm12dm1) at (11,-2) {$\ketbrau{k+2}{k-l+\din-1}{l+2}{\din-1}$};

        \node (3030) at (0,-3) {$\ketbrau{k+3}{k}{l+3}{l}$};
        \node (3232) at (4,-3) {$\ketbrau{k+3}{k+2}{l+3}{l+2}$};

        \node (4040) at (0,-4) {$\ketbrau{k+4}{k}{l+4}{l}$};
        \node (4242) at (4,-4) {$\ketbrau{k+4}{k+2}{l+4}{l+2}$};
        \node (4444) at (8,-4) {$\ketbrau{k+4}{k+4}{l+4}{l+4}$};
        \node (4ddd4ddd) at (9.2,-4) {$\ldots$};
sj        \node (4dm14dm1) at (11,-4) {$\ketbrau{k+4}{k-l+\din-1}{l+4}{\din-1}$};

        \node (5050) at (0,-4.4) {$\vdots$};
        \node (5252) at (4,-4.4) {$\vdots$};
        \node (5454) at (8,-4.4) {$\vdots$};
        \node (5555) at (9.2,-4.4) {$\ddots$};

        \node (dm10dm10) at (0,-5) {$\ketbrau{k-l+\din-1}{k}{\din-1}{l}$};
        \node (dm12dm12) at (4,-5) {$\ketbrau{k-l+\din-1}{k+2}{\din-1}{l+2}$};
        \node (dm14dm14) at (8,-5) {$\ketbrau{k-l+\din-1}{k+4}{\din-1}{l+4}$};

        \path (0000) edge [loop above] (0000);
        
        \path (0000) edge (0101);
        \path (0000) edge [bend left] (0202);
        \path (0000) edge [bend left] (0303);
        \path (0000) edge [bend left] (0404);
        \path (0000) edge [bend left] (0dm10dm1);

        \path [color=black!80] (0101.south) edge (1010);
        
        \path [color=black!60] (0202.south) edge (2020);
        \path [color=black!60] (0202.south) edge (2222);
        \path [color=black!60] (0202.south) edge (2323);
        \path [color=black!60] (0202.south) edge (2424);
        \path [color=black!60] (0202.south) edge (2dm12dm1);

        \path [color=black!40] (0303.south) edge (3030);
        \path [color=black!40] (0303.south) edge (3232);

        \path [color=black!20] (0404.south) edge (4040);
        \path [color=black!20] (0404.south) edge (4242);
        \path [color=black!20] (0404.south) edge (4444);
        \path [color=black!20] (0404.south) edge (4dm14dm1);

        \path [color=black!10] (0dm10dm1.south) edge (dm10dm10);
        \path [color=black!10] (0dm10dm1.south) edge (dm12dm12);
        \path [color=black!10] (0dm10dm1.south) edge (dm14dm14);
    \end{tikzpicture}
    }

    \caption{Part of the directed graph used to determine the $\ket{\GHZ_3}$-fidelity contribution  $\bra{kkk}_{\alpha\beta\gamma} \eta_{\din\rightarrow 2}^{\otimes 3} \ketbra{\psi}^{\otimes 3} \ket{lll}_{\alpha\beta\gamma}$ using the Choi matrices from Eq.~\eqref{eq:etadin2_app} for fixed $k\leq l \in \{0,1\}$. The contribution is given by all three-cycles in this graph. For clarity, we only display the edges from the first line of nodes. In general, a node in the $r$'th column is connected to each node in the $r$'th row. The form of three-cycles in a graph with this structure is given by Eq.~\eqref{eq:threecycles}. }
    \label{fig:graph_din2}
\end{figure}

Carefully counting all such cycles that connect the nodes in the last column and row, i.e., where one of the entries equals $\din-1$, allows us to relate the corresponding fidelity contributions for dimension $\din$ to that of $\din-1$. Doing so for all choices of $k$ and $l$, we obtain:
\begin{theorem}\label{thm:fidelitydin2}
    For symmetric pure input states $\ket{\psi}_{ab'} = \ket{\psi}_{bc'} = \ket{\psi}_{ca'} = \sum_{i=0}^{\din-1} \lambda_i \ket{ii}$ and local channels $\mathcal{E}_A = \mathcal{E}_B = \mathcal{E}_C$ represented by the Choi matrix $\eta_{\din\rightarrow 2}$ in Eq.~\eqref{eq:etadim2}, the fidelity with $\ket{\GHZ_3}$ is given by the recursive rule
    \begin{align}
        F_{\eta_{\din\rightarrow 2}} = F_{\eta_{\din-1\rightarrow 2}}  + \frac12 \begin{dcases}
   \lambda_{\din-1}^6 + 
    2\lambda_{\din-1}^3 \lambda_{\din-2}^3 + 
    3 \lambda_{\din - 1}^2 \Big(\!\!\!\sum_{r,s=0,2,\ldots}^{\din-2} \lambda_r^2 \lambda_s^2 + \sum_{r=1,3,\ldots}^{\din-3} \sum_{s=1,3,\ldots}^{\din-1} \lambda_r^2 \lambda_s^2\Big) + & \\
    \hfill + 6 \lambda_{\din-1}\lambda_{\din-2} 
      \sum_{r=1,3,\ldots}^{\din-3} \sum_{s=1,3,\ldots}^{\din-1} \lambda_r\lambda_{r-1}\lambda_s\lambda_{s-1}
    & \text{if }\din \text{ even,} \\
    \lambda_{\din-1}^6 + 
    \phantom{2\lambda_{\din-1}^3 \lambda_{\din-2}^3 +} 
    3 \lambda_{\din - 1}^2 \Big(\!\!\sum_{r=0,2,\ldots}^{\din-3} \sum_{s=0,2,\ldots}^{\din-1}\lambda_r^2 \lambda_s^2 + \sum_{r,s=1,3,\ldots}^{\din-2}  \lambda_r^2 \lambda_s^2\Big) + & \\
    \hfill + 6 \lambda_{\din-1}\lambda_{\din-2} 
      \sum_{r,s=1,3,\ldots}^{\din-2} \lambda_r\lambda_{r-1}\lambda_s\lambda_{s-1}
    & \text{if }\din \text{ odd}
    \end{dcases}
    \end{align}
    and the fidelity $F_{\eta_{2\rightarrow 2}}$ is given by Eq.~\eqref{eq:fidelity_schmidt_22}.
\end{theorem}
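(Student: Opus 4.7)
The plan is to extend the directed-graph cycle-counting argument spelled out for $\din=2$ in Appendix~\ref{app:fiddout2din2} to arbitrary $\din$, and then compute the increment $F_{\eta_{\din\to 2}}-F_{\eta_{\din-1\to 2}}$ by enumerating only the cycles that are absent at dimension $\din-1$. First I would repeat the expansion of $\eta_{\din\to 2}^{\otimes 3}$ acting on $\ketbra{\psi}^{\otimes 3}$ sandwiched between $\bra{kkk}_{\alpha\beta\gamma}$ and $\ket{lll}_{\alpha\beta\gamma}$ for each $k,l\in\{0,1\}$. Projecting on fixed outputs $k$ and $l$ picks out the matrix element $\ketbrau{i_k}{j_k}{i_l}{j_l}$ from every pair $\tau_{(i_0,j_0),(i_1,j_1)}$ in $\eta_{\din\to 2}$, producing the vertex set of the directed graph of Fig.~\ref{fig:graph_din2}; single-dot $\tau$-terms contribute only to the $(k,l)=(0,0)$ piece. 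Schmidt-diagonality of $\ket{\psi}=\sum_i \lambda_i\ket{ii}$ then forces each surviving monomial to correspond to a closed $3$-cycle whose weight is the product of the six $\lambda$-factors encountered along the three oriented edges, exactly as in the derivation of Eq.~\eqref{eq:cont000_b}.

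Next I would view $\eta_{\din\to 2}$ as $\eta_{\din-1\to 2}$ augmented by those $\tau$-terms that involve the new index $\din-1$, so that the fidelity increment is precisely the total weight of $3$-cycles traversing at least one node carrying that index. I would classify these new cycles by the type of $\tau$-edges they use. The self-loop on the corner vertex in the $(k,l)\in\{(0,0),(1,1)\}$ pieces accounts for the $\lambda_{\din-1}^6$ term. The new diagonal pair $\tau_{(\din-2,\din-2),(\din-1,\din-1)}$, which by Eq.~\eqref{eq:etadim2} is only present when $\din$ is even, contributes via the off-diagonal pieces $(k,l)\in\{(0,1),(1,0)\}$ the $2\lambda_{\din-1}^3\lambda_{\din-2}^3$ term. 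Cycles that traverse exactly one pair in the new row or column and otherwise stay on the preexisting diagonal generate the terms of the form $3\lambda_{\din-1}^2\lambda_r^2\lambda_s^2$, the multiplicity $3$ coming from the cyclic rotations of the $3$-cycle. Finally, cycles that traverse two pairs incident to the new row or column generate the terms $6\lambda_{\din-1}\lambda_{\din-2}\lambda_r\lambda_{r-1}\lambda_s\lambda_{s-1}$, with the factor $6$ being the number of cyclic orderings of a $3$-cycle made of three distinct edges. Summing these pieces over the four $(k,l)$ values and restoring the GHZ prefactor $1/2$ then produces the announced recursion, while the base case $\din=2$ is Eq.~\eqref{eq:fidelity_schmidt_22}.

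The main obstacle will be the parity-dependent bookkeeping. By Eq.~\eqref{eq:etadim2}, whether a given grid index is paired with its neighbor on the diagonal of $\eta$, or is left isolated as a single-dot cell, is determined by its parity, and this propagates to which of the new vertices at row/column $\din-1$ are connected by pair-arrows to which of the old vertices. This is exactly what dictates that the summation ranges for $r$ and $s$ alternate between even and odd values in the two branches of the stated recursion, and why the $2\lambda_{\din-1}^3\lambda_{\din-2}^3$ term is absent when $\din$ is odd. I would therefore handle the even and odd cases in parallel, in each writing out the complete list of new vertices, their incident pair-arrows and single-dot decorations, and then verifying that every new $3$-cycle is counted exactly once and with the correct multiplicity; avoiding under- or over-counting across the four $(k,l)$ pieces is the combinatorial crux of the argument.
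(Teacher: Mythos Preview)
Your proposal is correct and follows essentially the same approach as the paper: both set up the directed graph of Fig.~\ref{fig:graph_din2} for each of the four $(k,l)$ pieces, identify the $3$-cycles of the form in Eq.~\eqref{eq:threecycles}, and obtain the recursion by isolating the cycles that touch the new index $\din-1$. Your outline is in fact more detailed than the paper's one-sentence ``carefully counting all such cycles'' argument; just be aware that the multiplicity $6$ in the last term does not come from six cyclic orderings of a single $3$-cycle (there are only three) but from combining the three rotations with the two distinct cycle types (new row versus new column) or, equivalently, from summing contributions across the two off-diagonal $(k,l)$ pieces.
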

Maximizing these polynomials under the normalization condition of the Schmidt coefficients $\lambda$, we obtain the fidelity values and maximizing squared Schmidt coefficients given in Table~\ref{tab:fidelities_din2}.

\begin{table}[t]
    \centering
    \begin{tabular}{c|c|l}
    $\din$ & max. $F_{\eta_{\din\rightarrow 2}}$ & max. Schmidt coefficients \\
    \hline
        2 & 0.51704017 & (0.91361, 0.40659) \\
        3 & 0.54009112 & (0.82466, 0.52659, 0.20646) \\
        4 & 0.54595881 & (0.79959, 0.54155, 0.24057, 0.09750) \\
        5 & 0.54749297 & (0.79038, 0.54384, 0.24871, 0.12321, 0.04991) \\
        6 & 0.54790031 & (0.78708, 0.54412, 0.25072, 0.13021, 0.06302, 0.02561) \\
        7 & 0.54800887 & (0.78595, 0.54411, 0.25121, 0.13214, 0.06655, 0.03247, 0.01320) \\
        8 & 0.54803772 & (0.78559, 0.54408, 0.25133, 0.13268, 0.06750, 0.03430, 0.01671, 0.00680) \\
        9 & 0.54804537 & (0.78547, 0.54407, 0.25136, 0.13282, 0.06775, 0.03480, 0.01764, 0.00860, 0.00350) \\
        10 & 0.54804739 & (0.78544, 0.54407, 0.25136, 0.13286, 0.06782, 0.03492, 0.01789, 0.00908, 0.00442, 0.00180)
    \end{tabular}
    \caption{Optimal Schmidt coefficients of the bipartite input states and associated obtained $\ket{\GHZ_3}$-fidelities for different input dimensions $\din$ when using the construction based on the Choi matrices in Eq.~\eqref{eq:etadin2_app}. }
    \label{tab:fidelities_din2}
\end{table}

Finally, let us explicitly state the output of the constructed channels, i.e., the state that achieves the reported fidelities for $\din = 10$ up to four digits of precision:
\begin{align}\label{eq:rhod2}
\rho_{\alpha\beta\gamma}^{\din=10} &= \left(\begin{matrix}0.6858 & 0 & 0 & 0 & 0 & 0 & 0 & 0.1805\\0 & 0.0862 & 0 & 0 & 0 & 0 & 0 & 0\\0 & 0 & 0.0862 & 0 & 0 & 0 & 0 & 0\\0 & 0 & 0 & 0.0021 & 0 & 0 & 0 & 0\\0 & 0 & 0 & 0 & 0.0862 & 0 & 0 & 0\\0 & 0 & 0 & 0 & 0 & 0.0021 & 0 & 0\\0 & 0 & 0 & 0 & 0 & 0 & 0.0021 & 0\\0.1805 & 0 & 0 & 0 & 0 & 0 & 0 & 0.0493\end{matrix}\right)
\end{align}

\subsection{Constructions for odd $\dout > 2$}
\label{sec:constd3}

\begin{figure}[t]
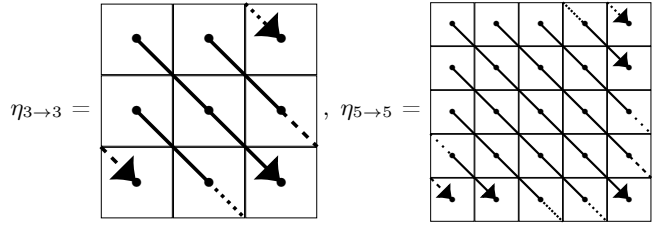

    \centering
    \begin{subfloat}
    
        $\eta_{3\rightarrow3}$ = \raisebox{1cm}{\resizebox{3cm}{!}{\etathreethree}} $\!\!\!,\;\eta_{5\rightarrow5}$ = \raisebox{1.2cm}{\resizebox{3cm}{!}{\etafivefive}}
    \end{subfloat}
    
    \caption{The grid representation of the Choi matrices $\eta_{\dout \rightarrow\dout}$ for $\dout = 3$ and $\dout = 5$ of the grid channels defined in Eq.~\eqref{eq:etadimodd}. Note that this construction only works for odd dimensions and yields the fidelities with $\ket{\GHZ_{3,\dout}}$ reported in Eq.~\eqref{eq:fid_doutdoutmax}.}
    \label{fig:optimaletasodd}
\end{figure}

Let us shift our attention to higher-dimensional outputs of three-partite networks. For a general output dimension of $\dout$, an output fidelity of more than $1/{\dout}$ with $\ket{\GHZ_{3,\dout}}$ indicates genuine multipartite entanglement \cite{bourennane2004experimental}. 

We first consider the case of odd $\dout$ and the same input dimension, as numerical search indicates that curiously, in contrast to the case of $\dout = 2$, in case of $\dout = 3$, the fidelity seems to saturate as soon as $\din = \dout$. For this case, we can produce probably highest fidelities by choosing the grid-type Choi matrices presented in Fig.~\ref{fig:optimaletasodd}, and for general odd dimension $\dout$ given explicitly by
\begin{align}\label{eq:etadimodd}
    \eta_{\dout\rightarrow\dout} &= \tau_{(0,0),(1,1),\ldots,(\dout-1,\dout-1)} + \nonumber \\
    &\phantom{=}\sum_{r=1}^{\frac{\dout-1}2} (\tau_{(0,r),(1,r\oplus 1),\ldots(\dout-1,r\oplus(\dout-1))} + \nonumber\\
    &\phantom{=====}\tau_{(r,0),(r\oplus 1,1),\ldots(r\oplus(\dout-1),\dout-1)}),
\end{align}
where $\oplus$ denotes addition modulo $\dout$.

In the subsequent Appendix~\ref{app:fidodd}, we calculate the fidelity (as a function of the Schmidt coefficients of the input states) for this construction by counting again cycles in an induced graph, yielding
\begin{align}\label{eq:fiddoutdout_schmidt}
    F_{\eta_{\dout\rightarrow\dout}} =     \frac1{\dout}\sum_{k,l=0}^{\dout-1}\left(\lambda_k^3\lambda_l^3 + 3\lambda_k^2\lambda_l^2\sum_{i=1}^{\frac{\dout-1}2} \lambda_{k\oplus i}\lambda_{l\oplus i}\right).
\end{align}
In contrast to $\dout=2$, in this case the optimal values for the Schmidt coefficients is given by $\lambda_i=1/\sqrt{\dout}$, i.e., maximally entangled input states. From the calculation in Appendix~\ref{app:fidodd}, we get an achievable fidelity of 
\begin{align}\label{eq:fid_doutdoutmax}
    F_{\eta_{\dout\rightarrow\dout}} = \frac{3\dout-1}{2\dout^2} = \frac{1}{\dout}\left(\frac32 - \frac1{2\dout}\right)
\end{align}
by using the channels in Eq.~\eqref{eq:etadimodd} and maximally entangled bipartite inputs of dimension $\din = \dout$.

These fidelities numbers are matched by numerical seesaw search for $\dout = 3$. Interestingly, for $\dout=5$, there seem to exist better channels which slightly exceed the fidelity of $0.28$ that is reached by our construction.

Note that the achieved fidelity exceeds the bound of $1/\dout$ (and approaches a value of $3/(2\dout)$ for large dimensions), therefore demonstrating genuine multipartite entanglement for all dimensions. Interestingly, choosing $\lambda_0 = \sqrt{1-\epsilon^2}, \lambda_1 = \epsilon$ and $\lambda_{i>1} = 0$ yields already a fidelity value exceeding the GME bound, thus, even very weakly entangled input states can be mapped to a GME state.

While this construction does not work for even output dimensions $\dout$, we can lift the construction in dimension $\dout-1$ to dimension $\dout$ by simply dismissing the extra dimension. Denoting the constructed $\dout-1$ dimensional network state by $\rho$, we get 
\begin{align}
    \braket{\GHZ_{3,\dout}|\rho|\GHZ_{3,\dout}} = \frac{\dout-1}{\dout}\braket{\GHZ_{3,\dout-1}|\rho|\GHZ_{3,\dout-1}} = \frac{1}{\dout}\left(\frac32 - \frac1{2(\dout-1)}\right) 
\end{align}
for even $\dout$.
    
Together with the value presented in Eq.~\eqref{eq:fid_doutdoutmax}, this proves the lower bound reported in Result~\ref{res:GHZ3dbound} in the main text.

\subsection{General fidelity expression for $\dout > 2$}
\label{app:fidodd}

Here we derive Eq.~\eqref{eq:fiddoutdout_schmidt} from the previous section by calculating the reachable fidelity with $\ket{\GHZ_{3,\dout}}$ using the construction specified in Fig.~\ref{fig:optimaletasodd} and Eq.~\eqref{eq:etadimodd}, which reads
\begin{align}\label{eq:etadimodd_app}
    \eta_{\dout\rightarrow\dout} &= \tau_{(0,0),(1,1),\ldots,(\dout-1,\dout-1)} + \sum_{r=1}^{\frac{\dout-1}2} (\tau_{(0,r),(1,r\oplus 1),\ldots(\dout-1,r\oplus(\dout-1))} + \tau_{(r,0),(r\oplus 1,1),\ldots(r\oplus(\dout-1),\dout-1)}).
\end{align} 
Just as in the case of the analysis of $\dout = 2$, we can split the fidelity with $\ket{\GHZ_{3,\dout}}$ into $\dout^2$ terms of the form $\bra{kkk}_{\alpha\beta\gamma} \eta_{\dout\rightarrow \dout}^{\otimes 3} \ketbra{\psi}^{\otimes 3} \ket{lll}_{\alpha\beta\gamma}$
for $k,l = 0, \ldots, \dout-1$ and some input state $\ket{\psi} = \sum_{i = 0}^{\dout-1} \lambda_i \ket{ii}$. 

Expanding the contribution for some fixed choice of $k$ and $l$ leads to the graph shown in Fig.~\ref{fig:graph_doutdout}. Its three-cycles are of the form 
\begin{align}\label{eq:threecycles_odd}
    \ketu{k}k\!\!\tikzmark{a}\!\brau{l}l \rightarrow \ketbrau{k}{k\oplus i}{l}{l\oplus i} \rightarrow \ketu{k\oplus i}{k}\!\tikzmark{b}\!\!\brau{l\oplus i}{l}
    \tikz[overlay,remember picture]{\draw[->,square arrow] (b.south) to (a.south);}
\end{align}
for $i=0,\ldots,\delta \coloneq  (\dout-1)/2$. Note that whenever $i\neq 0$, we have to count the cycle three times as it may start at any of its nodes, and each cycle contributes $\lambda_k^2\lambda_l^2\lambda_{k\oplus i}\lambda_{l\oplus i}$ to the total fidelity.

\begin{figure}[t]
    \centering
    \resizebox{0.7\columnwidth}{!}{
    \begin{tikzpicture}[->]
        \node (0000) at (0,0) {$\ketbrau{k}kll$};
        \node (0101) at (2,0) {$\ketbrau{k}{k\oplus1}{l}{l\oplus1}$};
        \node (0202) at (4,0) {$\ketbrau{k}{k\oplus2}{l}{l\oplus2}$};
        \node (0303) at (6,0) {$\ldots$};
        \node (0404) at (8,0) {$\ketbrau{k}{k\oplus\delta}{l}{l\oplus\delta}$};
        
        \node (1010) at (0,-1) {$\ketbrau{k\oplus1}{k}{l\oplus1}{l}$};

        \node (2020) at (0,-2) {$\ketbrau{k\oplus2}{k}{l\oplus2}{l}$};

        \node (3030) at (0,-3) {$\vdots$};

        \node (4040) at (0,-4) {$\ketbrau{k\oplus\delta}{k}{l\oplus\delta}{l}$};

        \path (0000) edge [loop above] (0000);
        
        \path (0000) edge (0101);
        \path (0000) edge [bend left] (0202);
        
        \path (0000) edge [bend left] (0404);

        \path (0101) edge  (1010);
        \path (0202) edge  (2020);

        \path (0404) edge  (4040);

        \path [color=black!70] (1010) edge  (0000);
        \path [color=black!70] (1010.east) edge   (0101.south);
        \path [color=black!70] (1010.east) edge   (0202.south);
        \path [color=black!70] (1010.east) edge   (0404.south);

        \path [color=black!50] (2020.west) edge [bend left] (0000.west);
        \path [color=black!50] (2020.east) edge   (0101.south);
        \path [color=black!50] (2020.east) edge   (0202.south);
        \path [color=black!50] (2020.east) edge   (0404.south);

        \path [color=black!30] (4040.west) edge [bend left] (0000.west);
        \path [color=black!30] (4040.east) edge   (0101.south);
        \path [color=black!30] (4040.east) edge   (0202.south);
        \path [color=black!30] (4040.east) edge   (0404.south);

    \end{tikzpicture}
    }

    \caption{The directed graph used to determine the $\ket{\GHZ_{3,\dout}}$-fidelity of the construction using Choi matrices in Eq.~\eqref{eq:etadimodd_app} for odd $\dout$, where $\delta = (\dout-1)/2$ and $k,l = 0, \ldots, \dout-1$. As before, the fidelity is determined by three-cycles in this graph, which are of the form in Eq.~\eqref{eq:threecycles_odd}.}
    \label{fig:graph_doutdout}
\end{figure}
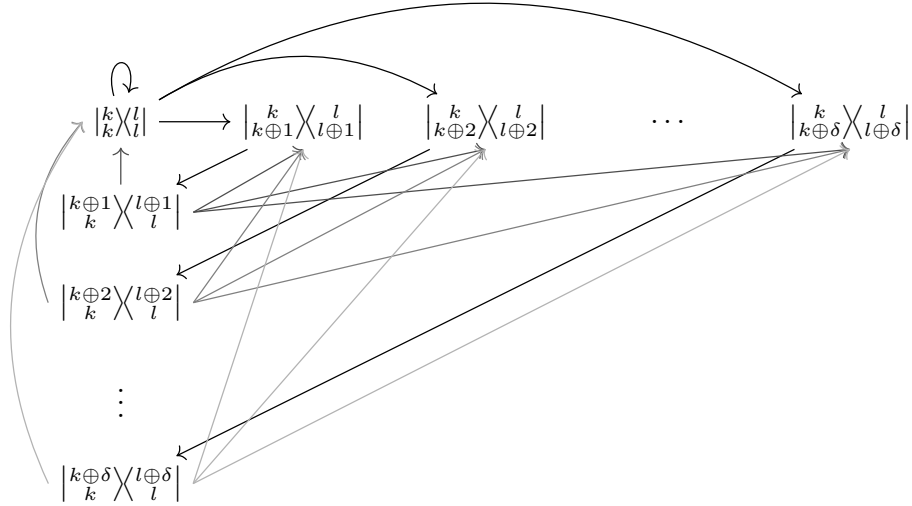

Summing the contributions of all three-cycles for all $k$ and $l$ and taking into account the normalization of the GHZ state, we obtain the result reported in Eq.~\eqref{eq:fiddoutdout_schmidt}:
\begin{align}
    F_{\eta_{\dout\rightarrow\dout}} =     \frac1{\dout}\sum_{k,l=0}^{\dout-1}\left(\lambda_k^3\lambda_l^3 + 3\lambda_k^2\lambda_l^2\sum_{i=1}^{\frac{\dout-1}2} \lambda_{k\oplus i}\lambda_{l\oplus i}\right).
\end{align}
In contrast to the case of $\dout = 2$, in this case, the optimum seems to be achieved for $\lambda_i = 1/\sqrt{\dout}$ for all $i$, for which the fidelity reaches the value reported in Eq.~\eqref{eq:fid_doutdoutmax} of 
\begin{align}
    F_{\eta_{\dout \rightarrow \dout}} = \frac{3\dout-1}{2\dout^2}.
\end{align}

\section{Fidelity bounds in larger networks from bounds in triangular networks}
\label{app:losr4to3}

\subsection{Reducing fidelity bounds in  large networks to smaller ones}

In order to find fidelity bounds for states in $\LOSR$ with $N$ nodes, bipartite sources, it is often possible to reduce this task to finding bounds in smaller networks. Intuitively, the idea is to group nodes in the $N$-partite network and allow them to perform  a unitary post-processing, followed by tracing out all but one node per group. This procedure can only lead to higher fidelities with the target state while at the same time producing a network state of fewer parties. We formally prove this argument in the following lemma.

\begin{lemma}[State extraction]
    \label{lem:stateextraction}
    Let $\ket{\psi_N}$ be an $N$-partite quantum state of local dimension $d$. Let the $N$ parties be partitioned into $n\geq 3$ sets $A_1,\ldots,A_n$.
    
    If there exist unitary operators $U_{A_i}$ acting only on the  parties in $A_i$ for $i=1,\ldots, n$ such that
    \begin{align} 
        U_{A_1} \otimes \ldots \otimes U_{A_n} \ket{\psi_N} = \ket{\varphi_n}_{a_1,\ldots a_n} \otimes \ket{\tau}_{R},
    \end{align}
    where $\ket{\varphi_n}$ is an $n$-partite state of parties $a_1 \in A_1, \ldots, a_n \in A_n$, and where $\ket{\tau}_R$ is a residual state on the parties in $R = \set{1, \dots, N} \setminus \set{a_1, \ldots a_n}$, then the maximal fidelity of $\LOSR$-states in $N$-partite networks with $\ket{\psi_N}$ cannot be larger than that of $\LOSR$-states in $n$-partite networks with $\ket{\varphi_n}$, i.e.,
    \begin{align}
        F_{\LOSR}(\ket{\psi_N}) \leq F_{\LOSR}(\ket{\varphi_n}).
    \end{align}
\end{lemma}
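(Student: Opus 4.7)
The plan is to show that any $\LOSR$ state $\rho_N$ in the $N$-partite network that achieves fidelity $F$ with $\ket{\psi_N}$ can be converted, by local operations on the $n$ groups $A_1,\dots,A_n$, into an $n$-partite $\LOSR$ state $\rho_n$ with fidelity at least $F$ with $\ket{\varphi_n}$. The lemma then follows immediately by taking the supremum over $\rho_N$.

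First I would argue that $\rho_N$, when the nodes are regrouped, is itself an $\LOSR$ state of the $n$-partite network where party $a_i$ holds all of $A_i$. Concretely, for every bipartite source $\rho_{kl}$ in the original network, there are two cases. If both endpoints $k,l$ lie in the same group $A_i$, then the source can be absorbed into the local channel of $a_i$ (a single party can locally prepare any bipartite state on its own subsystems). If $k \in A_i$ and $l\in A_j$ with $i\neq j$, the source becomes a bipartite source between $a_i$ and $a_j$; several such sources between $A_i$ and $A_j$ can be combined into one higher-dimensional bipartite source. Since Lemma~\ref{lem:stateextraction} allows arbitrary source dimensions, this is permitted. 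The original local channels $\mathcal{E}^\lambda_k$ on nodes in $A_i$, together with the locally prepared intra-group sources, can all be composed into a single local channel $\tilde{\mathcal{E}}^\lambda_{a_i}$ acting on $a_i$'s subsystem, while the shared variable $\lambda$ remains globally shared. Hence $\rho_N$, viewed as an $n$-partite state, is of the form of Eq.~\eqref{eq:losrstate} (generalized to $n$ parties).

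Next, the maps $\mathcal{F}_{a_i}(\cdot) = \trace_{A_i \setminus\{a_i\}}\!\bigl[U_{A_i}(\cdot)U_{A_i}^\dagger\bigr]$ are local channels of party $a_i$ in the $n$-partite network. Applying them jointly and absorbing them into $\tilde{\mathcal{E}}^\lambda_{a_i}$ yields another $\LOSR$ state
\begin{equation}
\rho_n = \trace_{R}\!\bigl[(U_{A_1}\otimes\cdots\otimes U_{A_n})\,\rho_N\,(U_{A_1}^\dagger\otimes\cdots\otimes U_{A_n}^\dagger)\bigr]
\end{equation}
of the $n$-partite network on the parties $a_1,\dots,a_n$.

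Finally I would verify the fidelity transfer. Writing $\ket{\psi_N} = U_{A_1}^\dagger\otimes\cdots\otimes U_{A_n}^\dagger (\ket{\varphi_n}\otimes\ket{\tau})$, unitary invariance of the Hilbert--Schmidt inner product gives
\begin{align}
\bra{\psi_N}\rho_N\ket{\psi_N}
&= \trace\!\bigl[(\ket{\varphi_n}\!\bra{\varphi_n}\otimes\ket{\tau}\!\bra{\tau})\,\tilde{\rho}_N\bigr] \nonumber\\
&\leq \trace\!\bigl[(\ket{\varphi_n}\!\bra{\varphi_n}\otimes \one_R)\,\tilde{\rho}_N\bigr]
= \bra{\varphi_n}\rho_n\ket{\varphi_n},
\end{align}
where $\tilde{\rho}_N = (U_{A_1}\otimes\cdots\otimes U_{A_n})\,\rho_N\,(U_{A_1}^\dagger\otimes\cdots\otimes U_{A_n}^\dagger)$ and the inequality uses $\ket{\tau}\!\bra{\tau}\leq \one_R$. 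Taking the supremum over admissible $\rho_N$ gives the claimed bound.

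The routine-but-subtle step is the first one: making rigorous that the regrouping of bipartite sources and local channels indeed fits the $n$-partite $\LOSR$ template without increasing the resources beyond those allowed (in particular, that intra-group sources can be dropped and parallel inter-group sources can be bundled into a single bipartite source of higher dimension). Once this bookkeeping is settled, the fidelity inequality is an immediate consequence of unitarity and of $\ket{\tau}\!\bra{\tau}\leq \one_R$.
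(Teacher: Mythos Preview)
Your proposal is correct and follows essentially the same approach as the paper: regroup the $N$ nodes into $n$ parties (absorbing intra-group sources and bundling inter-group sources into higher-dimensional bipartite sources), apply the set-local unitaries followed by the partial trace over $R$, and bound the fidelity via $\ket{\tau}\!\bra{\tau}\leq \one_R$. The paper phrases the argument slightly differently by introducing an intermediate class $\underline{\LOSR}$ closed under set-local unitaries, but the content and all key steps are the same.
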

\begin{proof}
    For clarity, we prove the statement for $N$ parties, partitioned into three sets  $A_1 = A$, $A_2 = B$, $A_3=C$. The argument can then directly be extended to  larger $n$.
    
    To prove the statement, we consider the larger class of $N$-partite LOSR networks with bipartite sources between all parties, where now the parties in $A$, $B$ and $C$ are allowed to apply set-local unitaries $U_A$, $U_B$ and $U_C$, respectively, to their outputs (see Fig.~\ref{fig:graphstate_trip}). Let us denote this class by $\underline{\LOSR}_d^N$. Clearly, the maximal fidelity achievable in the usual $N$-partite LOSR network cannot exceed the one obtainable in the $\underline{\LOSR}$ network. Let $\rho$ denote a state in $\LOSR$. Then its fidelity with $\ket{\psi_N}$ is given by
    \begin{align}
        F' &= \braket{\psi_N | \rho | \psi_N}\\
        &\leq \max_{\rho' \in \underline{\LOSR}}\braket{\psi_N | \rho' | \psi_N}\\
        &= \max_{\rho' \in \underline{\LOSR}} \braket{\psi_N | \big( U_A^\dagger \otimes U_B^\dagger \otimes U_C^\dagger \big) \rho' \big( U_A\otimes U_B\otimes U_C \big) | \psi_N}\\
        &= \max_{\rho' \in \underline{\LOSR}} \bra{\varphi_3} \otimes \bra{\tau} \rho' \ket{\varphi_3}\otimes\ket{\tau} \\
        &\leq \max_{\rho' \in \underline{\LOSR}} \braket{\varphi_3 | \trace_{R}(\rho') | \varphi_3}\\
        &\leq F_{\LOSR}(\ket{\varphi_3}).        
    \end{align}%
\begin{figure}[t]
    \centering
    \includegraphics[width=0.5\linewidth]{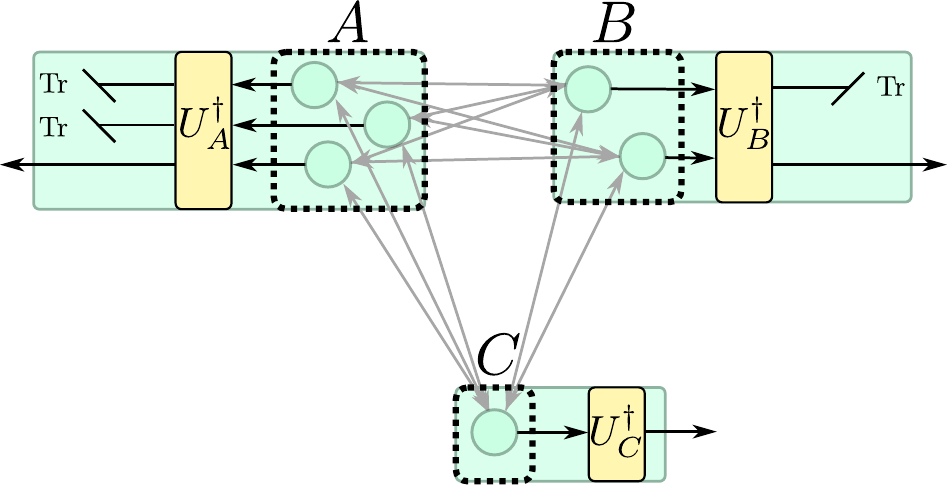}
    \caption{Visualization of the proof of Lemma~\ref{lem:stateextraction} for the tripartition $A=\{1,2,3\}$, $B=\{4,5\}$, $C=\{6\}$. 
    The $6$-partite network can be used to create a tripartite network by applying local unitaries $U_A$, $U_B$ and $U_C$ and then tracing out all but one party per set. If a $6$-partite state $\ket{\psi_6}$ can be transformed using the unitaries $U_A$, $U_B$ and $U_C$ into a $3$-partite state $\ket{\varphi_3}$ between some parties in $A$, $B$ and $C$ (and some residual state), then the fidelity with $\ket{\psi_6}$ in the full network is upper bounded by the fidelity with $\ket{\varphi_3}$ in the tripartite network.
    }
    \label{fig:graphstate_trip}
\end{figure}%
The second-last inequality follows from the fact that the fidelity between two states can only increase under the partial trace.
Finally, the last inequality follows from the fact that the state $\trace_{R}(\rho')$ is preparable in a tripartite LOSR network. To see this, we combine the nodes in $A$ (and likewise for $B$ and $C$) into a single node. The channels of these nodes are then given by first applying the original parties' channels, followed by the unitary post-processing and the subsequent partial trace. Additionally, the bipartite sources connecting the parties in $A$ and $B$ are combined into a single (larger-dimensional) bipartite source between $A$ and $B$, and likewise for the other two pairs.
\end{proof}

Let us stress that this proof also bounds the fidelities for a larger class of networks, where instead of having bipartite sources between all parties, we allow even for multipartite sources connecting the parties in each pair of sets $A_i$ and $A_j$ for $1\leq i < j \leq n$. 

An immediate application of Lemma~\ref{lem:stateextraction} concerns the fidelity LOSR-states with GHZ states:

\begin{corollary} \label{cor:ghzfidelity}
    For all $3\leq n < N$, 
    \begin{align}
        F_{\LOSR}(\ket{\GHZ_{N,d}}) \leq F_{\LOSR}(\ket{\GHZ_{n,d}})
    \end{align} 
\end{corollary}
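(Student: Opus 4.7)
The plan is to reduce the claim to Lemma~\ref{lem:stateextraction} by exhibiting, for any partition of the $N$ parties into $n$ nonempty groups, set-local unitaries that map the $N$-partite GHZ state to an $n$-partite GHZ state on designated representatives, tensored with a trivial product state on the remaining $N-n$ parties.

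Concretely, I would partition the $N$ parties into nonempty sets $A_1,\ldots,A_n$ (possible since $3\leq n<N$) and choose representatives $a_k\in A_k$. Within each $A_k$ with $|A_k|\geq 2$, I would apply a cascade of generalized CNOT gates with control $a_k$ and target each other party in $A_k$, defined by $\ket{i}\ket{j}\mapsto\ket{i}\ket{(j-i)\bmod d}$. This composes into a set-local unitary $U_{A_k}$ acting on computational basis terms as $U_{A_k}\,\ket{i}^{\otimes |A_k|}=\ket{i}_{a_k}\otimes\ket{0}^{\otimes(|A_k|-1)}$ for every $i\in\{0,\ldots,d-1\}$. When $|A_k|=1$, no CNOT is needed and one simply sets $U_{A_k}=\one$.

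Applying $U_{A_1}\otimes\cdots\otimes U_{A_n}$ to $\ket{\GHZ_{N,d}}$ and using linearity of the superposition then yields
\begin{equation*}
    U_{A_1}\otimes\cdots\otimes U_{A_n}\,\ket{\GHZ_{N,d}}=\ket{\GHZ_{n,d}}_{a_1\cdots a_n}\otimes\ket{0}^{\otimes(N-n)}_R,
\end{equation*}
which has exactly the form required by Lemma~\ref{lem:stateextraction} with $\ket{\varphi_n}=\ket{\GHZ_{n,d}}$ and $\ket{\tau}=\ket{0}^{\otimes(N-n)}$. Invoking the lemma immediately gives $F_{\LOSR}(\ket{\GHZ_{N,d}})\leq F_{\LOSR}(\ket{\GHZ_{n,d}})$, which is the desired bound.

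No substantial obstacle is anticipated; the construction is the standard disentangling recipe for qudit GHZ-like states via generalized CNOTs. The only point to verify is that these gates are genuinely set-local, which holds by construction since both control and targets lie inside the same $A_k$. Consequently, the corollary follows cleanly from Lemma~\ref{lem:stateextraction} without any additional machinery.
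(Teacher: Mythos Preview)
Your proposal is correct and follows essentially the same approach as the paper: both invoke Lemma~\ref{lem:stateextraction} after exhibiting set-local unitaries that compress $\ket{\GHZ_{N,d}}$ to $\ket{\GHZ_{n,d}}\otimes\ket{0}^{\otimes(N-n)}$. The paper chooses the specific partition $A_1=\{1\},\ldots,A_{n-1}=\{n-1\},A_n=\{n,\ldots,N\}$ and simply asserts the existence of a suitable $U_n$, whereas you work with an arbitrary partition and spell out the unitary via cascaded generalized CNOTs---a cosmetic difference only.
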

\begin{proof}
    We use Lemma~\ref{lem:stateextraction} and choose $\ket{\psi_N} = \ket{\GHZ_{N,d}}$ and $\ket{\varphi_n} = \ket{\GHZ_{N,d}}$.
    We further choose $A_1 = \{1\}, A_2=\{ 2\}, \ldots, A_{n-1} = \{n-1\}$ and $A_n = \{n, n+1, \ldots, N\}$, such that $\vert A_n \vert = N-n+1$.
    
    Finally, we set $U_{i} = \one$ for $1\leq i \leq n-1$ and $U_{n}$ such that $U_{n} \ket{i}^{\otimes(N-n+1)} = \ket{i}\otimes \ket{0}^{\otimes (N-n)}$ for $0\leq i < d$.
    This yields
    \begin{align}
        U_1\otimes \ldots \otimes U_n \ket{\GHZ_{N,d}} = \ket{\GHZ_{n,d}} \otimes \ket{0}^{\otimes(N-n)}.
    \end{align}
    The claim then follows from Lemma~\ref{lem:stateextraction}.
\end{proof}

\subsection{Bounding fidelities of graph states} \label{app:graphstates}

We now shift our attention to graph states \cite{schlingemann2001quantum, raussendorf2003measurement, hein2004multiparty}. 
This family of multipartite states has found ubiquitous applications in quantum information theory and quantum computing and includes prominent examples of states like the multipartite GHZ state, as well as cluster states, which are resource states for measurement based quantum computation \cite{raussendorf2003measurement}. 
Furthermore, each stabilizer state is locally equivalent to a graph state \cite{van2004graphical}, which makes the results of this section applicable to them as well.

Previous results have shown that no connected graph state can be prepared
in LOSR networks with arbitrary bipartite sources with fidelities exceeding $9/10$  \cite{wang2024quantum, makuta2023no}. Here, we improve this result for all graph states.
For a comprehensive introduction to the graph state formalism, see, for example, Ref.~\cite{hein2004multiparty}. Here, we just state the basic facts required for our argument.

\subsubsection{Graph state preliminaries}

Let us now consider the case of graph states, for which we start by stating the basic definitions and proving some useful Lemmas.

A graph state $\ket{G}$ of $N$ qubits is defined via a graph $G=(V,E)$ of $N$ vertices $V=\{1,\ldots,N \}$ and edges $E \subset V \times V$ between them via
\begin{align}
    \ket{G} = \prod_{(i,j) \in E} \CZ_{i,j} \ket{+}^{\otimes N},
\end{align}
where $\ket{+} = (\ket{0} + \ket{1})/\sqrt{2}$ and
\begin{align}
    \CZ_{i,j} = \begin{pmatrix}
    1 & 0 & 0 & 0\\ 0 & 1 & 0 & 0\\ 0 & 0 & 1 & 0\\ 0 & 0 & 0 & -1
    \end{pmatrix}_{i,j} \otimes \one_{\{1,\ldots,N\}\setminus \{i,j\}}
\end{align}
denotes the self-inverse controlled-$Z$ operation on qubits $i$ and $j$. Note that due to the one-to-one correspondence of graphs and graph states, we sometimes use $G$ and $\ket{G}$ interchangeably. 

Of general interest is the question of when two graph states $\ket{G}$ and $\ket{G'}$ are locally equivalent, i.e., whether there exists a (local) unitary transformation $U = U_1 \otimes U_2 \otimes \ldots \otimes U_N$, such that $U\ket{G} = \ket{G'}$. In many cases, this can be checked by considering local complementations of the associated graph $G$. Denoting the neighborhood of an vertex $v$ in $G$ by $\NN(v, G) = \{i \in V\,:\, (i,v) \in E\}$, the local complementation $\tau_v$ with respect to $v$ can be graphically interpreted as inverting every edge in the neighborhood of $v$. If two vertices in the neighborhood of $v$ are connected in $G$, this edge will be removed in $\tau_v(G)$, and vice versa. Mathematically, a local complementation $\tau_v$ w.r.t.~a vertex $v\in V$  transforms the graph $G = (V,E)$ into the graph $G' = (V,E')$, where $E' = (E \setminus \{(i,j) \,:\, i,j \in \NN(v , G) \wedge (i,j) \in E\}) \cup \{(i,j) \,:\, i,j \in \NN(v , G) \wedge (i,j) \notin E\}$.

We note here that applying the controlled-$Z$ operation $\CZ_{i,j}$ to the graph state $\ket{G}$ gives a graph state $\ket{G'}$, where $G'$ has the same vertices and edges as $G$, except for the edge between the vertices $i$ and $j$. If the edge $(i,j)$ is present in $G$, it will be absent in $G'$ and vice versa. Thus, the action of a local complementation on a graph state $\ket{G}$ can be written equivalently in terms of controlled-$Z$ operations as
\begin{equation}
    \tau_v(\ket{G}) = \prod_{k, l \in \NN(v, G), k < l} \CZ_{k,l} \ket{G}.
\end{equation}

The relevance of local complementations is then given by the following result:
\begin{theorem}[\cite{van2004efficient, hein2004multiparty}]
    If a sequence of local complementations  $\tau_{v_1}\ldots \tau_{v_k}$ transforms the graph $G$ into $G'$, then the corresponding graph states $\ket{G}$ and $\ket{G'}$ are locally equivalent.
\end{theorem}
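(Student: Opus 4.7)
The plan is to prove the statement by exhibiting, for each local complementation, an explicit local unitary that implements it on the level of graph states, and then composing these unitaries along the sequence $\tau_{v_1},\ldots,\tau_{v_k}$. Concretely, for a vertex $v$ of a graph $G$ I would define the candidate local unitary
\begin{equation*}
U_v^G \;=\; \exp\!\bigl(-i\tfrac{\pi}{4} X_v\bigr) \;\bigotimes_{w\in\NN(v,G)} \exp\!\bigl(i\tfrac{\pi}{4} Z_w\bigr) \;\bigotimes_{u\notin\{v\}\cup\NN(v,G)} \one_u,
\end{equation*}
which is manifestly of the product form $U_1\otimes\cdots\otimes U_N$. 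The claim to verify is $U_v^G \ket{G} = \ket{\tau_v(G)}$.

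To establish this single-step claim, I would work in the stabilizer formalism: $\ket{G}$ is the unique $+1$-eigenstate of the generators $g_u = X_u \prod_{w\in\NN(u,G)} Z_w$ for $u\in V$, and it suffices to show that the generators of $\ket{\tau_v(G)}$ are obtained by conjugating the generators $g_u$ by $U_v^G$. Using the standard rules $\exp(-i\pi X/4)\, Z \exp(i\pi X/4) = -Y$, $\exp(-i\pi X/4)\, Y \exp(i\pi X/4) = Z$ on the qubit $v$, and $\exp(i\pi Z/4)\, X \exp(-i\pi Z/4) = Y$, $\exp(i\pi Z/4)\, Y \exp(-i\pi Z/4) = -X$ on qubits in $\NN(v,G)$, one tracks the image of $g_v$ and of each $g_u$ with $u\in\NN(v,G)$, and checks that signs and Pauli types reorganize into exactly the stabilizer generators of $\ket{\tau_v(G)}$, i.e.\ edges among $\NN(v,G)$ are toggled while all other edges stay intact. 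Generators $g_u$ for $u\notin\{v\}\cup\NN(v,G)$ commute with $U_v^G$ trivially. This is the main bookkeeping step, but it is a finite Pauli computation.

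With the single-step statement in hand, the rest is immediate. Given a sequence $\tau_{v_1},\ldots,\tau_{v_k}$ sending $G$ to $G'$, let $G_0=G$ and $G_i=\tau_{v_i}(G_{i-1})$, so that $G_k=G'$. Define
\begin{equation*}
U \;:=\; U_{v_k}^{G_{k-1}} \, U_{v_{k-1}}^{G_{k-2}} \cdots U_{v_1}^{G_0}.
\end{equation*}
Each factor is a tensor product of single-qubit unitaries, hence the composition is itself of the local form $V_1\otimes\cdots\otimes V_N$ (products of single-qubit unitaries on each site remain single-qubit unitaries). Applying the factors from right to left and using the single-step identity iteratively yields $U\ket{G}=\ket{G'}$, establishing local equivalence.

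The main obstacle is the stabilizer calculation in the second paragraph, since one must carefully check that the induced action on the generators corresponds precisely to toggling the edges within $\NN(v,G)$ and nothing else, including correctly tracking signs arising from the $\sqrt{X}$ and $\sqrt{Z}$ conjugations. Once that verification is carried out on generators (and hence on the full stabilizer group), uniqueness of the graph-state stabilizer fixes $U_v^G\ket{G}=\ket{\tau_v(G)}$, and the composition step is formal.
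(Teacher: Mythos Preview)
Your approach is correct and is precisely the standard argument from the cited references \cite{van2004efficient,hein2004multiparty}: exhibit an explicit local Clifford unitary implementing a single local complementation at the level of stabilizer generators, then compose along the sequence. Note, however, that the paper does not supply its own proof of this theorem; it is quoted as a known result and merely invoked, so there is no proof in the paper to compare against beyond the references themselves.

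One small caveat: your stated conjugation rule $\exp(i\pi Z/4)\,X\,\exp(-i\pi Z/4)=Y$ carries the wrong sign (it is $-Y$ with this convention), so the precise phases in $U_v^G$ may need adjustment (e.g.\ flipping the sign in one of the exponents) to make the stabilizer bookkeeping close without residual signs. You already flagged sign tracking as the delicate step, and once the convention is fixed consistently the argument goes through exactly as you outlined.
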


In the next section, we make use of a special case of the commutation relation between the local complementation operations and the controlled-$Z$ operations. 
In the following, we will use the notion of removing a vertex $v$ of a graph $G$, which can be seen as deleting all edges to $v$ in the graph $G$. Using the notation \begin{align}\label{eq:def_setCZ}
    \CZ_{M,v} \coloneq  \prod_{j \in M} \CZ_{j,v}
\end{align}
for any subset $M$ of the vertices, the deletion of all edges to the vertex $v$ in the graph $G$ leads to the graph state $\CZ_{\NN(v,G),v} \ket{G}$.
Interestingly, this action commutes with local complementations on vertices unequal to $v$. Since a local complementation $\tau_j$ may only create and delete edges to $v$, it is clear that first deleting all edges to $v$ in the graph $G$ and then applying the local complementation $\tau_j$ has the same effect as deleting all edges to $v$ in the graph $\tau_j(G)$ obtained by applying the local complementation first. 
Since also all controlled-$Z$ operations commute, we arrive at the following lemma:
\begin{lemma}\label{cor:com_rel}
    For any graph state $\ket{G}$, one of its vertices $v$ and a sequence $U$ of controlled-$Z$ gates and local complementations of vertices unequal to $v$, it holds that
    \begin{align}
        U \CZ_{\NN(v,G),v} \ket{G} = \CZ_{\NN(v, UG),v} U \ket{G}.
    \end{align}
\end{lemma}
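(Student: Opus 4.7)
The plan is to prove the identity by induction on the number of elementary operations (CZ gates and local complementations on vertices $\neq v$) in the sequence $U$. The base case $U = \one$ reduces both sides to $\CZ_{\NN(v,G),v}\ket{G}$. For the inductive step, I write $U = W U'$ where $U'$ has one fewer operation. Applying the inductive hypothesis to $U'$ and the graph $G$ pushes the $\CZ_{\NN(v,\cdot),v}$ block through $U'$, after which what remains is precisely the single-operation version of the lemma for $W$ on the graph $H = U' G$ and state $\ket{H} = U'\ket{G}$. Hence it suffices to prove the lemma when $U$ is a single operation.

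For the first case, $U = \CZ_{i,j}$ with $i, j \neq v$, the argument is immediate. All CZ gates are diagonal in the computational basis and therefore mutually commute, so $\CZ_{i,j}\CZ_{\NN(v,G),v} = \CZ_{\NN(v,G),v}\CZ_{i,j}$. Moreover, toggling the edge $(i,j)$ with $i,j \neq v$ does not alter the neighborhood of $v$, giving $\NN(v, \CZ_{i,j} G) = \NN(v,G)$, which yields the identity.

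For the second case, $U = \tau_j$ with $j \neq v$, both sides are graph states, so it suffices to establish the purely graph-theoretic statement $\tau_j(G - E_v) = \tau_j(G) - E_v$, where $E_v$ denotes the set of edges incident to $v$. I split into subcases according to whether $v \in \NN(j, G)$. If $v \notin \NN(j,G)$, then $\NN(j, G - E_v) = \NN(j,G)$, and $\tau_j$ toggles the same edges on both sides, none of which touch $v$, so the two graphs agree. If $v \in \NN(j, G)$, then $\NN(j, G - E_v) = \NN(j,G) \setminus \{v\}$, and so $\tau_j$ on $G - E_v$ toggles exactly the edges within $\NN(j,G) \setminus \{v\}$, while $\tau_j$ on $G$ toggles these same edges plus the edges joining $v$ to the vertices in $\NN(j,G) \setminus \{v\}$. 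These extra toggles all touch $v$ and are therefore erased by the subsequent removal of $E_v$, and local complementation at $j$ never alters the edge $(j,v)$ itself. Thus in both subcases the surviving edges coincide.

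The only mildly delicate point is the bookkeeping in the local-complementation case: one must verify that the edges to $v$ which $\tau_j$ creates on the right-hand side are exactly those subsequently removed by the $\CZ_{\NN(v, \tau_j G),v}$ block. This is handled cleanly once one observes that local complementation at $j$ never toggles edges incident to $j$ itself, so the edge $(j,v)$ behaves passively, while all other $\tau_j$-induced alterations of edges touching $v$ are precisely those cancelled out by the final edge-removal step. Combining the two single-operation cases with the induction completes the proof.
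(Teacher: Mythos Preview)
Your proof is correct and follows essentially the same approach as the paper, which gives only a brief sketch in the two sentences preceding the lemma: the paper observes that deleting all edges incident to $v$ commutes with $\tau_j$ for $j \neq v$ (since $\tau_j$ can only create or delete edges to $v$, so isolating $v$ before or after $\tau_j$ yields the same graph) and that all controlled-$Z$ gates commute. Your induction on the length of $U$ together with the explicit subcase analysis for $\tau_j$ according to whether $v \in \NN(j,G)$ simply makes these implicit details rigorous.
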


\begin{figure}
    \centering
    \includegraphics[width=1\linewidth]{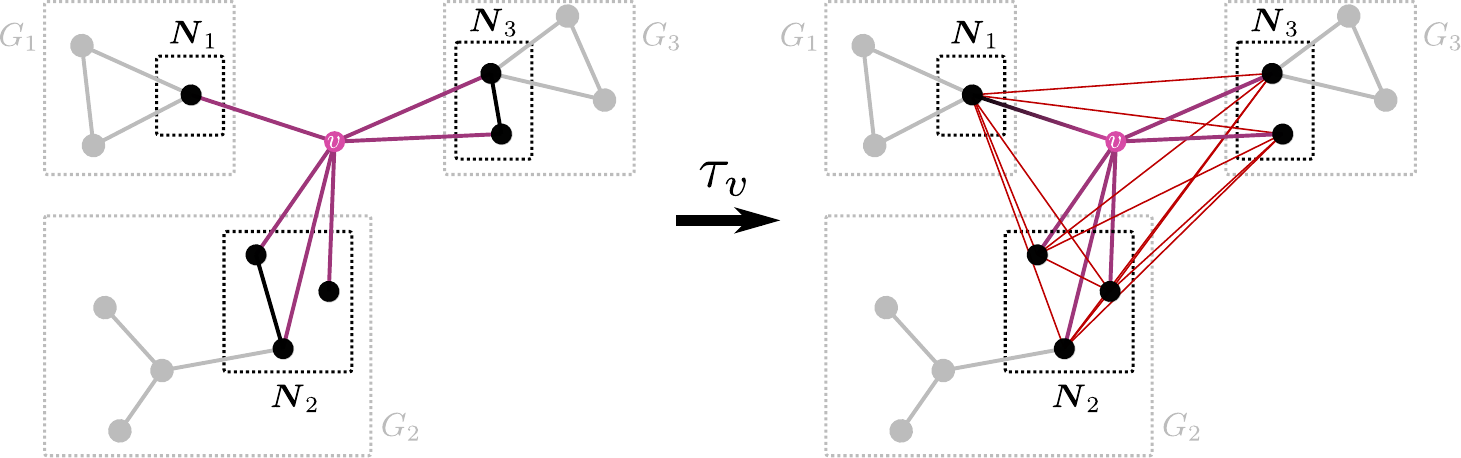}
    \caption{Left: visualization of an articulation point $v$ in a graph which connects the three components $G_1$, $G_2$ and $G_3$. The neighborhood of $v$ inside $G_i$ is denoted by $\NN_i$. Right: the same graph after a local complementation of $v$. Even though some edges within the neighborhood are removed, all vertices remain connected after $v$ is removed. Thus, $v$ is not an articulation point anymore.  }
    \label{fig:articulation}
\end{figure}

As a final ingredient, we need the following observation regarding articulation points of a graph, which are vertices such that their removal increases the number of components of the graph:
\begin{lemma} \label{lem:articulation}
    Let $G = (V,E)$ be a connected graph and $v \in V$ be an articulation point of $G$. Then $v$ is not an articulation point in the graph $\tau_v(G)$.
\end{lemma}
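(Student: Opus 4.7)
The plan is to show directly that removing $v$ from $\tau_v(G)$ leaves a connected graph, so $v$ cannot be an articulation point. Since $v$ is an articulation point of $G$, the graph $G - v$ decomposes into $k \geq 2$ connected components $G_1, \ldots, G_k$. Because $G$ itself is connected, each $G_i$ must contain at least one neighbor of $v$; I denote $\NN_i \coloneq \NN(v,G) \cap V(G_i) \neq \emptyset$. The key observation is that local complementation $\tau_v$ only toggles edges whose \emph{both} endpoints lie in $\NN(v,G)$; every other edge (in particular, edges with at least one endpoint in $V \setminus (\NN(v,G) \cup \{v\})$) is preserved.

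The heart of the argument is a two-step connectivity proof for $\tau_v(G) - v$. First, I would show that all neighbors of $v$ lie in one component of $\tau_v(G) - v$. For this, note that if $u_i \in \NN_i$ and $u_j \in \NN_j$ with $i \neq j$, then $u_i$ and $u_j$ were non-adjacent in $G$ (they live in different components of $G - v$), so in $\tau_v(G)$ the pair $\{u_i, u_j\}$ becomes an edge. Since $k \geq 2$, any two neighbors $u, u' \in \NN_i$ in the same original component can be joined via a vertex $u_j \in \NN_j$ (with $j \neq i$) by the two-edge path $u - u_j - u'$, so the entire set $\NN(v,G)$ is contained in a single component of $\tau_v(G) - v$.

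Second, I would show that every non-neighbor of $v$ is connected in $\tau_v(G) - v$ to some vertex of $\NN(v,G)$. Let $u \notin \NN(v,G) \cup \{v\}$, and suppose $u \in V(G_i)$. Since $G_i$ is connected and contains some element of $\NN_i$, pick a shortest path $u = w_0, w_1, \ldots, w_m$ in $G_i$ from $u$ to $\NN_i$; by minimality, $w_0, \ldots, w_{m-1}$ are all non-neighbors of $v$ and only $w_m \in \NN_i$. Every edge of this path thus has at least one endpoint outside $\NN(v,G)$, so is untouched by $\tau_v$ and survives in $\tau_v(G)$. Combining the two steps, every vertex of $\tau_v(G) - v$ is connected to $\NN(v,G)$, which itself is connected, so $\tau_v(G) - v$ is connected and $v$ is not an articulation point of $\tau_v(G)$.

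The main obstacle I anticipate is the subtlety in the second step: a naive path from $u$ to a chosen neighbor of $v$ in $G_i$ could pass through several vertices of $\NN_i$, traversing edges within $\NN_i$ that $\tau_v$ may delete. Restricting to the sub-path up to the \emph{first} encountered neighbor of $v$ sidesteps this precisely, because that truncated path uses only edges with at least one non-neighbor endpoint, which $\tau_v$ leaves intact. Everything else is routine graph bookkeeping.
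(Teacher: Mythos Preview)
Your proof is correct and follows essentially the same approach as the paper: both decompose $G-v$ into components $G_1,\ldots,G_k$, set $\NN_i = \NN(v,G)\cap G_i$, observe that vertices in distinct $\NN_i,\NN_j$ become adjacent under $\tau_v$, and use this to reroute any path segment whose edges within $\NN_i$ were deleted. The only difference is organizational---the paper does a direct case split on whether $x,y$ lie in the same or different $G_i$, whereas you first show $\NN(v,G)$ is connected in $\tau_v(G)-v$ and then connect every other vertex to it via a shortest path to the neighborhood; your shortest-path truncation is in fact a bit more explicit than the paper's handling of the analogous issue.
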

\begin{proof}
    We have to show that there exists a path between any pair of vertices $x\neq v,y\neq v$ in $G' \coloneq  \tau_v(G) \setminus \{v\}$.
    
    As $v$ is an articulation point, its removal splits $G$ into $k\geq 2$ components $G_1,\ldots, G_k$ which are not connected to each other. Denote the neighborhood of $v$ in $G$ as before by $\NN(v,G)$ and define $\NN_j \coloneq  G_j \cap \NN(v,G)$ for $1\leq j\leq k$ as the part of $G_j$ that is connected to $v$ (see Fig.~\ref{fig:articulation}). Note that in $G'$, for all $1\leq i < j\leq k$, all vertices in $\NN_i$ are connected to all vertices in $\NN_j$ due to the local complementation. 

    Now we consider different cases for $x$ and $y$.

    \begin{description}
        \item[Case 1 ($x \in G_i \ni y$)]  Then $x$ and $y$ are connected in $G$ via a path within $G_i$. This path might contain edges within $\NN_i$ which got removed due to the local complementation. However, since all vertices in $\NN_i$ are connected to all vertices in $\NN_j$ for any $j \neq i$, we can replace the missing edge in the path by a detour via some vertex in $\NN_j$. Thus, $x$ and $y$ are still connected.

        \item[Case 2 ($x \in G_i, y \in G_j$)] In this case, $x$ and $y$ would not be connected in $G \setminus \{v\}$, as any path from $x$ to $y$ would contain the sequence $n - v - m$ where $n\in \NN_i$ and $m \in \NN_j$, and possibly edges within $\NN_i$ and edges within $\NN_j$. However, noting that the first vertex in the path that belongs to $\NN_i$ is directly connected to the last vertex in the path that belongs to $\NN_j$ yields a path connecting $x$ and $y$ in $G'$.
    \end{description}   
\end{proof}

\subsubsection{GHZ extraction}

Here, we show that an extraction of a three-partite GHZ state from arbitrary connected graph states is always possible.
\begin{theorem}[Graph state GHZ extraction]\label{thm:ghzextraction}
    For all connected graph states $\ket{G}$ of $N\geq 3$ qubits and any choice of three of its parties $a$, $b$ and $c$, there exists a proper tripartition $A \cup B \cup C = \{1,\ldots,N\}$ with $a \in A$, $b\in B$ and $c\in C$ and set-local unitary operations $U_A$, $U_B$ and $U_C$ acting on $A$, $B$ and $C$, respectively, such that 
    \begin{align}
        U_A \otimes U_B \otimes U_C \ket{G} = \ket{\GHZ_3}_{abc} \otimes \ket{G'},
    \end{align}
    where $\ket{G'}$ is a graph state of $N-3$ vertices.
\end{theorem}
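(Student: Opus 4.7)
The plan is to prove the result by strong induction on $N$, peeling off one non-distinguished vertex at each step.

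For the base case $N = 3$, the connected graph $G$ on $\{a,b,c\}$ is either the path $P_3$ or the triangle $K_3$; in both cases the associated graph state is locally equivalent to $\ket{\GHZ_3}$ via single-qubit Clifford unitaries (a standard fact for graph states, see e.g.~\cite{hein2004multiparty}). The trivial tripartition $A=\{a\}$, $B=\{b\}$, $C=\{c\}$ with $\ket{G'}$ empty then yields the statement.

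For the inductive step $N \geq 4$, I would pick a vertex $v \in V \setminus \{a,b,c\}$ and proceed in four sub-steps. (i) If $v$ is an articulation point of $G$, first apply the local complementation $\tau_v$; by Lemma~\ref{lem:articulation}, $v$ is no longer an articulation point in $\tau_v(G)$, so $\tau_v(G) - v$ is connected. Crucially $\tau_v$ is realized by single-qubit unitaries on $v$ and its neighbors and is therefore set-local for \emph{any} tripartition. (ii) Choose the tripartition so that $v$ together with its full neighborhood $\NN(v,G)$ sits inside one group, say $A$. (iii) Within $A$, apply the internal unitary $\CZ_{\NN(v,G), v} = \prod_{w \in \NN(v,G)} \CZ_{w,v}$, which deletes every edge at $v$ and factorizes the state as $\ket{G - v}_{V\setminus\{v\}}\otimes\ket{+}_v$. (iv) Invoke the inductive hypothesis on the connected graph state $\ket{G - v}$ (on $N - 1 \geq 3$ vertices containing $a,b,c$) to produce a tripartition $A' \cup B' \cup C' = V \setminus \{v\}$ and set-local unitaries $U'_{A'}, U'_{B'}, U'_{C'}$ bringing $\ket{G - v}$ to $\ket{\GHZ_3}_{abc}\otimes\ket{G''}$. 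Adding $v$ to the appropriate primed group and composing with the $\CZ$-decoupling yields the final $U_A \otimes U_B \otimes U_C$.

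The main obstacle is step (ii): when $v$ happens to be adjacent to more than one of $\{a,b,c\}$, its neighborhood cannot possibly be contained in a single group. To handle this, I would use the commutation identity of Lemma~\ref{cor:com_rel} to defer the $\CZ$-decoupling: apply a preparatory sequence $U$ of single-qubit local complementations at vertices $u \neq v$ (and internal $\CZ$s between endpoints that already share a group) chosen so that the \emph{updated} neighborhood $\NN(v,UG)$ lies inside one group, and only then perform the decoupling, which by Lemma~\ref{cor:com_rel} commutes through $U$. The crux of the proof is therefore showing that such a preparatory $U$ always exists for connected $G$ on $N \geq 4$ vertices; I would approach this through a case analysis on $|\NN(v,G) \cap \{a,b,c\}| \in \{0,1,2,3\}$, selecting $v$ preferentially as a leaf of a spanning tree whose leaves avoid $\{a,b,c\}$ whenever possible, and in the remaining cases exploiting local complementations at the ``problematic'' neighbors of $v$ to remove unwanted edges via the known graph-state rules.
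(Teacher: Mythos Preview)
Your inductive skeleton, the base case, and the use of Lemma~\ref{lem:articulation} to guarantee connectivity of $G-v$ are all correct and match the paper. The gap is in the logic of steps (ii)--(iv).

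You apply the $\CZ$-decoupling $\CZ_{\NN(v,G),v}$ as an actual set-local operation \emph{before} invoking the inductive hypothesis, but the tripartition with respect to which it must be set-local is the one $A'\cup B'\cup C'$ produced \emph{afterwards} by induction in step (iv). The inductive hypothesis only asserts that \emph{some} tripartition of $V\setminus\{v\}$ works; you have no control over how $A',B',C'$ distribute the vertices of $\NN(v,G)$. Hence $\CZ_{w,v}$ for $w\in\NN(v,G)$ will generically fail to be set-local with respect to the final partition, regardless of whether $v$ is adjacent to any of $a,b,c$. Your diagnosis of the obstacle (``$v$ adjacent to more than one of $\{a,b,c\}$'') is therefore too narrow, and the proposed fix of a preparatory $U$ making $\NN(v,UG)$ lie in a single group inherits the same circularity: the ``group'' is only defined after induction.

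The paper sidesteps this entirely. It never tries to perform the decoupling as a set-local operation. Instead it applies the induction unitary $U=U_{A'}\otimes U_{B'}\otimes U_{C'}$ (tensored with $\one_v$) directly to $\ket{\tilde G_N}$ and uses Lemma~\ref{cor:com_rel} to obtain
\[
U\ket{\tilde G_N}=\CZ_{\NN',v}\,\ket{\triangle}_{abc}\otimes\ket{G'}\otimes\ket{+}_v,\qquad \NN'=\NN(v,U\tilde G_N).
\]
Now the key observation you are missing: every $\CZ_{w,v}$ with $w\notin\{a,b,c\}$ simply modifies the residual graph state and need not be undone at all. Only the edges from $v$ to $\{a,b,c\}$ obstruct the extraction, so the case analysis is on $\NN'\cap\{a,b,c\}$ (four cases). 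One then assigns $v$ to whichever of $A',B',C'$ makes the required corrective operations set-local; e.g.\ if $a,b\in\NN'$ and $c\notin\NN'$, put $v$ in $C'$ and apply the $C'$-local sequence $\tau_c\,\CZ_{c,v}\,\tau_c\,\CZ_{c,v}$ to detach $v$ from the triangle.
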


\begin{proof}
    It is sufficient to prove the same statement where we replace $\ket{\GHZ_3}$ by the local unitary equivalent three-partite fully connected graph state $\ket{\triangle}$. 
    We prove the theorem by induction over $N$.
    
    Let $N=3$ and $\ket{G}$ be a connected graph state of three vertices and w.l.o.g.~$a=1$, $b=2$, $c=3$. The only possible choice of sets is given by $A=\{1\}$, $B=\{2\}$, $C=\{3\}$. As $\ket{G}$ is connected, it either already equals $\ket{\triangle}$, or it is missing one edge. By local complementation of the opposite vertex, this edge is created, thereby transforming $\ket{G}$ into $\ket{\triangle}$.

    Now, let the claim be true for some fixed $N-1\geq 3$. We consider the $N$-partite graph $\ket{G_N}$ and three arbitrary vertices $\{a,b,c\}$.
    We choose an arbitrary vertex $v \notin \{a,b,c\}$, such that we can write
    \begin{align}
        \ket{G_N} = \prod_{j \in \NN(v, G_N)} \CZ_{j,v} \ket{G_{N-1}} \otimes \ket{+}_{v},
    \end{align}
    where $\ket{G_{N-1}}$ is a graph state of $N-1$ vertices, including $a$, $b$ and $c$. The goal is now to use the assumption of induction on $\ket{G_{N-1}}$, but it might be that $\ket{G_{N-1}}$ is not connected. This happens if $v$ is an articulation point of the graph $G_N$, in which case we make use of Lemma~\ref{lem:articulation} and first apply a local complementation w.r.t.~$v$. To keep track of this, we define
    \begin{align}
        \ket{\tilde{G}_N} = \begin{cases}
            \tau_v(\ket{G_N}) & \text{if $v$ is an articulation point of $G_N$} \\
            \ket{G_N} & \text{otherwise.}
        \end{cases}
    \end{align}
    and decompose
    \begin{align}\label{eq:Gtilden}
        \ket{\tilde{G}_N} = \prod_{j \in \NN(v, \tilde{G}_N)} \CZ_{j,v} \ket{\tilde{G}_{N-1}} \otimes \ket{+}_{v},
    \end{align}
    now with a certainly connected $N-1$-partite graph state $\ket{\tilde{G}_{N-1}}$.
    By assumption of induction, there exists a tripartition $A \cup B \cup C = \{1,\ldots,N-1\}$ of the vertices of $\ket{\tilde{G}_{N-1}}$ and set-local unitaries $U_A$, $U_B$ and $U_C$, such that
    \begin{align}\label{eq:UABCGeqTriGp}
        U_A \otimes U_B \otimes U_C \ket{\tilde{G}_{N-1}} = \ket{\triangle}_{abc} \otimes \ket{G'},
    \end{align}
    where $\ket{G'}$ is some graph state of $N-4$ vertices.
    Note that we construct $U_A\otimes U_B \otimes U_C$ further below such that it is not only a set-local unitary, but it consists in fact only of products of $\CZ$ gates within the sets $A$, $B$ and $C$, respectively, and local complementations. Thus, combining Eq.~\eqref{eq:UABCGeqTriGp} and Eq.~\eqref{eq:Gtilden} we find 
    \begin{align}
        U_A\otimes U_B\otimes U_C \prod_{j \in \NN(v, \tilde{G}_N)} \CZ_{j,v} \ket{\tilde{G}_N} =  \ket{\triangle}_{abc} \otimes \ket{G'} \otimes \ket{+}_{v}. 
    \end{align}
    Using the definition in Eq.~\eqref{eq:def_setCZ}, we can shorten the left-hand side of this equation and apply Lemma~\ref{cor:com_rel}:
    \begin{align}
        U_A\otimes U_B\otimes U_C \prod_{j \in \NN(v, \tilde{G}_N)} \CZ_{j,v} \ket{\tilde{G}_N} &= U_A\otimes U_B\otimes U_C \CZ_{\NN(v, \tilde{G}_N),v} \ket{\tilde{G}_N}  \\
        &= \CZ_{\NN(v, U_A\otimes U_B\otimes U_C\tilde{G}_N),v} U_A\otimes U_B\otimes U_C \ket{\tilde{G}_N} .
    \end{align}
    Applying the controlled-$Z$ operations on the right-hand side of Eq.~\eqref{eq:Gtilden} again, we get 
    \begin{align}
        U_A\otimes U_B\otimes U_C \ket{\tilde{G}_N} = \CZ_{\NN(v, U_A\otimes U_B\otimes U_C\tilde{G}_N),v} \ket{\triangle}_{abc} \otimes \ket{G'} \otimes \ket{+}_{v}. 
    \end{align}
    Now, depending on the connections of $v$ to $a$, $b$ and $c$ in $U_A\otimes U_B\otimes U_C\ket{\tilde{G}_N}$, we can decide to which of the sets $A$, $B$ or $C$ we add the vertex $v$ and how to modify $U_A$, $U_B$ and $U_C$ such that they stay set-local and disconnect $\ket{\triangle}_{abc}$ from $v$. We consider each case individually. To that end, set 
    $\NN \coloneq  \NN(v, U_A\otimes U_B\otimes U_C\tilde{G}_N)$. 
\begin{figure}
    \centering
    \includegraphics[width=0.7\linewidth]{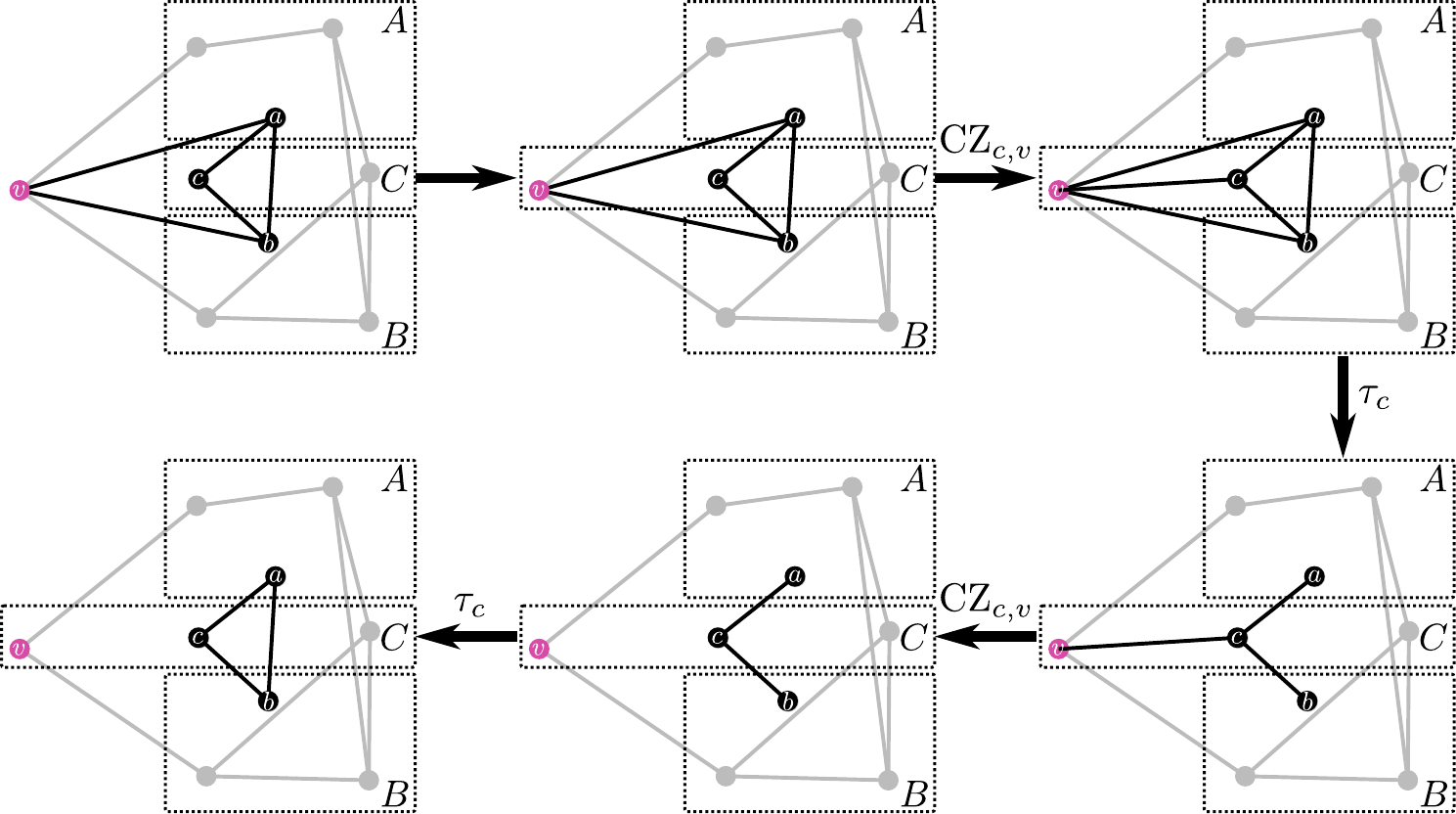}
    \caption{Visualization of case 3 of the proof: If the new vertex $v$ is connected to vertices $a$ and $b$ of the extracted triangle graph state, we add it to set $C$ and perform the $C$-local operations $\CZ_{c,v}$, $\tau_c$, $\CZ_{c,v}$, $\tau_c$ to disconnect the vertices $a,b,c$ from the new vertex $v$.}
    \label{fig:ghzextraction}
\end{figure}
    
    \begin{description}
        \item[Case 1, $a,b,c \notin \NN$]~\\
        In this case, $\CZ_{\NN,v} \ket{\triangle}_{abc} \otimes \ket{G'} \otimes \ket{+}_{v} =  \ket{\triangle}_{abc} \otimes \CZ_{\NN,v} \ket{G'} \otimes \ket{+}_{v}$, where the triangle state is disconnected from the rest. Thus, we do not have to modify the local unitary and we can add the new vertex $v$ to any of the sets $A$, $B$ or $C$ in order to obtain a proper tripartition of the vertices.
        \item[Case 2, $a\in \NN$, $b,c \notin \NN$]~\\
        In this case (and analogously in cases where $b$ or $c$ instead of $a$ are connected to $v$), we commute all controlled-$Z$ gates but $\CZ_{a,v}$ past $\ket{\triangle}_{abc}$:
        \begin{align}
             U_A\otimes U_B\otimes U_C \ket{\tilde{G}_N} = \CZ_{a,v}  \ket{\triangle}_{abc} \otimes \CZ_{\NN\setminus \{a\},v}\ket{G'} \otimes \ket{+}_{v}. 
        \end{align}
        We add the new vertex $v$ to the set $A$ such that we can absorb $\CZ_{a,v}$ into $U_A$ on the left-hand side to yield the claim.
        \item[Case 3, $a,b \in \NN$, $c\notin \NN$] In this case (and analogously in cases where $a$ or $b$ instead of $c$ are not connected to $v$), we write 
        \begin{align}
            U_A\otimes U_B\otimes U_C \ket{\tilde{G}_N} = \CZ_{a,v}\CZ_{b,v}  \ket{\triangle}_{abc} \otimes \CZ_{\NN\setminus \{a,b\},v}\ket{G'} \otimes \ket{+}_{v}. 
        \end{align}
        We add $v$ to $C$ and multiply both sides from the left by the set-local sequence $\tau_c \CZ_{c,v} \tau_c \CZ_{c,v}$ (such that $\CZ_{c,v}$ is applied first). As visualized in Fig.~\ref{fig:ghzextraction}, this sequence disconnects the triangle again from the rest. Thus,  
        \begin{align}
            \tau_c \CZ_{c,v} \tau_c \CZ_{c,v}  U_A\otimes U_B\otimes U_C \ket{\tilde{G}_N} =   \ket{\triangle}_{abc} \otimes \CZ_{\NN\setminus \{a,b\},v}\ket{G'} \otimes \ket{+}_{v}. 
        \end{align}
        
        \item[Case 4, $a,b,c\in \NN$]~\\
        In this case, add $v$ to $C$, and, analogously to case 3, we skip the first application of $\CZ_{c,v}$ and just apply $\tau_c \CZ_{c,v} \tau_c$, yielding the claim.
    \end{description}
    Finally, we absorb the potential local complementation of $v$ from the definition of $\ket{\tilde{G}_N}$ into $U_A \otimes U_B \otimes U_C$ to yield the claim.
    Therefore, in any case we are able to recover the structure of an isolated triangle graph that is locally equivalent to $\ket{\GHZ_3}$.
\end{proof}

We are now in position to put all ingredients together: Knowing from Theorem~\ref{thm:ghzextraction} that we can extract a $\ket{\GHZ_3}$-state from every connected graph state of $3$ or more parties, we can apply Lemma~\ref{lem:stateextraction}, which immediately yields the claim of Result~\ref{res:graphstatebound} from the main text:

\setcounter{thmcbak}{\thethmc}
\setcounter{thmc}{1}
\begin{theorem}
    No connected qubit graph state of $N \geq 3$ parties can be prepared in an $N$-partite LOSR network with bipartite sources with a fidelity larger than $F_{\LOSR}(\ket{\GHZ_3})$.
\end{theorem}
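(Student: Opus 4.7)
The plan is to compose the two pieces of machinery developed immediately before, namely Lemma~\ref{lem:stateextraction} on fidelity transfer under set-local unitaries and partial trace, and Theorem~\ref{thm:ghzextraction} on GHZ extraction from arbitrary connected graph states. The former says that whenever an $N$-partite target state admits a set-local unitary decomposition into $\ket{\varphi_n}_{a_1\cdots a_n}\otimes \ket{\tau}_R$, any achievable LOSR fidelity with it is upper bounded by the achievable LOSR fidelity with $\ket{\varphi_n}$ in the smaller network; the latter guarantees that precisely this decomposition exists with $n=3$ and $\ket{\varphi_3}=\ket{\GHZ_3}$ for every connected graph state on $N\geq 3$ qubits and every chosen triple of vertices.

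Concretely, I would first pick any three vertices $a$, $b$, $c$ of the given connected graph $G$ and invoke Theorem~\ref{thm:ghzextraction} to obtain a tripartition $A\cup B\cup C=\{1,\ldots,N\}$ with $a\in A$, $b\in B$, $c\in C$ and set-local unitaries $U_A$, $U_B$, $U_C$ such that
\begin{equation*}
U_A\otimes U_B\otimes U_C\,\ket{G}=\ket{\GHZ_3}_{abc}\otimes\ket{G'}
\end{equation*}
for some residual graph state $\ket{G'}$ on the remaining $N-3$ vertices. I would then apply Lemma~\ref{lem:stateextraction} with $n=3$, $\ket{\psi_N}=\ket{G}$, $\ket{\varphi_n}=\ket{\GHZ_3}$ and $\ket{\tau}=\ket{G'}$, which immediately yields $F_{\LOSR}(\ket{G})\leq F_{\LOSR}(\ket{\GHZ_3})$ and hence the claim.

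All the genuine difficulty has been absorbed into the two prerequisite results, so at this point the proof is essentially a one-line composition. The main obstacle in the development leading up to it was the inductive construction underlying Theorem~\ref{thm:ghzextraction}: when adding one vertex $v$ at a time, one has to cope with the possibility that removing $v$ would disconnect the graph, in which case Lemma~\ref{lem:articulation} is used to justify a preliminary local complementation about $v$; one then needs a case analysis on which of the target parties $a$, $b$, $c$ happen to be neighbors of $v$, together with short sequences of controlled-$Z$ gates and local complementations (as in the case-3 pattern $\tau_c\,\CZ_{c,v}\,\tau_c\,\CZ_{c,v}$) to cleanly detach the extracted triangle from the rest. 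Once those ingredients are in hand, Lemma~\ref{lem:stateextraction} does the rest and the theorem follows.
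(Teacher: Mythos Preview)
Your proposal is correct and matches the paper's own proof essentially line for line: the paper also simply composes Theorem~\ref{thm:ghzextraction} (GHZ extraction from any connected graph state) with Lemma~\ref{lem:stateextraction} (fidelity transfer to the smaller network) to obtain the bound $F_{\LOSR}(\ket{G})\leq F_{\LOSR}(\ket{\GHZ_3})$. Your additional commentary on the inductive construction behind Theorem~\ref{thm:ghzextraction} is accurate but belongs to that earlier proof rather than to this theorem.
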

\setcounter{thmc}{\thethmcbak}

Let us stress that the task of GHZ state extraction from graph states with a fixed tripartition of the vertices has been considered before \cite{bravyi2006ghz}. In fact, the number of extractable GHZ states for a fixed tripartition has been shown to be an entanglement measure \cite{linden2002almost} and is equal to the difference of the dimension of the stabilizer group $S$ of a stabilizer state (which is equal to the number of parties $N$) and the dimension of the group that is generated by the union of the local subgroups $S_{AB}$, $S_{AC}$ and $S_{BC}$. Thus, the result of Theorem~\ref{thm:ghzextraction} sheds additional light on the structure of qubit stabilizer groups:
\begin{corollary}
    For every $N$-partite qubit stabilizer group $S$ of size $2^N$ (i.e., dimension $N$) and every choice of three of its parties $a$, $b$ and $c$, there exists a tripartition $A$, $B$, $C$ of the parties with $a\in A$, $b \in B$ and $c \in C$ such that $\dim(\langle S_{AB}, S_{AC}, S_{BC}\rangle) < \dim(S) = N$. 
\end{corollary}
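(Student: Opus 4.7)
The plan is to translate the statement into the graph-state language, extract a three-partite GHZ using Theorem~\ref{thm:ghzextraction}, and then translate back to the stabilizer-group language via the formula cited in the preceding paragraph, namely that the maximal number of GHZ states extractable from a stabilizer state by set-local Cliffords for a fixed tripartition $A, B, C$ equals $N - \dim(\langle S_{AB}, S_{AC}, S_{BC}\rangle)$. Hence it suffices to exhibit, for the chosen parties $a, b, c$, a tripartition from which at least one $\ket{\GHZ_3}$ can be pulled out.

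As a first step, I would use the local-Clifford equivalence of stabilizer states and graph states \cite{van2004graphical} to pass from $S$ to the stabilizer group $\tilde{S}$ of a (connected) graph state $\ket{G}$ on the same $N$ parties. Local single-qubit Cliffords send $S_X$ isomorphically onto $\tilde{S}_X$ for every subset $X$, so $\dim(\langle S_{AB}, S_{AC}, S_{BC}\rangle)$ is unchanged under this reduction, and it suffices to work with $\ket{G}$ together with the distinguished vertices $a, b, c$.

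Next, I would apply Theorem~\ref{thm:ghzextraction} to obtain a tripartition $A, B, C$ with $a \in A$, $b \in B$, $c \in C$ and set-local unitaries $U_A, U_B, U_C$, each built from $\CZ$-gates and local complementations and therefore Clifford, such that $U_A \otimes U_B \otimes U_C \ket{G} = \ket{\GHZ_3}_{abc} \otimes \ket{G'}$ for some $(N-3)$-qubit graph state $\ket{G'}$. Writing $\tilde{S}'$ for the stabilizer of the transformed state and $A' = A \setminus \{a\}$, $B' = B \setminus \{b\}$, $C' = C \setminus \{c\}$, a direct count shows that $\tilde{S}'_{AB} = \langle Z_a Z_b \rangle \cdot (S_{G'})_{A' \cup B'}$, with analogous expressions for $\tilde{S}'_{AC}$ and $\tilde{S}'_{BC}$. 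The three pairwise $ZZ$-generators of the GHZ stabilizer satisfy one linear relation and thus contribute dimension $2$, while the $\ket{G'}$-part contributes at most $N - 3$, so $\dim(\langle \tilde{S}'_{AB}, \tilde{S}'_{AC}, \tilde{S}'_{BC}\rangle) \leq N - 1$.

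Finally, I would observe that the set-local Clifford $U_A \otimes U_B \otimes U_C$ maps $\tilde{S}_{AB}$ isomorphically onto $\tilde{S}'_{AB}$ --- an element is supported on $A \cup B$ if and only if its image is, since $U_C$ acts only on $C$ --- and the same for the other two pairs, so the dimension of the join is invariant. Chaining all three reductions gives $\dim(\langle S_{AB}, S_{AC}, S_{BC}\rangle) \leq N - 1 < N$, as required. The main conceptual point, where careful bookkeeping is needed, is this last support-invariance under set-local (rather than fully single-site) unitaries; the rest of the argument is a straightforward combination of Theorem~\ref{thm:ghzextraction} with the GHZ-yield formula and a dimension count for the residual graph state $\ket{G'}$.
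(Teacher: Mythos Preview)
Your proposal is correct and follows essentially the same route as the paper: the paper presents this corollary as an immediate consequence of Theorem~\ref{thm:ghzextraction} together with the cited formula that the number of extractable GHZ states for a fixed tripartition equals $N - \dim(\langle S_{AB}, S_{AC}, S_{BC}\rangle)$, which is exactly your first-paragraph plan. Your subsequent paragraphs go further than the paper by unpacking this formula into an explicit dimension count for $\ket{\GHZ_3}_{abc}\otimes\ket{G'}$ and verifying support-invariance under set-local Cliffords, making your argument more self-contained; note only that the reduction to a \emph{connected} graph state (needed to invoke Theorem~\ref{thm:ghzextraction}) is an implicit hypothesis here that the paper also silently assumes.
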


This, in turn, has an interesting consequence in terms of whether a stabilizer state is uniquely determined by its marginals:
\begin{corollary}
    For every $N$-partite qubit stabilizer state $\ket{\psi}$ and every choice of three of its parties $a$, $b$ and $c$, there exists a tripartition $A$, $B$, $C$ of the parties with $a\in A$, $b \in B$ and $c \in C$ such that $\ket{\psi}$ is not uniquely determined by the reduced states $\rho_{AB} = \trace_C(\ket{\psi}\!\bra{\psi})$, $\rho_{AC}$ and $\rho_{BC}$.
\end{corollary}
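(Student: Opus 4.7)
The plan is to reduce the statement to the elementary observation that $\ket{\GHZ_3}$ is not uniquely determined by its pairwise marginals, by extracting a three-qubit GHZ island from $\ket{\psi}$ using set-local unitaries. Since every qubit stabilizer state is locally Clifford equivalent to a graph state, I first write $W\ket{\psi} = \ket{G}$ with $W = W_1 \otimes \cdots \otimes W_N$ a product of single-qubit Cliffords. Applying Theorem~\ref{thm:ghzextraction} to $\ket{G}$ with the designated vertices $a, b, c$ then yields a tripartition $A \ni a$, $B \ni b$, $C \ni c$ and set-local unitaries $U_A, U_B, U_C$ satisfying
\begin{equation}
(U_A \otimes U_B \otimes U_C)\ket{G} = \ket{\GHZ_3}_{abc} \otimes \ket{G'}_{R},
\end{equation}
where $R$ collects the remaining $N-3$ parties, distributed across the three sets. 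Since $W$ acts qubit-wise it respects any tripartition, so composing with it produces a set-local unitary $V = V_A \otimes V_B \otimes V_C$ obeying $V\ket{\psi} = \ket{\GHZ_3}_{abc} \otimes \ket{G'}_{R}$.

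The key input is that the mixed state $\sigma_{abc} = \tfrac12 \ketbra{000} + \tfrac12 \ketbra{111}$ shares all three pairwise reductions of $\ketbra{\GHZ_3}$ on $ab$, $ac$, and $bc$, yet is distinct from it. I would then define
\begin{equation}
\tilde\rho = V^\dagger\bigl(\sigma_{abc}\otimes \ketbra{G'}_R\bigr)V,
\end{equation}
which is a valid $N$-qubit density operator different from $\ketbra{\psi}$ (since $V$ is unitary and $\sigma_{abc} \ne \ketbra{\GHZ_3}$). Because $V$ factorizes across the tripartition $(A,B,C)$, the partial trace over any single set commutes with the unitaries on the other two, so the identity of the $\{ab\}, \{ac\}, \{bc\}$ marginals of the tripartite factors lifts directly to equality of the $\{AB\}, \{AC\}, \{BC\}$ marginals of $\ketbra{\psi}$ and $\tilde\rho$. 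This gives exactly the claimed non-uniqueness.

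All conceptual weight is carried by Theorem~\ref{thm:ghzextraction}; the remaining steps amount to algebraic bookkeeping and I do not anticipate any genuine obstacle. The only point that needs a line of explicit justification is the compatibility of the partial traces with the set-local factorization of $V$, which is immediate once one writes the action in the computational basis. The structural insight being leveraged is simply that an extractable GHZ island inside $\ket{\psi}$ is precisely what prevents the three bipartite marginals from pinning down the global stabilizer state.
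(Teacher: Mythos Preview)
Your proof is correct. The paper's intended route is slightly different: it presents this corollary as a consequence of the \emph{preceding} one on stabilizer-group dimensions, which combines Theorem~\ref{thm:ghzextraction} with the known formula equating the number of extractable GHZ states to $N-\dim\langle S_{AB},S_{AC},S_{BC}\rangle$; non-uniqueness then follows because the three bipartite reductions of a stabilizer state fix only that subgroup. You bypass the dimension formula entirely and instead use the extracted GHZ island directly, replacing $\ketbra{\GHZ_3}$ by its dephased version $\tfrac12\ketbra{000}+\tfrac12\ketbra{111}$ to hand-build an explicit second state with identical $AB$, $AC$, $BC$ marginals. Your argument is more elementary and self-contained; the paper's buys an algebraic quantification of the residual freedom (and naturally yields another \emph{pure} stabilizer state rather than a mixed alternative). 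One caveat shared by both approaches: Theorem~\ref{thm:ghzextraction} requires a connected graph, so the corollary tacitly excludes stabilizer states that factor across a cut separating $a$, $b$, or $c$ from the rest---neither you nor the paper make this hypothesis explicit.
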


\section{Analytical upper bounds on the two-dimensional GHZ state fidelity}
\label{app:analyticalbound}

Here, we prove the upper bound of $F_{\LOSR}(\ket{\GHZ_3}) < 0.618$ from Result~\ref{res:GHZ32bound} for tripartite networks with qubit outputs, which improves upon the previously known bound of
$F_{\LOSR} \leq (1+\sqrt{3})/4 < 0.684$~\cite{navascues2020genuine}.
Note that in Refs.~\cite{navascues2020genuine} and \cite{smith2025fully}, the numerical value is incorrectly reported as $0.6803$.

We prove the statement for tripartite network states without shared randomness (i.e, $\text{LO}$ states, which are sometimes called independent triangle (ITN) states). For such states, stricter inequalities are known. Since the fidelity is linear if at least one of the states is pure, the bounds obtained for $\text{LO}$-states will then also be valid for $\LOSR$-states.

We start with an alternative formulation for the fidelity between two quantum states $\rho$ and $\sigma$ \cite{bengtsson2017geometry}:
\begin{align}
\label{eq:fidmeasurement}
    F(\rho,\sigma) = \left(\min_{\{E_k\}_k} \sum_k \sqrt{\trace{[E_k\rho]}}\sqrt{\trace{[E_k\sigma]}}\right)^2,
\end{align}
where the minimization ranges over all positive operator valued measures (POVMs), i.e., sets of positive semidefinite, hermitian operators $E_k$ with $\sum_k E_k = \one$. Thus, for any fixed choice of a POVM $\{E_k\}_k$, the fidelity is bounded by
\begin{align}
    \label{eq:fidelity_proof1}
    F(\rho, \sigma)\leq \left(\sum_{k}\sqrt{\trace{[E_k\rho]}}\sqrt{\trace{[E_k\sigma]}}\right)^2.
\end{align}
We choose $\sigma=\ketbra{\GHZ_3}$ to be the target state, $\rho$ to be preparable in a triangle network without shared randomness and the POVM to be the projective measurement in the computational basis, i.e. $\{\ketbra{ijk}\}_{i,j,k \in \{0,1\}}$. Then, inequality \eqref{eq:fidelity_proof1} reduces to
\begin{align}
\label{eq:fidelity_proof2}
    F(\rho,\sigma)\leq \frac{1}{2}\left(\sqrt{\trace{[\ket{000}\bra{000}\rho]}}+\sqrt{\trace{[\ket{111}\bra{111}\rho]}}  \right)^2.
\end{align}
We write $\ketbra{000} = (\one + Z)^{\otimes 3}/8$ and $\ketbra{111} = (\one - Z)^{\otimes 3}/8$ to rewrite the r.h.s.~of \eqref{eq:fidelity_proof2} in terms of the expectation values of the Pauli-Z  matrices w.r.t.~$\rho$, i.e.,
\begin{align}\label{eq:Fbound_116}
    F(\rho, \sigma)\leq  \frac{1}{16}\Big(\sqrt{1+3z_1+3z_2+z_3}+\sqrt{1-3z_1+3z_2-z_3}\Big)^2,
\end{align}
where $z_1\equiv (\braket{Z_A}+\braket{Z_B}+\braket{Z_C})/3$, $z_2\equiv(\braket{Z_AZ_B}+\braket{Z_AZ_C}+\braket{Z_BZ_C})/3$ and $z_3\equiv\braket{Z_AZ_BZ_C}$.

Before proving the statement, let us establish a fidelity bound involving $z_2$ only.
To that end, notice that we can express the fidelity in terms of the stabilizers of the GHZ state via
\begin{align}
    F(\rho,\sigma) = \frac18(1+3z_2 + \braket{X_AX_BX_C} - \braket{X_AY_BY_C} - \braket{Y_AX_BY_C} - \braket{Y_AY_BX_C}).
\end{align}
Furthermore, we can write the fidelity with the orthogonal state $\ket{\overline{\GHZ}_3} \coloneq  \frac1{\sqrt{2}}(\ket{000}-\ket{111})$ as 
\begin{align}
    0\leq F(\rho,\ketbra{\overline{\GHZ}_3}) = \frac18(1+3z_2 - \braket{X_AX_BX_C} + \braket{X_AY_BY_C} + \braket{Y_AX_BY_C} + \braket{Y_AY_BX_C}).
\end{align}
Thus, 
\begin{align}\label{eq:ghzpghzmbound}
F(\rho,\sigma) \leq F(\rho,\sigma) + F(\rho,\ketbra{\overline{\GHZ}_3}) = \frac14(1+3z_2).
\end{align}

Let us now consider the constituents of $z_1$. W.l.o.g., we can assume that $\braket{Z_A}\geq 0$ and $\braket{Z_B}\geq 0$, because otherwise we could consider a permutation of the state $X_AX_BX_C\rho X_AX_BX_C$, which can be prepared in the triangle network without shared randomness as well, has the same fidelity with the GHZ state and negated (and possibly permuted) local $Z$-correlations. As only $\braket{Z_C}$ might be negative, we have that $z_1\in[-1/3, 1]$. Let us consider the two cases of positive or negative $\braket{Z_C}$ individually.
\begin{description}
    \item[Case 1, $\braket{Z_C} < 0$]
        In this case, we know from Eq.~(44) in the Supplemental Material of Ref.~\cite{gisin2020constraints} that the following inequality holds for triangle network states without shared randomness:
        \begin{multline}\label{eq:gisin1improved}
            (1+\braket{Z_A}+\braket{Z_B}+\braket{Z_AZ_B})^2 + (1+\braket{Z_A}-\braket{Z_C}+\braket{Z_AZ_C})^2 + (1+\braket{Z_B}-\braket{Z_C}+\braket{Z_BZ_C})^2 \\
             \leq 6(1+\braket{Z_A})(1+\braket{Z_B})(1-\braket{Z_C})+8\braket{Z_C}(\braket{Z_A} + \braket{Z_B}+\braket{Z_A}\braket{Z_B}). 
        \end{multline}
        Furthermore, positivity demands
        \begin{align}
            &1 - \braket{Z_A} - \braket{Z_B} + \braket{Z_C} + \braket{Z_AZ_B} - \braket{Z_AZ_C} - \braket{Z_BZ_C} + \braket{Z_AZ_BZ_C} \geq 0 \\
            \Rightarrow\quad &1 - \braket{Z_A} - \braket{Z_B} + \braket{Z_C} + \braket{Z_AZ_B} - \braket{Z_AZ_C} - \braket{Z_BZ_C} + 1 \geq 0, \label{eq:ineq_positivity110}
        \end{align}
        as the first line corresponds to the non-negative probability of obtaining outcomes $1,1,0$ when measuring in the computational basis. The second line follows from bounding the tripartite correlation $\braket{Z_AZ_BZ_C}$, by its maximal value of 1.
        
        We can now consider the constrained optimization problem
        \opti{z_2^* =~}{max}{\braket{Z_i},\braket{Z_{k}Z_{l}}}{z_2}{\phantom{-}0\leq \braket{Z_A} \leq 1,; \phantom{-}0\leq \braket{Z_B} \leq 1,; -1 \leq \braket{Z_C} < 0,;\text{Eq.~\eqref{eq:gisin1improved} holds},;\text{Eq.~\eqref{eq:ineq_positivity110} holds}. }
        
        Let us simplify this optimization problem by showing that $\braket{Z_A} = \braket{Z_B}$ and $\braket{Z_AZ_C} = \braket{Z_BZ_C}$ can be assumed at the maximum. We show this by demonstrating that, given a solution to the optimization problem, symmetrizing these expectations values leads to another valid solution.
        To see this, we restrict ourselves to an optimization on the planes corresponding to constant sums $\braket{Z_A}+\braket{Z_B}$ and $\braket{Z_AZ_C}+\braket{Z_BZ_C}$ by setting $\braket{Z_B} = \bar{z}_1 - \braket{Z_A}$ and $\braket{Z_BZ_C} = \bar{z}_2 - \braket{Z_AZ_C}$ for fixed $\bar{z}_1$, $\bar{z}_2$. 
        Note that the optimization function $z_2$, as well as the left-hand side of Eq.~\eqref{eq:ineq_positivity110} are independent of the single expectation values and only depend on the constant sums $\bar{z}_1$ and $\bar{z}_2$, whereas Eq.~\eqref{eq:gisin1improved} becomes
        \begin{multline}
            (1+\bar{z}_1+\braket{Z_AZ_B})^2 + (1+\braket{Z_A} -\braket{Z_C} + \braket{Z_AZ_C})^2 + (1+\bar{z}_1 + \bar{z}_2  - \braket{Z_A} - \braket{Z_C} - \braket{Z_AZ_C})^2 \\
            - 6(1+\braket{Z_A})(1+\bar{z}_1 -  \braket{Z_A})(1-\braket{Z_C}) - 8\braket{Z_C}(\bar{z}_1 + (\bar{z}_1 - \braket{Z_A})\braket{Z_A}) \leq 0.
        \end{multline}
        While this might not look appealing, it becomes least restrictive, i.e., the left-hand side becomes minimal, for the symmetric case. To see this, consider the Hesse matrix w.r.t.~the variables $\braket{Z_A}$ and $\braket{Z_AZ_C}$, the eigenvalues of which read
        \begin{align}
            \lambda_{\pm} = 2 \left(5+\braket{Z_C} \pm \sqrt{\braket{Z_C}^2+6 \braket{Z_C}+13}\right),
        \end{align}
        which are strictly positive for $-1\leq \braket{Z_C} < 0$. Thus, the left-hand side is convex w.r.t.~these two variables. Furthermore, its partial derivatives w.r.t.~the two variables vanish at $\braket{Z_A} = \bar{z}_1/2$, $\braket{Z_AZ_C} = \bar{z}_2 / 2$, showing that the global minimum of the left-hand side is assumed for the choice of $\braket{Z_A} = \braket{Z_B}$ and $\braket{Z_AZ_C} = \braket{Z_BZ_C}$. Inserting this into the optimization problem allows us to analytically solve it using the computer algebra system Mathematica and obtain
        $z_2^* =\left(42-5 \sqrt{42}\right)/21$. Inserting this into the fidelity bound of Eq.~\eqref{eq:ghzpghzmbound} yields
        \begin{align}
            F \leq \frac{1}{28} \left(49-5 \sqrt{42}\right) \approx 0.593.
        \end{align}
        As we will see below, we will obtain a larger bound in the other case, such that we can dismiss this one.

    \item[Case 2, $\braket{Z_C} \geq 0$]
        In this case, we use two inequalities valid for triangle network states without shared randomness: First, we use Gisin's inequality from Ref.~\cite{gisin2020constraints} for the case of positive local correlators, which reads
        \begin{multline}\label{eq:gisin1}
            (1+\braket{Z_A}+\braket{Z_B}+\braket{Z_AZ_B})^2 + (1+\braket{Z_A}+\braket{Z_C}+\braket{Z_AZ_C})^2 + (1+\braket{Z_B}+\braket{Z_C}+\braket{Z_BZ_C})^2 \\
                 \leq 6(1+\braket{Z_A})(1+\braket{Z_B})(1+\braket{Z_C})
        \end{multline}
        Second, we make use of a variant of the classical Finner inequality, established in Ref.~\cite{renou1}.\footnote{Note, however, that the proof in that reference is not complete.} It states that for each local projective measurement $\{P_i\}_{i=0}^{d-1}$ and all triangle network states without shared randomness, 
        \begin{align} \label{eq:finner}
            \trace(\rho P_i\otimes P_j \otimes P_k) \leq \sqrt{\trace(\rho P_i\otimes \one \otimes \one)\trace(\rho \one \otimes P_j \otimes \one)\trace(\rho \one \otimes \one \otimes P_k) }
        \end{align}
        holds. If we choose the dichotomic measurement in the $Z$ basis, i.e., $P_0 = \ketbra{0}$, $P_1 = \ketbra{1}$, and consider $i=j=k=1$, this yields similar to the step from Eq.~\eqref{eq:fidelity_proof2} to Eq.~\eqref{eq:Fbound_116},
        \begin{align} \label{eq:finner2}
            1 - 3z_1 + 3z_2 - z_3 \leq \sqrt{8}\sqrt{(1-\braket{Z_A})(1-\braket{Z_B})(1-\braket{Z_C})}.
        \end{align}
        
        We now maximize the right-hand side of Eq.~\eqref{eq:Fbound_116}, i.e., 
        \opti{F \leq ~}{max}{\braket{Z_i},\braket{Z_{k}Z_{l}},\braket{Z_AZ_BZ_C}}{\frac{1}{16}\Big(\sqrt{1+3z_1+3z_2+z_3}+\sqrt{1-3z_1+3z_2-z_3}\Big)^2}{0\leq \braket{Z_A},\braket{Z_B},\braket{Z_C} \leq 1,;\text{Eq.~\eqref{eq:gisin1} holds},;\text{Eq.~\eqref{eq:finner2} holds}. }

        Let us argue again that in order to maximize the expression, we can assume symmetric choices $\braket{Z_A} = \braket{Z_B} = \braket{Z_C}$ as well as $\braket{Z_AZ_B} = \braket{Z_AZ_C} = \braket{Z_BZ_C}$. To that end, we consider the planes of fixed $z_1$ and $z_2$ and set $\braket{Z_C} = 3z_1 - \braket{Z_A} - \braket{Z_B}$, $\braket{Z_AZ_B} = 3z_2 -\braket{Z_AZ_C} - \braket{Z_BZ_C}$.
        Note that the objective function is independent of the single expectation values and only depends on the constant sums $z_1$ and $z_2$.
        While the left-hand side of Inequality~\eqref{eq:finner2} remains independent as well, its right-hand side is clearly maximized (and therefore least restrictive) for the symmetric choice. However, the inequality~\eqref{eq:gisin1} becomes
        \begin{multline}\label{eq:gisin2plane}
            (1 + 3 z_2+\braket{Z_A}+\braket{Z_B}-\braket{Z_AZ_C}-\braket{Z_BZ_C})^2 +(1+ 3 z_1-\braket{Z_B}+\braket{Z_AZ_C})^2 + (1+ 3 z_1-\braket{Z_A}+\braket{Z_BZ_C})^2 \\ -6 (1+ \braket{Z_A}) (1+ \braket{Z_B}) (1 + 3 z_1 -\braket{Z_A}-\braket{Z_B}) \leq 0.
        \end{multline}
        We again minimize the left-hand side of this inequality to obtain the least restrictive one. To that end, we calculate its $4\times 4$-dimensional Hesse matrix w.r.t.~the variables $\braket{Z_A}, \braket{Z_B}, \braket{Z_AZ_C}, \braket{Z_BZ_C}$. Its characteristic polynomial reads
        \begin{align}
            x^4 + a_3 x^3 + a_2 x^2 + a_1 x + a_0,
        \end{align}
        where
        \begin{align*}
            a_0 &= 432[&-9 z_1^2+6 z_1+3&&+12 z_1 (\braket{Z_A}+ \braket{Z_B})&-\phantom{0}4( \braket{Z_A}^2+\braket{Z_A} \braket{Z_B}+ \braket{Z_B}^2)&],\\
            a_1 &= 288[&9 z_1^2-6 z_1-5&&-(12 z_1 + 1) (\braket{Z_A}+ \braket{Z_B})&+\phantom{0}4 (\braket{Z_A}^2+ \braket{Z_A} \braket{Z_B}+ \braket{Z_B}^2)& ],\\
            a_2 &= \phantom{0}12[&-27 z_1^2+24 z_1+35&& + (36 z_1+8) (\braket{Z_A}+\braket{Z_B})&-12( \braket{Z_A}^2+\braket{Z_A} \braket{Z_B}+ \braket{Z_B}^2)&],\\
            a_3 &= &-40&& -12 (\braket{Z_A}+ \braket{Z_B})&&.\\
        \end{align*}
        We can trivially maximize the prefactors of $x^3$ and $x$, as well as minimize those of $x^2$ and $x^0$ over $\braket{Z_A}$, $\braket{Z_B}$ and $z_1$ under the constraints $0 \leq \braket{Z_A}, \braket{Z_B} \leq 1$ and $\braket{Z_A} + \braket{Z_B} \leq  3z_1 \leq  3$ using the computer algebra system Mathematica to show that the signs of the prefactors alternate. Thus, by Descartes' rule of signs, the Hesse matrix is positive and therefore the left-hand side of \eqref{eq:gisin2plane} is convex. Setting the first derivatives equal zero yields a single solution in the optimization region, namely $\braket{Z_A} = \braket{Z_B} = \braket{Z_C}$ and $\braket{Z_AZ_B} = \braket{Z_AZ_C} = \braket{Z_BZ_C}$, which is therefore the global minimum and thus the least restrictive choice for the optimization.

        Finally, we can thus consider the symmetrized optimization problem where each occurrence of $\braket{Z_A}$, $\braket{Z_B}$ and $\braket{Z_C}$ is replaced by $z_1$ (and likewise $\braket{Z_AZ_B} = \ldots = z_2$), yielding
        \opti{F \leq ~}{max}{z_1,z_2,z_3}{\frac{1}{16}\Big(\sqrt{1+3z_1+3z_2+z_3}+\sqrt{1-3z_1+3z_2-z_3}\Big)^2}{0\leq z_1 \leq 1,;3(1+2z_1+z_2)^2 \leq 6(1+z_1)^3;0\leq 1-3z_1+3z_2-z_3 \leq \sqrt{8}\sqrt{(1-z_1)^3}. }
        This optimization can be solved using the computer algebra system Mathematica and yields a maximum in terms of the largest real root of the polynomial
        \begin{align}
            832000 x^6 + 114944 x^5 - 
   2964416 x^4+ 4210624 x^3 - 2743640 x^2 + 889800 x - 120125,
        \end{align}
        which is approximately given by 
        \begin{align}
            F\leq 0.618,
        \end{align}
        where we rounded up the last digit.
        This proves the claim.
\end{description}

The derivation of the analytical bound in Result~\ref{res:GHZ32bound} makes use of a couple of inequalities and estimates which raises the question of how wasteful they are.  To answer this question, we resort to convex optimization. However, both the Gisin inequality in Eq.~\eqref{eq:gisin1} as well as the Finner inequality in Eq.~\eqref{eq:finner2} are not convex. Thus, we apply the following trick: Instead of optimizing the fidelity over general output states, we fix the values of $\langle Z_A \rangle \equiv z_A$ and $\langle Z_B \rangle \equiv z_B$ and treat $z_A$ and $z_B$ as parameters between $0$ and $1$. In order to get rid of the absolute values around $\langle Z_C \rangle$, we treat its sign as an additional dichotomic input parameter $c\in \{-1,1\}$. When maximizing the inequality, we then consider individually the cases of $\langle Z_C \rangle \geq 0$ (i.e., $c=1$) and $\langle Z_C \rangle < 0$ (i.e., $c=-1$) and take the maximum of both optimizations. We therefore consider the parameterized optimization problem
\begin{figure}[t]
    \centering
    \includegraphics[width=0.7\linewidth]{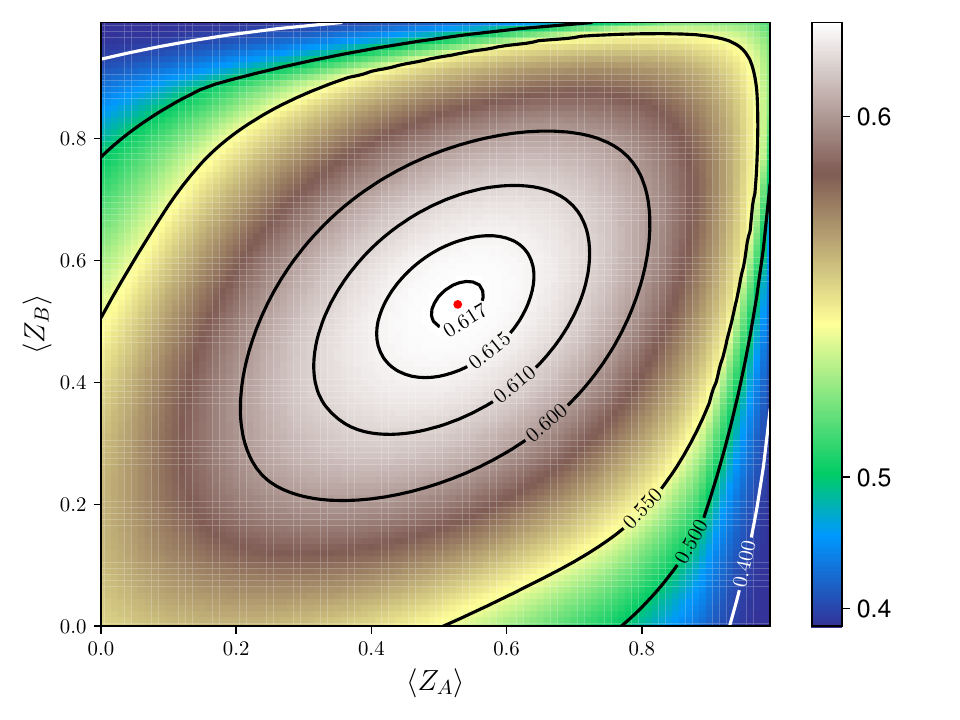}
    \caption{The maximal GHZ-fidelities in triangle LOSR networks $B(z_A,z_B)$ from Eq.~\eqref{eq:optival2} for fixed values of $z_{A} = \langle Z_{A}\rangle$ and $z_{B} = \langle Z_{B}\rangle$. The optimization takes into account Gisin's inequality \eqref{eq:gisin1} and the Finner inequality \eqref{eq:finner2}. The image is obtained by rasterization of the parameters in steps of $10^{-2}$. The global optimum (marked with a red dot) reaches the bound from Result~\ref{res:GHZ32bound}, indicating that no further improvement of the upper bound is possible if only Gisin's inequality and the Finner inequality are taken into account.}
    \label{fig:optiplot}
\end{figure}

\opti{B(z_A,z_B,c) =~}{max}{\rho}{\braket{\GHZ_3|\rho|\GHZ_3} \label{eq:optival1}}{\rho\geq 0,;\trace(\rho)=1,;\langle Z_A\rangle = z_A,;\langle Z_B\rangle = z_B,;c\langle Z_C\rangle \geq 0,;\text{Eq.~\eqref{eq:gisin1} (if $c=1$) or Eq.~\eqref{eq:gisin1improved} (if $c=-1$) holds,};\text{Eq.~\eqref{eq:finner2} holds.}}

We can turn this optimization into the form of a semidefinite program by noticing that for $c=1$, the constraint \eqref{eq:gisin1} can be cast into the form of a positivity constraint of a symmetric matrix, namely
\begin{align}
    \begin{pmatrix}
        6(1+z_A)(1+z_B)(1+\braket{Z_C}) & \ldots &  \ldots & \ldots \\
        1+z_A+z_B+\braket{Z_AZ_B} & 1 & \ldots & \ldots\\
        1+z_A+\braket{Z_C}+\braket{Z_AZ_C} & 0 & 1 & \ldots\\
        1+z_B+\braket{Z_C} + \braket{Z_BZ_C} & 0 & 0 & 1
    \end{pmatrix}\geq 0
\end{align}
(the entries filled with $\ldots$ are fixed by choosing the matrix symmetric).
Likewise, if $c=-1$, \eqref{eq:gisin1improved} can be written as positivity of the matrix
\begin{align}
    \begin{pmatrix}
        6(1+z_A)(1+z_B)(1-\braket{Z_C}) + 8\braket{Z_C}(z_A+z_B+z_Az_B) & \ldots & \ldots & \ldots \\
        1+z_A+z_B+\braket{Z_AZ_B} & 1 & \ldots & \ldots\\
        1+z_A-\braket{Z_C}+\braket{Z_AZ_C} & 0 & 1 & \ldots\\
        1+z_B-\braket{Z_C} + \braket{Z_BZ_C} & 0 & 0 & 1
    \end{pmatrix}\geq 0.
\end{align}
Finally, the Finner inequality from Eq.~\eqref{eq:finner2} can be written as
\begin{align}
    \begin{pmatrix}
        (1-z_A)(1-z_B)(1-\braket{Z_C}) & \ldots \\
        1-z_A-z_B-\braket{Z_C} + \braket{Z_AZ_B} + \braket{Z_AZ_C} + \braket{Z_BZ_C} - \braket{Z_AZ_BZ_C} & 8
    \end{pmatrix}\geq 0
\end{align}
The virtue of formulating a semidefinite program lies in the fact that we can obtain robust numerical solutions with guaranteed optimality for each choice of $z_A, z_B, c$. We plot the resulting function
\begin{align} \label{eq:optival2}
    B(z_A, z_B) := \max_{c\in\{-1,1\}} B(z_A,z_B,c)
\end{align}
in Fig.~\ref{fig:optiplot} by sampling the space of $z_A, z_B \in [0,1]$ in steps of $10^{-2}$.  
It is apparent from the figure that the maximum occurs at $z_A = z_B \approx 0.528$ and reaches indeed the fidelity of $F \approx 0.618$, implying that the result from Result~\ref{res:GHZ32bound} cannot be improved, if only Gisin's inequality and the Finner inequality are considered (we also added the Finner inequality for different choices of outcomes $i,j,k$, but it did not improve the result). Additionally, we combined our SDP method with the SDP inflation technique that was used in Ref.~\cite{navascues2020genuine}. However, it did not improve the result.

Note that the numerical result also allows us to explicitly state the putative state which reaches this fidelity, namely
\begin{align}\label{eq:optimalstate}
    \rho_{\text{opt}} = \begin{pmatrix} 0.597 & 0 &    0 &     0 &    0 &     0 &    0 &     0.262\\
  0 &     0.071 & 0 &     0 &    0 &     0 &    0 &     0 &   \\
 0 &     0 &     0.071 &0 &    0 &     0 &     0 &    0 &   \\
  0 &     0 &    0 &     0.025&  0 &     0 &    0 &     0 &   \\
 0 &    0 &    0 &     0 &     0.071& 0 &    0 &    0 &   \\
  0 &     0 &     0 &     0 &    0 &     0.025&  0 &     0 &   \\
 0 &    0 &     0 &    0 &    0 &     0 &     0.025& 0 &   \\
  0.262&  0 &    0 &     0 &    0 &     0 &    0 &     0.115
 \end{pmatrix}.
\end{align}
By design, it neither violates Gisin's inequality nor the Finner inequality, however, given the fact that its fidelity deviates from the fidelity reached by our constructions in Appendix~\ref{app:constructions} by about $0.07$, we assume that it is not preparable in an LOSR network. One would need tools beyond Gisin's inequality, the Finner inequality and the SDP inflation technique from Ref.~\cite{navascues2020genuine} to exclude it.

\section{Fidelity bounds with the GHZ state for arbitrary dimension and number of parties}
\label{app:fidbound_ghzn}

In this section we prove the statements of Results~\ref{res:GHZ3dbound} and \ref{res:multipartitebound} from the main text by deriving upper bounds for the fidelity of quantum network states with the GHZ state for any dimension and number of parties, i.e., $F_{\LOSR}(\ket{\GHZ_{N,d}})$.

First, for $N=3$, we use Eq.~\eqref{eq:fidmeasurement} to bound the fidelity of a state $\rho$ with the GHZ state in terms of joint probability distributions by choosing as POVM the projective measurements in the computational basis, i.e. $\{\ketbra{ijk}\}_{i,j,k \in \{0,d-1 \}}$, which yields
\begin{align}
    F(\rho,\ketbra{\GHZ_{3,d}})\le \frac{1}{d}\left(\sum_{i=0}^{d-1}\sqrt{\bra{iii}\rho\ket{iii}} \right)^2= \frac{1}{d}\left(\sum_{i=0}^{d-1}\sqrt{p(i,i,i)} \right)^2.
\end{align}
We make use of Eq.~\eqref{eq:finner}, the quantum variant of the classical Finner inequality for quantum network states without shared randomness \cite{renou1}. Note again that due to the fact that the fidelity with a pure state is a linear function, the derived fidelity bound will also hold for proper LOSR states with shared randomness.
Using for the Finner inequality the same measurement in the computational basis, i.e. $P_i = \ketbra{i}$, we get
\begin{align}
\label{eq:renou1}
    p(i,j,k)\le\sqrt{p_A(i)p_B(j)p_C(k)}
\end{align}
where $p(i,j,k) = \trace(\rho \ketbra{ijk})$, $p_A(i) = \trace(\rho \ketbra{i}\otimes \one \otimes \one)$, $p_B(j) = \trace(\rho \one \otimes \ketbra{j}\otimes \one)$ and $p_C(k) = \trace(\rho \one \otimes \one \otimes \ketbra{k})$ for $i$, $j$, $k \in \{0,1,\ldots,d-1\}$.

Thus, for any state $\rho$ from a triangle network without shared randomness, we have that
\begin{align}
    F(\rho,\ketbra{\GHZ_{3,d}})&\le \frac{1}{d}\left( \sum_{i=0}^{d-1}\left( p_A(i)p_B(i)p_C(i) \right)^{1/4}\right)^2\\
    &=\frac{1}{d}\left( d^{1/4}\sum_{i=0}^{d-1}\left( p_A(i)p_B(i)p_C(i)\frac{1}{d} \right)^{1/4} \right)^2\\
    &\le \frac{1}{\sqrt{d}},
\end{align}
where in the last step we used that the Matusita fidelity is upper bounded by $1$ (see, e.g.,~\cite{wilde1}). The Matusita fidelity is defined as follows: For $r\in \mathbb{N}$, let $\{p_1, p_2, \dots, p_r\}$ be a set of probability distributions on a finite set $\mathcal{X}$, then  $\mathcal{F}_r(p_1, p_2, \dots, p_r)\coloneq \sum_{x \in \mathcal{X}}(p_1(x)\dots p_r(x))^{1/r}$ is called their Matusita fidelity. Note that here we have $r=4$, and the fourth probability distribution $p_4(i) = 1/d$ is the uniform distribution on the set $\{0,\dots,d-1\}$.

Next we consider network states of four parties with bipartite source states without shared randomness. Note that the constraint in Eq.~\eqref{eq:renou1} is extendible to an arbitrary number of parties \cite{renou1}:
\begin{align}
    \label{eq:renou2}
    p(a_1,a_2, \dots a_N)\le \sqrt{p_{A_1}(a_1)p_{A_2}(a_2)\dots p_{A_N}(a_N)}.
\end{align}
Thus, 
\begin{align}
    F(\rho,\ketbra{\GHZ_{4,d}} )&\le \frac{1}{d}\left(\sum_{i=0}^{d-1}\sqrt{\bra{iiii}\rho \ket{iiii}} \right)^2= \frac{1}{d}\left(\sum_{i=0}^{d-1}\sqrt{p(i,i,i,i)} \right)^2\\
    &\le \frac{1}{d}\left(\sum_{i=0}^{d-1}(p_A(i)p_B(i)p_C(i)p_D(i))^{1/4} \right)^2\\
    &=\frac{1}{d}\mathcal{F}_4(p_A, p_B, p_C, p_D)^2\le \frac{1}{d}.
\end{align}
In the first line, we again estimate the fidelity by choosing the projective measurement in the computational basis. In the second line, we use Eq.~\eqref{eq:renou2} and in the last step we use the property that the Matusita fidelity cannot be greater than 1.
Together with Corollary~\ref{cor:ghzfidelity}, it follows that this bound also holds for quantum network states with $N\ge 4$ parties and bipartite source states, i.e.,
\begin{align}
    F(\rho,\ketbra{\GHZ_{N,d}})\le \frac{1}{d}
\end{align}
for all $N\ge 4$.
We also remark that this is a tight bound which can be achieved by the product state $\rho=\ketbra{i}^{\otimes N}$ with $i \in \{0, \dots, d-1\}$.

\end{document}